\documentclass[sigconf,nonacm=true,anonymous=false,natbib=false,balance=true,pbalance=false]{acmart}

\subtitle{Full Version}

\newcommand{\fullversionnote}{%
	\authornote{This is an extended version of~\cite{morio2026}.}%
}

\usepackage{subfiles}

\newif\iffullversion
\NewDocumentEnvironment{full}{+b}{%
	\iffullversion#1\fi
}{}
\NewDocumentEnvironment{conf}{+b}{%
	\iffullversion\else#1\fi
}{}

\newcommand{\fullversionref}{msr2pv-full}
\newcommand{\fullversionurl}{https://arxiv.org/pdf/2608.06315}
\newcommand{\fullversionlocation}[1]{%
	\href{\fullversionurl\#nameddest=\getrefbykeydefault{F-#1}{anchor}{}}{\Cref*{F-#1}}%
}
\newcommand{\fullorappendix}[1]{%
	\iffullversion
		\cref{#1}%
	\else
		the full version~\cite[\fullversionlocation{#1}]{\fullversionref}%
	\fi
}

\usepackage[english]{babel}
\usepackage[autostyle]{csquotes}
\usepackage{microtype}

\usepackage{subcaption}
\usepackage{amsthm}
\usepackage{thmtools,thm-restate} 
\usepackage{amsmath}
\usepackage{hyperref}
\usepackage{xr}
\usepackage[capitalize, noabbrev]{cleveref}
\usepackage{proof} %
\usepackage{syntax}
\usepackage[inline]{enumitem}
\usepackage{mathtools} %
\usepackage{tabularx}
\usepackage{booktabs}
\usepackage{multirow}
\usepackage{siunitx}
\usepackage{placeins}

\usepackage{tikz}
\usetikzlibrary{arrows.meta, positioning, quotes, babel}

\usepackage{listings}
\AtBeginDocument{%
  \lstset{
    numbers=none,
    frame=none,
    xleftmargin=1.5em
  }%
}

\usepackage{packages/definitions} %
\usepackage{packages/macros-basic}
\usepackage{packages/macros-dy-theory}
\usepackage{packages/macros-msr}
\usepackage{packages/macros-proverif}
\usepackage{packages/listings-proverif} %

\theoremstyle{definition}
\newtheorem{definition}{Definition}
\newtheorem{remark}{Remark}
\newtheorem{example}{Example}
\newtheorem{lemma}{Lemma}
\newtheorem{corollary}{Corollary}

\declaretheorem{theorem} %

\usepackage[
	backend=biber,
	natbib=true,
	datamodel=acmdatamodel,
	style=acmnumeric,
	maxcitenames=2,
	mincitenames=1,
	maxbibnames=999,
	url=false,
	doi=false,
	isbn=false,
]{biblatex}

\DeclareFieldFormat{titlecase}{#1}

\DeclareSourcemap{
	\maps[datatype=bibtex]{
		\map{
			\step[fieldset=editor, null]
			\step[fieldset=month, null]
			\step[fieldset=address, null]
			\step[fieldset=location, null]
			\step[fieldset=issn, null]
			\step[fieldset=eventtitle, null]
			\step[fieldset=urldate, null]
			\step[fieldset=copyright, null]
			\step[fieldset=langid, null]
			\step[fieldset=keywords, null]
			\step[fieldset=collection, null]
		}
		\map{
			\pertype{inproceedings}
			\pertype{incollection}
			\step[fieldset=series, null]
			\step[fieldset=volume, null]
			\step[fieldset=number, null]
		}
		\map{
			\step[fieldsource=booktitle, match=\regexp{[.,]\s+(Proceedings|Revised\s+Selected\s+Papers)(,\s+Part\s+\{?[IVX]+\}?)?\s*$}, replace={}]
			\step[fieldsource=booktitle, match=\regexp{,\s+Proceedings,}, replace={,}]
			\step[fieldsource=booktitle, match=\regexp{(\s\d{4}),\s.+$}, replace={$1}]
			\step[fieldsource=booktitle, match=\regexp{,\s+\d{1,2}[-\x{2013}]\d{1,2}\s+[A-Z][a-z]+\s+\d{4}\s*$}, replace={}]
			\step[fieldsource=booktitle, match=\regexp{,\s+[A-Z][a-z]+\s+\d{1,2}\s*[-\x{2013}]\s*\d{1,2},\s+\d{4}\s*$}, replace={}]
			\step[fieldsource=booktitle, match=\regexp{,\s+[A-Z][A-Za-z\s.]*,\s+(UK|USA|United\s+Kingdom|Canada|Germany|France|Sweden)\s*$}, replace={}]
		}
	}
}

\AtBeginBibliography{\setlength{\emergencystretch}{1em}}

\usepackage{orcidlink}

\graphicspath{{figures/}}

\title[A Sound Translation from Tamarin to ProVerif: Enabling Comparative Analysis]{A Sound Translation from Tamarin to ProVerif:\texorpdfstring{\\}{ } Enabling Comparative Analysis}

\providecommand{\fullversionnote}{}

\author{Kevin Morio}
\orcid{0000-0002-0220-3448}
\email{kevin.morio@cispa.de}
\affiliation{%
	\institution{CISPA Helmholtz Center for Information Security}
	\city{Saarbrücken}
	\country{Germany}
}
\fullversionnote

\author{Yavor Ivanov}
\orcid{0009-0003-0815-6421}
\email{yavorivanov.yi@gmail.com}
\affiliation{%
  \institution{CISPA Helmholtz Center for Information Security}
  \city{Saarbrücken}
  \country{Germany}
}

\author{Robert Künnemann}
\orcid{0000-0003-0822-9283}
\email{robert.kuennemann@cispa.de}
\affiliation{%
	\institution{CISPA Helmholtz Center for Information Security}
	\city{Saarbrücken}
	\country{Germany}
}

\fullversiontrue
\begin{document}
\begin{abstract}
  Protocol verification tools enable the formal modeling and automatic verification of security protocols.
Two prominent tools in this area are Tamarin and ProVerif.
While they share the same high-level goal, they differ significantly in their underlying formalisms and verification techniques, making a systematic comparison challenging.
Tamarin uses multiset rewrite rules with sound and complete verification, whereas ProVerif employs an extension of the applied\nobreakdash-$\pi$ calculus that provides fast but potentially incomplete results.

In this work, we present a sound translation from Tamarin to ProVerif that enables a rigorous comparison of the two tools.
Our translation introduces techniques for formula rewriting, encoding multiset rewrite semantics, and handling simultaneous events, showing how Tamarin's features can be expressed in ProVerif's formalism while precisely characterizing the cases where this is not possible.
The translation supports a large subset of Tamarin's features---including multiset rewrite rules, lemmas and restrictions---thereby aligning the semantics of the two formalisms.
We compare the expressiveness of Tamarin's logic fragment with that of ProVerif, identifying which properties can be faithfully translated.
Moreover, we provide formal soundness and completeness proofs: within the faithful translation fragment, soundness ensures that any property verified in ProVerif also holds in the original Tamarin model, while completeness ensures that exists-trace properties not involving attacker knowledge are preserved by the translation.
Best-effort encodings, in particular XOR, are reported separately and are outside these guarantees.

Finally, we conduct an extensive evaluation of our translation on 121 Tamarin models.
The ProVerif front end accepts executable translations for 523 of 566 lemma tasks.
Among non-XOR tasks with definitive results from both tools, 237 of 238 agree, with the remaining verdict explicitly flagged as using an incomplete model.
Among the 344 tasks for which Tamarin returns a Boolean result and ProVerif completes with a logical result, ProVerif is faster in 316 cases (\qty{91.9}{\percent}), with median per-task runtime and peak-memory ratios of $\num{6.74}\times$ and $\num{6.13}\times$, respectively.

\end{abstract}

\maketitle

\hyphenation{pro-verif}

\section{Introduction}
\label{sec:introduction}

The design and verification of security protocols are challenging.
The complexity of protocol interactions and the subtlety of attacks make manual analysis error-prone, motivating the development of automated verification tools.
Such tools enable rigorous checking of whether protocols satisfy their intended security guarantees by expressing both the protocol and desired properties in a formal language, then employing symbolic methods to either discover attacks or establish their absence.
Two prominent tools in this domain are Tamarin~\cite{meier2013} and ProVerif~\cite{blanchet2001}.

While both tools pursue the same high-level objective, their divergent input languages and verification methodologies make it difficult to combine their complementary strengths.
Tamarin models protocols using multiset rewrite rules (MSRs), whereas ProVerif employs the applied\nobreakdash-$\pi$ calculus.
ProVerif is typically faster due to an abstraction step that avoids backtracking, though this may result in spurious attacks.
Tamarin employs constraint solving, which may be less efficient and does not always guarantee automatic termination, yet it eliminates spurious attacks and provides interactive proof guidance.
These fundamental differences render switching between tools nontrivial.

In this work, we present a sound translation from Tamarin to ProVerif that enables systematic comparative analysis of these tools.
Our translation provides a common ground for directly comparing their verification capabilities, identifying the core logic fragment they share, and understanding which formalism features have practical impact.
Our soundness guarantee ensures that any property verified in ProVerif within the faithful translation fragment also holds in the original Tamarin model.
We use a Tamarin model of Lowe's fix of the Andrew Secure RPC protocol as the case study to show the intended workflow: start with a previously published stateful Tamarin model, translate it automatically, and use ProVerif as a fast backend for non-injective and injective agreement.

Nevertheless, the fundamental differences between Tamarin and ProVerif preclude a complete translation. 
Tamarin's logic fragment is strictly more expressive than ProVerif's, necessitating careful handling of property specifications. 
Where feasible, we systematically rewrite Tamarin lemmas into logically equivalent ProVerif queries that preserve the intended semantics.
Certain Tamarin built-ins, specifically \texttt{multiset} and \texttt{bilinear-pairing}, admit no translation.
Diffie-Hellman support is partial, and XOR is translated only as a best-effort approximation outside the soundness theorem.

We evaluate our approach on 121 Tamarin models, addressing the following research questions:
\begin{enumerate}
	\item[RQ1] What common core logic fragment do Tamarin and ProVerif share for security properties?
	\item[RQ2] What novel encodings enable expressing Tamarin's features in ProVerif?
	\item[RQ3] To what extent do helping lemmas from Tamarin improve verification in ProVerif?
	\item[RQ4] What are the performance trade-offs between the tools on comparable models?
\end{enumerate}

\mypara{Contributions}
This work makes the following contributions:
\begin{itemize}
	\item A sound translation from Tamarin to ProVerif incorporating novel encodings for multiset rewrite semantics and simultaneous events
	\item A characterization of the common core logic fragment shared by both tools
	\item Systematic identification of semantic gaps and their practical implications
	\item A large-scale empirical study (121 models), including the Lowe case study, demonstrating broad agreement in verified results and practical performance gains
\end{itemize}

\mypara{Outline}
We discuss related work in \cref{sec:related-work} and provide background on Tamarin and ProVerif in \cref{sec:background}.
Our translation approach is presented in \cref{sec:translation}.
The soundness argument is developed in \cref{sec:soundness-completeness}.
Implementation details are provided in \cref{sec:implementation}.
We evaluate our approach in \cref{sec:evaluation} and conclude in \cref{sec:conclusion}.

\section{Related Work}
\label{sec:related-work}

We review prior work on formal verification tools, translations between their respective formalisms, relevant external feature extensions of ProVerif, and comparative analyses of verification tools.
\Cref{fig:related-work} gives an overview of the translations between these tools and situates our contribution.

\mypara{Protocol Verification Tools}
Over the years, several approaches have been proposed to formally model and verify security protocols (see \cite{barbosaSoKComputerAidedCryptography2021} for a survey).
Among the most widely used protocol verification tools are Tamarin~\cite{meier2013} and ProVerif~\cite{blanchet2001,blanchetSecurityProtocolVerifier2022}.
DeepSec~\cite{cheval2018} is used for bounded analysis of equivalence
properties such as privacy and anonymity.

DeepSec and ProVerif both employ a dialect of the applied\nobreakdash-$\pi$
calculus~\cite{proverif1stpi}, which extends Milner's $\pi$-calculus with the ability to
send free terms, making it well-suited for modeling the `perfect
cryptography' assumption of the Dolev-Yao model and concurrent communication
between parties.
In contrast, Tamarin's multiset-rewriting calculus is very similar to
term-rewriting, with Tamarin forcing a specific structure onto terms, describing
the overall system state as a multiset of `fact'-terms.
Term rewriting has also been used directly to analyze cryptographic
protocols in Maude-NPA~\cite{escobarMaudeNPACryptographicProtocol2009}.

\begin{figure}[t]
    \centering
    \begin{tikzpicture}[
        tool node/.style={
            draw, rounded corners, fill=white, font=\sffamily\bfseries\small, 
            inner sep=4pt, outer sep=2pt, align=center
        },
        arrow style/.style={
            -{Stealth[length=2.5mm, width=2.5mm]}, 
            very thick, %
            draw=black
        },
        citation label/.style={
            font=\sffamily\small, text=black
        },
        work label/.style={
            font=\sffamily\small, text=black
        }
    ]
        \node[tool node] (sapic) {SAPiC\textsuperscript{+}};
        \node[tool node, below=1.5cm of sapic] (deepsec) {DeepSec};
        \node[tool node, right=3cm of sapic] (tamarin) {Tamarin}; 
        \node[tool node, below=1.5cm of tamarin] (proverif) {ProVerif};

        \draw[arrow style] (sapic) -- node[citation label, fill=white, inner sep=1pt] {\cite{sapicplus}} (deepsec); 

        \draw[arrow style] (sapic) -- node[citation label, fill=white, inner sep=1pt] {\cite{sapicplus}} (tamarin); 

        \draw[arrow style] (sapic) -- node[citation label, fill=white, inner sep=1pt] {\cite{sapicplus}} (proverif); 

        \draw[arrow style] (tamarin) -- node[work label, fill=white, inner sep=1pt] {This work} (proverif); 

    \end{tikzpicture}
    \caption{Overview of related work and our contribution.}
    \label{fig:related-work}
    \Description{Diagram showing translation relationships between protocol verification tools. SAPiC+ translates to DeepSec, Tamarin, and ProVerif. This work adds a new arrow from Tamarin to ProVerif.}
\end{figure}

\begin{full}
\mypara{Translation from applied\nobreakdash-$\pi$ to Tamarin}
There are two prior works translating from applied\nobreakdash-$\pi$ to MSR---the opposite of the direction we are interested in.
\textcite{sapic} provide a translation from the \emph{Stateful Applied
Pi Calculus} (SAPiC), an extension to the applied\nobreakdash-$\pi$ calculus adding state access, to MSRs supported by Tamarin.
The authors suggest thinking of MSR as a low-level language that requires strong modeling
discipline and should ideally be produced by compiling from a more
high-level language.
Nevertheless, a substantial body of Tamarin models,
e.g.,
TLS~\cite{DBLP:conf/ccs/CremersHHSM17},
WPA2~\cite{DBLP:conf/uss/CremersKM20} and
Signal~\cite{cremersFormalAnalysisSessionHandling2023},
suggests that modelers are not at all
averse to using MSRs as an input language.
\textcite{sapicplus} extend this approach to a platform,
so that a single protocol specification (in a revised version of SAPiC, SAPiC+)
translates to
three different symbolic protocol verification tools: Tamarin,
ProVerif and DeepSec.
The differences between SAPiC+ and ProVerif, both descendants of the applied\nobreakdash-$\pi$ calculus,
mainly concern the
encoding of explicit state (in ProVerif) and term algebra with
destructor symbols (in SAPiC+).
Our translation of Tamarin's security properties to ProVerif queries builds on their work.
SAPiC+ allows embedding MSRs in a process,
but thus far only the translation to Tamarin supports this feature.
We thus integrate our MSR-to-ProVerif translation to add this missing piece
to the SAPiC+-to-ProVerif translation.

\mypara{Translation from MSR to applied\nobreakdash-$\pi$}
The only work translating from MSRs to applied\nobreakdash-$\pi$ was written well before Tamarin was developed.
\textcite{relatingmsrandpa} thus focus on a theoretical connection
with little regard to the size of the model or the efficiency of
verification.
They relate first-order multiset rewriting to an (unnamed) process algebra that
borrows from the $\pi$-calculus and the Calculus of Communicating Systems (CCS),
and they show
that
trace properties like secrecy and authentication
are preserved.
For the most basic interactions, i.e., network input and output, their
translation is similar to ours; however, it is limited
to MSRs where
\begin{enumerate*}
	\item roles can be clearly identified (i.e., no setup or out-of-band state synchronization);
	\item roles are linear (no conditionals, no state machines), and
	\item roles are finite (no replication).
\end{enumerate*}
Moreover, the intruder model is
considered to be part of the protocol description.
This is in contrast to Tamarin and ProVerif, where the intruder model is part of the semantics for optimization reasons.
In this early work, protocols were still analyzed by hand.
Altogether,
these limitations facilitate the translation and soundness proof to the point
where they share very little structure with our work.

In contrast to prior work, our approach soundly translates a large fragment of Tamarin's MSRs to ProVerif's applied\nobreakdash-$\pi$ calculus and scales to a large curated corpus of Tamarin examples.

\mypara{ProVerif Extensions}
ProVerif does not support the faithful representation of algebraic properties due to the lack of an associative function symbol.
This limits the modeling of certain cryptographic primitives, such as XOR operations and Diffie-Hellman exponentiation.
\citeauthor{proverif-xor}~\cite{proverif-xor, proverif-dh} describe a reduction step from protocols using XOR and Diffie-Hellman exponentiation to protocols without them, which can then be analyzed by ProVerif.
Since this reduction is a separate protocol-dependent step requiring user intervention, it is outside the scope of our automatic translation approach.
It is, however, orthogonal to our translation from Tamarin to ProVerif and can be applied to the translated ProVerif models as a post-processing step.

\mypara{Comparative Analysis of Verification Tools}
Although numerous protocol-verification tools have been developed, systematic comparative studies remain limited. 
Early pairwise comparisons examined specific tools like Scyther and ProVerif on selected protocols \cite{Dalal2010}, while performance-based evaluations compared execution time and memory consumption across multiple tools including Tamarin and ProVerif \cite{Lafourcade2015}. 
Recent comprehensive surveys have evaluated the latest versions of major tools \cite{Abed2024}, yet these studies often face challenges due to incompatible input languages and divergent modeling assumptions. 
While translation frameworks like \sapicp{}~\cite{sapicplus} now enable cross-tool verification, systematic model-level comparisons under semantically equivalent specifications remain scarce. 
Our translation-based approach addresses this gap by enabling direct, large-scale empirical comparison between Tamarin and ProVerif on semantically equivalent models.
\end{full}
\begin{conf}
\mypara{Translations and Extensions}
In the opposite direction, \sapic{} compiles a stateful applied\nobreakdash-$\pi$ calculus to Tamarin's MSRs, and \sapicp{} targets Tamarin, ProVerif, and DeepSec~\cite{sapic,sapicplus}; our property translation builds on \sapicp{}, while our integration supplies its previously missing ProVerif support for embedded MSRs.
While \sapicp{} treats MSR as a low-level target ideally produced by compilation from a higher-level language, the large curated corpus of hand-written Tamarin models we evaluate (\cref{sec:evaluation}) shows that modelers readily use MSRs directly as an input language, motivating a translation that takes them as its source.
Earlier work directly relating first-order MSR to a process algebra preserves trace properties but assumes identifiable, linear, finite roles, places the intruder in the protocol description, and analyzes protocols by hand~\cite{relatingmsrandpa}.
Our work instead provides an automated, sound translation of the supported Tamarin fragment that scales to a large curated corpus of Tamarin examples.
Existing protocol-dependent reductions eliminate XOR or Diffie-Hellman exponentiation before ProVerif analysis~\cite{proverif-xor,proverif-dh}.
They require user intervention and are outside our automatic translation, but can be applied as orthogonal post-processing.

\mypara{Comparative Analysis of Verification Tools}
Prior studies compare selected tools or protocols~\cite{Dalal2010,Lafourcade2015,Abed2024}, but incompatible languages and modeling assumptions hinder systematic model-level comparisons.
Our translation enables a direct, large-scale comparison of Tamarin and ProVerif on semantically equivalent models.
\end{conf}

\section{Background}
\label{sec:background}

We briefly recall the main concepts of Tamarin \cite{meier2013} and ProVerif \cite{blanchet2001}.
We present the two formalisms separately and emphasize the translation links between MSRs and processes, Tamarin action facts and ProVerif events, as well as lemmas and queries.

\subsection{Tamarin}

\mypara{Terms}
In Tamarin, messages are represented by abstract terms composed of function symbols that represent cryptographic primitives.
The terms are drawn from an \emph{order-sorted term algebra}, where each term is assigned a sort.
The sorts form a partially ordered set with a top sort, \msgsort, and two incomparable subsorts: \freshsort{} and \pubsort.

The sort \freshsort{} (fresh names) models random values such as keys and nonces.
The sort \pubsort{} (public names) models known constants, such as the identities of protocol parties.
For each sort $s$, there is a countably infinite set of variables $\Vars_s$.
The union of all such sets is the set of variables $\Vars$.
Keys and nonces are constants from a countably infinite set of
names $\Names$, divided into public (attacker) names $\PN$
and secret (protocol) names $\FN$.
Given a set of variables $\Vars$, the set of terms $\Terms(\Sign,\PN,\FN,\Vars)$, short $\Terms$,
is thus constructed over
$\Names$ and $\Vars$ and applications of function
symbols in the signature $\Sign$ on terms. Let $f \in \Sign^n$ denote a function
symbol with arity $n$, alternatively written as $f/n$.
When we apply a
\emph{substitution} $\sigma$, i.e., a well-sorted partial function from $\Vars$ to $\Terms$,
to a term,
we substitute each variable $v$ in the domain of $\sigma$ by the term $\sigma(v)$.
Let $\vars(t)$ denote the set of variables occurring in a term $t$ and $\vars^+(t)$ denote the set of variables occurring in $t$ if $t$ is not a variable, i.e., $\vars^+(t) = \emptyset$ if $t \in \Vars$ and $\vars^+(t) = \vars(t)$ otherwise.

To equip terms with a meaning, Tamarin allows the user to define
a set
of equations $\ET \subseteq \Terms \times \Terms$, for instance,
$\sdec(\senc(y, x), x) = y$ to model encryption.
Tamarin then considers terms modulo the equivalence relation $=_\ET$ defined as the smallest equivalence relation that
\begin{enumerate*}[label=(\arabic*)]
  \item contains $\ET$,
  \item is closed under application of function symbols, and
  \item is closed under substitution of variables by terms.
\end{enumerate*}
For instance, $ f(\sdec(\senc(g(m),k),k)) =_\ET f(g(m))$.

\mypara{Protocol Model}
The protocol's behavior is modeled
as a system of labeled multiset rewrite rules (MSRs)
of the form
$\ru = l \msrewrite{a} r$,
where
$\prems(\ru)=l$ are the premises of the rule,
$\concls(\ru)=r$ the conclusions
and
$\acts(\ru)=a$ the actions that label the rule.
The facts in $a$ are called \emph{action facts} and annotate the trace.
These rules rewrite a multiset of
ground facts from the set $\GFacts$, where a fact
$\fact{F}(t_1, \ldots, t_k)$ of arity $k$ is ground if all $k$ terms $t_1, \ldots, t_k$ are ground and terms are ground if they contain no
variables.
There are predefined fact symbols for special purposes.
The fact that a name $n\in\FN$ is fresh (i.e., was not used before)
is marked with the unary fact symbol $\Fr$.
Terms being received or sent are represented by $\In$ and $\Out$, respectively.
\begin{example}\label{ex:msr}
  \begin{equation*}
    \mset{ \pfact{Ltk}(A, k), \In(x) } \msrewrite{\fact{Enc}(x,k)} \mset{ \Out(\senc(x, k)) }
  \end{equation*}
  This rule models that agent $A$
  receives a message from the network, encrypts it with its long-term key $k$, and transmits the ciphertext.
\end{example}

An exclamation mark in front of a fact symbol indicates that it is
\emph{persistent} and can be consumed arbitrarily often. For example,
freshness $\Fr$ is a linear fact, whereas $\pfact{Ltk}$ is a persistent
fact.
In \cref{ex:msr}, $\pfact{Ltk}(A,k)$ is therefore a persistent fact recording that $k$ is $A$'s long-term key, while the action fact $\fact{Enc}(x,k)$ labels the encryption step.

\mypara{Traces}
For a fact multiset $M$, let $\lfacts(M)$ and $\pfacts(M)$ denote its linear and persistent facts, respectively, and let $\mathit{set}(M)$ discard multiplicities.
We mark multiset inclusion, difference, and union with a $\#$ superscript: $\msubset$, $\msetminus$, and $\mcup$.
From here on, $\rules$ denotes a finite set of Tamarin MSRs, $\ru=l\msrewrite{a}r$ a single rule, $\GFacts$ the set of ground facts, $\GFacts^\#$ the multisets of ground facts, and $\ginsts_\ET(\rules)$ the set of ground instances of $\rules$ modulo an equational theory $\ET$.

We define a labeled transition relation $\rightarrow_{\ginsts_{\ET}(\rules)} \subseteq \GFacts^{\#} \times \GFacts^{\#} \times \GFacts^{\#}$
as a rewriting step between multisets of ground facts, with $S\in\GFacts^\#$ denoting the current state.
Rewriting is possible if all facts in $l$ are in $S$
and replaces all linear facts in $l$ with all facts in $r$.
Formally:
\begin{equation*}
  \infer
  {S \xrightarrow{a}_{\ginsts_\ET(\rules)} (S \msetminus \lfacts(l)) \mcup r}
  {l\msrewrite{a}r \in \ginsts_\ET(\rules) & \lfacts(l) \msubseteq S & \pfacts(l) \subseteq \mathit{set}(S)}
\end{equation*}
\begin{figure}[t]
  \begin{align}
    \mset{\Out(x)}                    &\withmsr{\hphantom{\K(x)}} \mset{!\K(x)}                    \tag{\textsc{MDOut}}   \\
    \mset{\Fr(x:\mathit{fresh})}      &\withmsr{\hphantom{\K(x)}} \mset{!\K(x:\mathit{fresh})}     \tag{\textsc{MDFresh}} \\
    \msempty{}                        &\withmsr{\hphantom{\K(x)}} \mset{!\K(x:\mathit{pub})}       \tag{\textsc{MDPub}} \\
    \mset{!\K(x_1), \ldots, !\K(x_k)} &\withmsr{\hphantom{\K(x)}} \mset{!\K(f(x_1, \ldots, x_k))}  \tag{\textsc{MDApp}} \\[-2pt]
    \mset{!\K(x)}                     &\withmsr{\K(x)} \mset{\In(x)}                    \tag{\textsc{MDIn}}
  \end{align}
  \caption{The set of message deduction rules \textsc{MD}.}
  \label{fig:message-deduction-rules}
  \Description{Five multiset rewrite rules defining Dolev-Yao attacker capabilities: MDOut captures network outputs, MDFresh and MDPub add fresh and public names to attacker knowledge, MDApp applies function symbols, and MDIn injects known terms into the network.}
\end{figure}

Tamarin combines user-defined protocol rules with the built-in message-deduction rules $\MD$ shown in~\cref{fig:message-deduction-rules}.

These rules represent a standard Dolev-Yao attacker who obtains knowledge ($!\K$) by eavesdropping on the network ($\Out$), creating fresh names ($\Fr$), or using public values.
The knowledge can be used to construct new terms by applying function symbols.
Network communication is mediated by $\Out$ and $\In$: \textsc{MDOut} turns a protocol output $\Out(x)$ into attacker knowledge $!\K(x)$, and \textsc{MDIn} lets the attacker provide any known term to a protocol rule by producing $\In(x)$. The action fact $\K(x)$ on \textsc{MDIn} records this use of attacker knowledge in the trace.
Executions and traces are accordingly defined over $\ginsts_\ET(\rules \cup \MD \cup \set{\Fresh})$, where the built-in rule $\Fresh\colon \msempty \withmsr{} \mset{\Fr(x:\mathit{fresh})}$ creates fresh names; to ease notation, we keep writing $\rules$ in the subscripts.

\begin{definition}[Executions and Traces]
  \label{def:executions}
  \label{def:traces}
  An execution of a set of rules $\rules$
  is a sequence of transitions from multisets of ground facts
  labeled by action-fact multisets starting with the empty multiset
  \begin{equation*}
    \msempty \withmsr{a_1}_{\ginsts_\ET(\rules)} S_1  \cdots \withmsr{a_n}_{\ginsts_\ET(\rules)} S_n\,,
  \end{equation*}
  such that no fresh name is chosen twice, i.e.,
  for the special fact symbol $\Fr$, we have
  $S_{i+1} \msetminus S_i = \mset{\Fr(n)}$
  and
  $S_{j+1} \msetminus S_j = \mset{\Fr(n)}$
  implies $i=j$.
  The set of executions is denoted by $\execmsr(\rules)$.
  The set of traces $\tracesmsr(\rules)$ is the projection on the action labels
  in $\execmsr(\rules)$ including empty action multisets.
  The set of weak traces $\wtracesmsr(\rules)$ is also the projection on the action labels
  in $\execmsr(\rules)$ but excluding empty action multisets.
\end{definition}

\begin{definition}[Reachable state]
  A state $\mss$ is \emph{reachable} if it can be reached from the initial state $\mss[0]$ by a chain of $\withmsr{a}_{\ginsts_\ET(\rules)}$ transitions:
  \begin{align*}
    \mss[0] \withmsr{a_1}_{\ginsts_\ET(\rules)} \ldots \withmsr{a_n}_{\ginsts_\ET(\rules)} \mss[n] = \mss\,.
  \end{align*}
\end{definition}

\begin{figure*}[t]
  \setlength{\grammarindent}{3em}
  \setlength{\grammarparsep}{4pt plus 1pt minus 1pt}
  \begin{subfigure}[b]{0.33\linewidth}
    \centering
    \begin{minipage}[t]{.5\linewidth}
      \begin{grammar}
         <$F$> ::= $I$ [ $\Leftrightarrow$ $I$ ]

         <$I$> ::= $D$ [ $\Rightarrow$ $I$ ]

         <$D$> ::= $C$ $(\lor \; C)^*$

         <$C$> ::= $N$ $(\land \; N)^*$

         <$N$> ::= $[\neg]$ $A$

         <$A$> ::= $(F)$
         \alt $[\exists \mid \forall]$ $l^+.$ $F$
         \alt $\fact{F}(t_1, \ldots, t_n)@i$
         \alt $t = t'$
         \alt $i \dotless i'$
         \alt $i \doteq i'$
        \alt $\bot$

         <$l$> ::= $i$ \alt $w$
      \end{grammar}
    \end{minipage}
    \caption{Tamarin's trace formula grammar.}
    \label{fig:tamarin-trace-grammar}
  \end{subfigure}%
  \hfill
  \begin{subfigure}[b]{0.33\textwidth}
    \centering
    \begin{minipage}[t]{.75\linewidth}
      \begin{grammar}
        <$q$> ::= $F \land \ldots \land F$
        \alt      $F \land \ldots \land F \implies H$

        <$H$> ::= $F$
        \alt      $H \land H$
        \alt      $H \lor H$
        \alt      $F \implies H$
        \alt      $C$
        \alt      $\bot$

        <$F$> ::= $\fact{F}(t_1, \ldots, t_n)@i$
        \alt      $\Patt{t}@i$

        <$C$> ::= $t \circ_e t'$  with $\circ_e \in \{ =,\neq \}$
                      \alt $i \circ_t i'$ with $\circ_t \in \{ \dotleq , \doteq , \dotless , \not \doteq\}$
      \end{grammar}
    \end{minipage}
    \caption{ProVerif's trace formula grammar.}
    \label{fig:proverif-trace-grammar}
  \end{subfigure}%
  \hfill
  \begin{subfigure}[b]{0.33\textwidth}
    \centering
    \begin{minipage}[t]{\linewidth}
      \begin{grammar}
        <$P$, $Q$> ::=\\
        $\Pnil$
        \alt $\Pout(c, t); \; P$
        \alt $\Pin (c, x); \; P$
        \alt $P \mid Q$
        \alt $!P$
        \alt $\Pnew \; n; \; P$
        \alt $\Plet \; x = t \; \Pin \; P \; \Pelse \; Q$
        \alt $\Pif \; E \; \Pthen \; P \; \Pelse \; Q$
        \alt $\event{f_e(t_1, \ldots, t_n)}; \; P$
        \alt $\Pinsert\ \mathit{tbl}(v_1, \ldots, v_n); \; P$
        \alt $\Pget\ \mathit{tbl}(p_1, \ldots, p_n)\ \Pin\ P\ \Pelse\ Q$
      \end{grammar}
    \end{minipage}
    \caption{ProVerif's process calculus grammar.}
    \label{fig:pc}
  \end{subfigure}
  \caption{Grammars for Tamarin and ProVerif. Tamarin's full trace formula logic
is shown to illustrate its expressiveness. ProVerif's grammar shows the SA-QBF
  fragment used for correspondence and reachability queries, omitting phases
  and native injective correspondences. Injective agreement is instead handled
  by the two-query SA-QBF encoding of \cref{sec:lemma-translation}.}
  \label{fig:trace-property-grammar}
  \Description{Three side-by-side grammar specifications: (a) Tamarin's trace formula grammar with nested quantifiers and temporal operators, (b) ProVerif's SA-QBF fragment with single alternation structure, and (c) ProVerif's process calculus including parallel composition, replication, and table operations.}
\end{figure*}

\mypara{Trace Properties}
Properties are expressed as \emph{lemmas} in a two-sorted first-order trace logic with quantification over messages and timepoints.
The message sort \msgsort{} ranges over terms, whereas the timepoint sort \tempsort{} ranges over timepoints.
Trace atoms have the form $\fact{F}(\vec{t})@i$, referring to Tamarin action facts at trace timepoints.
Tamarin supports universal (\emph{all-trace}) properties, which must hold on every trace, and existential (\emph{exists-trace}) properties, which require only one witness trace.
The Lowe case study in \cref{sec:translation} uses all-trace secrecy and agreement properties.
\begin{definition}[Tamarin's two-sorted first-order logic]
  \label{def:tam-logic}
  Let $i, i' \in \Vars_{temp}$ (timepoint variables), $w \in \Vars_{msg}$ (term variables), and $f \in \Sign^n$.
  Furthermore, $t_1, \ldots, t_n \in \Terms$.
  Tamarin's trace formula logic is defined by the grammar in \cref{fig:tamarin-trace-grammar}.
\end{definition}

To determine if a trace formula $\varphi$ holds,
Tamarin converts it into an equivalent \textit{guarded trace property}.
First, $\varphi$ is transformed into \textit{negation normal form} (NNF),
where negations apply only to trace atoms, and the remaining operations are $\land$, $\lor$, $\exists$, or $\forall$.
A formula in NNF is a \emph{guarded trace formula} if it has the form $\exists \vec{x}. \; (\fact{F}@i) \land \psi$ or $\forall \vec{x}. \; \neg (\fact{F}@i) \lor \psi$,
where $\psi$ is itself a guarded trace formula
and the trace atom $\fact{F}@i$, called the guard,
ensures that all quantified variables are instantiated with terms or indices that occur in the trace,
$\vec{x} \subseteq \vars(\fact{F}@i) \cap (\Vars_{msg} \cup \Vars_{temp})$.
A guarded trace formula that contains only terms in $\Vars \cup \PN$
and where all variables are quantified is called
a \emph{guarded trace property}.

\subsection{ProVerif}

\mypara{Terms}
In ProVerif, cryptographic primitives are specified as constructor symbols,
which correspond directly to Tamarin's function symbols.
ProVerif supports \emph{destructors},
which can also be used to define cryptographic operators and the relationships between them,
much like, and as an alternative to, equations in Tamarin.
We define the set of terms $\Terms$ to contain all constructor terms,
i.e., all terms without destructor symbols.
This definition aligns with the previous definition of $\Terms$ for Tamarin.
Like in Tamarin, equational theories can be defined as a
finite set of equations $\ET \subseteq \Terms \times \Terms$.
By contrast,
a destructor $\mathit{g}$ is a partial function $\Terms^n \rightarrow \Terms$ (for some arity $n \in \mathbb{N}$) that a process can apply.
The semantics are provided by an ordered list of rewrite rules $\ddef(\mathit{g})$, each of the form $\mathit{g}(t_1, \ldots , t_n) \rightarrow t$.
When the process encounters an instance of the term $\mathit{g}(t_1, \ldots, t_n)$,
it tries to rewrite it according to one of the rules in $\ddef(\mathit{g})$.
If it succeeds, it returns the term $t \in \Terms$.
If none of the rules are applicable, the destructor application fails and the process blocks.
Destructors can also be defined as private, in which case they can occur in the process but are not available to the attacker.

\begin{full}
\begin{example}
  Let the constructor $\senc/2$ model symmetric encryption.
  Decryption can be described
  with an additional constructor $\sdec/2$ and the equation
  $\sdec(\senc(m,k),k)=_\ET m$, as in Tamarin.
  Alternatively, decryption can be described with a
  \emph{destructor} $\sdec/2$ defined through the rewrite rule
  $\sdec(\senc(m,k),k) \to m$.
  If we supply an invalid key to the destructor, it will fail and block the execution,
  whereas $\sdec(\senc(m,k),k')$ with $k\neq k'$ is a valid term, albeit irreducible.
  Hence, the destructor can model an encryption scheme that detects valid ciphertexts.
\end{example}
\end{full}
\begin{conf}
\begin{example}
  Symmetric decryption can be modeled either with a constructor $\sdec/2$ and the equation $\sdec(\senc(m,k),k)=_\ET m$, as in Tamarin, or with a \emph{destructor} $\sdec/2$ given by the rewrite rule $\sdec(\senc(m,k),k) \to m$.
  The destructor blocks on an invalid key, modeling an encryption scheme that detects valid ciphertexts.
\end{example}
\end{conf}

\mypara{Protocol Model}
Both the implementation of ProVerif and its underlying calculus have gone through several iterations.
For the purpose of this work, it suffices to consider the core language as described in~\cite{BlanchetFnTPS16}.
Its syntax is shown in~\cref{fig:pc}.
For readability, we omit types in the presentation. The implementation
uses the default type $\mathtt{bitstring}$ throughout.

The empty process $\Pnil$ terminates a process definition.
In the grammar, a semicolon separates a construct from its continuation; the calculus has no general sequential composition.
The process $\Pout(c, t); \; P$ outputs $t$ on channel $c$ and continues as $P$, while $\Pin(c, x); \; P$ receives a message from $c$, binds it to $x$, and continues as $P$.
The process $P \mid Q$ runs $P$ and $Q$ in parallel, and the \textit{replication} $!P$ runs arbitrarily many copies of $P$.
The \textit{restriction} $\Pnew \; n; \; P$ creates a fresh name $n$ scoped to $P$ and then executes $P$.
The construct $\Plet \; x = t \; \Pin \; P \; \Pelse \; Q$ evaluates $t$: on success, it binds the result to $x$ and executes $P$; otherwise, it executes $Q$.
Evaluation can fail because of an incorrect destructor application.
The construct $\Pif \; E \; \Pthen \; P \; \Pelse \; Q$ similarly executes $P$ when $E$ evaluates to $\mathtt{true}$ and $Q$ otherwise.
In both constructs, an omitted $\Pelse$ branch means $Q=\Pnil$, so failed evaluation terminates execution.
The $\eventC$ construct enables the specification of trace properties.
\textit{Events} are written as $\fact{F}(t_1, \ldots, t_n)$ where $\fact{F} \in \FSign^n$ is an event symbol and $t_1, \ldots, t_n$ are terms.
They do not influence the execution of a process.
Events record that a certain point in the protocol execution was reached
with certain values, specified as the event's arguments.

Tables provide a mechanism for persistent storage in ProVerif models.
Once a value has been added to a table, it cannot be deleted or replaced.
Tables are declared with a number of columns $n$;
processes can append a row of $n$ values (with $\Pinsert$) or
retrieve an arbitrary row that matches a pattern (with $\Pget$). If there is a match, $P$ is executed, otherwise the (optional) process $Q$ is executed.
Note that the declared tables are not accessible by the attacker.

\mypara{Traces}
We define a labeled transition relation $\rightarrow_{pv} \subseteq \pvconf \times \mathcal{L} \times \pvconf$,
where $\pvconf$ denotes a process configuration and $l \in \mathcal{L}$ a label,
as a reduction relation between process configurations.
A process configuration $\pvconf = \pvconfquad$ is defined in a way similar to~\cite[Appendix F.1, Fig. 8]{sapicplus-full}, where
\begin{itemize}
  \item \E contains the free names of \Pro and names created during execution by the protocol or attacker;
  \item \Pro is a multiset of processes representing the current state of the process;
  \item \Tbl is the set of table entries $T(M_1, \ldots, M_n)$ inserted during the process execution up to this point; and
  \item \Att is a set representing the attacker knowledge.
\end{itemize}
The set \E is partitioned into the disjoint sets $\mathcal{N}_{pub}$ and $\mathcal{N}_{priv}$,
for public (attacker) names and private (protocol) names, respectively.
We transition between process configurations by applying reduction rules from~\cite[Figure~8]{sapicplus-full}.
The label $l$ is either empty ($\epsilon$), an event, or a $\msgC$ label.
Executions of $\Pout$ and $\Pnew$ add terms to the attacker knowledge.
Notably, the set \Att can be saturated by applications of the \textsc{App} rule.
This rule corresponds to the attacker obtaining terms by applying constructors and destructors to terms from its knowledge set.

\begin{definition}[Executions and Traces]
  \label{def:pv-execs-trcs}
  A ProVerif execution is a sequence of reduction rule applications
  starting with the initial process configuration $\pvconf[0] = (\E[0], \{P\}, \emptyset, \Att[0])$
  \begin{equation*}
    \pvconf[0] \withpv{l_1} \pvconf[1] \cdots \withpv{l_n} \pvconf[n].
  \end{equation*}
  The transitions in the execution are annotated with the labels of the applied rules.
  The set of executions is denoted by $\execpv(P)$.
  The set of traces $\tracespv(P)$ is the projection on the labels
  in $\execpv(P)$ including empty labels ($\epsilon$).
  The set of weak traces $\wtracespv(P)$ is also the projection on the labels
  in $\execpv(P)$ but excluding the empty labels $\epsilon$.

  In contrast to Tamarin, which annotates transitions by multisets of action facts, ProVerif uses atomic labels such as events and message labels.
  To ease the comparison of both tools, we implicitly lift ProVerif's atomic labels to singleton multisets of labels.
\end{definition}

\begin{definition}[Reachable configuration]
  A ProVerif configuration $\pvconf$ is \emph{reachable} if there is a chain of $\withpv{l}$ transitions starting at the initial configuration $\pvconf[0]$ such that
  \begin{align*}
    \pvconf[0] \withpv{l_1} \ldots \withpv{l_n} \pvconf[n] = \pvconf\,.
  \end{align*}
\end{definition}

\mypara{Trace Properties}
In ProVerif, trace properties are categorized into \textit{reachability} and \textit{correspondence} properties, both of which start with the keyword $\query$ (see~\cref{fig:proverif-trace-grammar}).
A query contains events emitted by the process or attacker predicates $\Patt{t}@i$, stating that the attacker can derive term $t$ at timepoint $i$.
Reachability properties are of the form
\begingroup
\addtolength{\belowdisplayskip}{2pt}
\begin{equation*}
  \query \; \fact{F}_1 \; \&\& \; \ldots \; \&\& \; \fact{F}_n,
\end{equation*}
\endgroup
where $\fact{F}_i$ are events.
They ensure the existence of at least one trace in which all the events $\fact{F}_1, \ldots, \fact{F}_n$ occur and are hence comparable to Tamarin's exists-trace lemmas.
Correspondence properties are of the form $\query \; \fact{F}_1 \; \&\& \; \ldots \; \&\& \; \fact{F}_n \longrightarrow H$,
where $\fact{F}_1, \ldots, \fact{F}_n$ are events and the \textit{conclusion} $H$ is a quantifier-free trace formula over trace atoms.
They are universally quantified over all traces and are comparable to Tamarin's all-trace lemmas.
Thus, a correspondence query reads: whenever the premise events $\fact{F}_1,\ldots,\fact{F}_n$ occur together, the conclusion $H$ must also be justified in the same trace.
The disjunctive normal form of $H$ must not contain negated events.

Internally, ProVerif translates reachability queries into correspondence queries with the conclusion $\bot$, so all queries can be treated as implications.
In contrast to Tamarin, ProVerif only allows a single quantifier alternation.
The variables occurring in the premise are universally quantified,
while the variables occurring in the conclusion but not in the premise are existentially quantified.
This convention is the main syntactic difference the lemma translation must manage: Tamarin exposes quantifiers explicitly, while ProVerif infers them from where variables occur in the query.

\begin{definition}[SA-QBF]
  \label{def:sa-qbf}
  \Cref{fig:proverif-trace-grammar} presents the syntax
  of ProVerif's logic fragment we target, which we call
  \emph{SA-QBF} (single-alternation quantified Boolean formulas).
  This fragment captures ProVerif's correspondence and reachability queries,
  excluding phases and native injective correspondences.
  The translation of injective agreement remains in this fragment by decomposing
  the source property into correspondence and uniqueness queries
  (\cref{sec:lemma-translation}).
  When a timepoint inequality between $i$ and $j$ occurs in the conclusion of a formula, where $i$ and $j$ are the time variables associated with facts $\fact{F}$ and $\fact{G}$, respectively, the following two conditions must hold:
  \begin{enumerate*}[label=(\arabic*)]
    \item $\fact{F}@i$ occurs in the premise or $\fact{F}$ is an event;
    \item $\fact{G}@j$ occurs in the conclusion or $\fact{G}$ is an event.
  \end{enumerate*}
\end{definition}

\begin{remark}
  \label{rem:sa-qbf-restrictions}
  ProVerif queries must conform to SA-QBF as presented in \cref{fig:proverif-trace-grammar}.
  In particular:
  \begin{enumerate*}[label=(\arabic*)]
    \item reachability queries do not support disjunctions,
    \item correspondence queries do not support conjunctions at the top level, and
    \item negation is not supported in queries.
  \end{enumerate*}
  These restrictions necessitate query rewriting and splitting for certain
  Tamarin lemmas (see \cref{sec:lemma-translation}).
  ProVerif also supports phases and native injective correspondences beyond
  SA-QBF; neither is needed for the exact encodings used here.
\end{remark}

\section{Translation}
\label{sec:translation}

Our translation serves two purposes:
\begin{enumerate*}[label=(\arabic*)]
  \item enabling practical use of ProVerif on Tamarin models, and
	\item providing a rigorous foundation for comparative analysis.
\end{enumerate*}
The translation challenges we address reveal fundamental differences between the formalisms.
We introduce encoding techniques that bridge these gaps where possible and precisely characterize the remaining limitations.

The translation is composed of three main parts:
  process generation,
  property translation, and
  bridging semantic gaps between the two tools.
\Cref{tab:syntax-map} summarizes the main syntactic correspondences. The remainder of this section expands these correspondences and explains the cases where they require additional encoding.

\subsection{Rule Translation}
\label{sec:rule-translation}

Given a set of MSRs $\rules = \{\ru_1,\ldots,\ru_n\}$, we translate each rule to a subprocess in ProVerif.
These subprocesses run in parallel and are replicated to allow multiple applications.
The translated process of a rule $\ru$ is denoted by $\tran{\ru}$.
At a high level, premises of a Tamarin rule become ProVerif inputs or table lookups, action facts become ProVerif events, and conclusions become outputs or table insertions.

\begin{table}[t]
  \caption{Correspondence between Tamarin and ProVerif constructs. The channel $c$ is a global public channel in ProVerif.}
  \label{tab:syntax-map}
  \small
  \renewcommand{\arraystretch}{1.1}
  \begin{tabularx}{\linewidth}{>{\raggedright\arraybackslash}X>{\raggedright\arraybackslash}X}
    \toprule
    \textbf{Tamarin source construct} & \textbf{ProVerif target construct} \\
    \midrule
    Storage fact $\fact{F}(\vec{t})$ & Table entry $\fact{F}_{tbl}(\vec{t})$ \\
    Premise fact $\fact{F}(\vec{t})$ & $\Pget\ \fact{F}_{tbl}(\vec{t})$ \\
    Conclusion fact $\fact{F}(\vec{t})$ & $\Pinsert\ \fact{F}_{tbl}(\vec{t})$ \\
    $\In(t)$ & $\Pin(c,t)$ \\
    $\Out(t)$ & $\Pout(c,t)$ \\
    $\Fr(n)$ & $\Pnew\ n$ \\
    Action fact $\fact{F}(\vec{t})$ & Event $\event{\fact{F}(\vec{t})}$ \\
    All-trace lemma & Correspondence query \\
    Exists-trace lemma & Reachability query \\
    \bottomrule
  \end{tabularx}
\end{table}

We then translate $\rules$ to the process
\begin{equation*}
	\tran{\rules} \defas{} !\mkern2mu\tran{\ru_1} {}\mid{} \cdots {}\mid{} !\mkern2mu\tran{\ru_n}\,.
\end{equation*}
The replication of each subprocess is an over-approximation, but we do not lose precision, since ProVerif's Horn clause abstraction acts as an implicit top-level replication, which can be pushed down the parallel composition.

Each subprocess emulates Tamarin MSR semantics using ProVerif's table construct.
For each fact symbol $\factsymbol{F}$ with arity $n$ in Tamarin, we create a corresponding ProVerif table $\factsymbol{F}_{tbl}$ with the same arity (the multiset encoding of \cref{sec:bridging-gaps} adds one further identifier column).
Modifications of the protocol state in Tamarin through facts are then emulated by inserting and looking up entries in these tables.
Since tables in ProVerif are monotonic, this is an over-approximation.
We discuss in \cref{sec:bridging-gaps} how we overcome this limitation.

The special fact symbols $\In$, $\Out$ and $\Fr$, whose semantics are defined through the built-in message deduction rules \textsc{MD} in \cref{fig:message-deduction-rules}, are treated differently.
These facts have direct counterparts in ProVerif: $\Pin$, $\Pout$ and $\Pnew$.

Tamarin supports pattern matching on facts in a rule's premise
\begin{equation*}
  \msrulearrowempty{\factsymbol{F}(h(x))}{}{\Out(x)}\,.
\end{equation*}
In contrast, ProVerif permits only tuple patterns and equality tests in inputs ($\Pin$) and table lookups ($\Pget$), but no matching under other constructors such as $h(x)$.
To emulate this behavior, we introduce private auxiliary destructors that extract variables from patterns.
These destructors are translation helpers used only inside the generated process; they are private, not attacker operations, and not additional cryptographic capabilities.
If exposed to the attacker, they would otherwise allow them, for example, to look inside encrypted messages.

We create such destructors for each variable $v$ and top-level function application $t$ occurring in a pattern in the premise of a rule.
Let $g_{v, t}$ denote the destructor extracting $v$ from $t$ with the rewriting rule $g_{v,t}(t) \rightarrow v$.
For the above example, we introduce the destructor $g_{x,h(x)}$ and obtain
\begin{align*}
 \Pget\ &\fact{F}_{tbl}(\aux_{h(x)})\ \Pin \\
       &\Plet\ x = g_{x,h(x)}(\aux_{h(x)})\ \Pin \\
       &\Pout(c, x)
\end{align*}
for an auxiliary variable $\aux_{h(x)}$ not occurring elsewhere.

To formally define the translation, we introduce a sequencing
operator that applies a process expression to each element of an ordered multiset, using a fixed lexicographical order for names, terms, and facts.

\begin{definition}[Sequencing Operator]
	Let $X$ be an ordered multiset, $e \colon X \to \mathit{Process}$ a process expression, and $P$ a follow-up process.
	The sequencing operator $\seq$ is defined as
	\begin{equation*}
		\smashseq{x \melem X}\; e(x); P \defas
		\begin{cases}
			P                                                         & \text{if $X = \emptyset$} \\
			e(y); \smashseq{x \melem X\msetminus \mset{y}}\; e(x); P & \text{if $y = \min X$}
		\end{cases}
	\end{equation*}
\end{definition}
For example, for the names $a,b,c\in\Names$, we have
\begin{equation*}
  \smashseq{x \melem \{a,b,c\}}\; \Pnew~x;\Pnil = \Pnew~a;\Pnew~b; \Pnew~c; \Pnil
\end{equation*}

Next, let $\aux(t)$ be $t$ if $t \in \Vars$ and a fresh auxiliary variable $\aux_{t}$ otherwise.
We can define the translation of \Storage facts (all facts except $\In$, $\Out$, and $\Fr$) and input facts that may contain patterns.

\begin{definition}[Translation of \Storage facts]
	\begin{align*}
		\tran{ \factsymbol{F}(t_1,\ldots,t_n)}; P := \Pget{}\ &\fact{F}_{tbl}(\aux(t_1),\ldots,\aux(t_n))~\Pin \\
		\smashseq{1 \leq i \leq n,\, v \in \vars^+(t_i)}\; &\Plet~v = g_{v,t_i}(\aux_{t_i})\ \Pin \\
		&P
	\end{align*}
\end{definition}

For input and output facts, let $c$ be the single public channel used for all communication to emulate Tamarin's global network model.
\begin{definition}[Translation of input facts]
	\begin{align*}
		\tran{\In(t)}; P {}:={} &\Pin~(c, \aux(t)); \\
		\smashseq{v \in \vars^+(t)}\; &\Plet~v = g_{v,t}(\aux(t))\ \Pin \\
		&P
	\end{align*}
\end{definition}

Observe that the $\Pget$ and $\Pin$ processes have no $\Pelse$ branches. This emulates the behavior of a rule being applicable only if all facts of its premise satisfy the patterns and are present in the current protocol state.
ProVerif denotes variables that are matched instead of bound by prepending the variable name with `='.
This becomes relevant when a variable occurs both in a pattern and directly in the fact.
In this case, we prefix it with `=' in the $\Pget$ or $\Pin$ to indicate that it is matched, as shown in \cref{fig:translation-ex}.

The two remaining parts of the translation are action facts and the handling of public variables.
Action facts are directly translated to ProVerif events.
Tamarin allows public variables (variables of sort $\pubsort$) in action facts and $\Out$ facts that are not bound in the premise of a rule, which ProVerif does not support.
We define a generic function $\pubv$ that returns the set of public variables in a term or fact as a term list.
We prepend $\Pin(c,\pubv(\cdot))$ to action facts and $\Out$ facts such that the ProVerif attacker can instantiate the public variables.
In Tamarin, such variables range over public names, which are always deducible; the added inputs thus make every Tamarin instantiation of these variables available in the translated process.
Further attacker-chosen instantiations only enlarge the trace set, consistent with the translation's over-approximating design.
Each input is placed immediately before the event, insertion, or output whose public variables it instantiates.

\begin{definition}[Rule Translation]
	\label{def:tran-rule}
	A rule $\ru = l \msrewrite{a} r$
	is translated to the following process.
	\begin{align*}
	    \tran{\ru} \coloneq{} & \smashseq{\fact{F} \melem l|_{\Storage, \In}}\; \tran{\fact{F}}; \\
	                           & \smashseq{\Fr(n) \melem l}\; \Pnew~n; \\
	                           & \smashseq{e \melem a}\; (\Pin(c,\pubv(e)); \Pevent~e); \\
	                           & \smashseq{\fact{F} \melem r|_{\Storage}}\; (\Pin(c,\pubv(\fact{F})); \Pinsert~\fact{F}); \\
	                           & \smashseq{\Out(t) \melem r}\; (\Pin(c,\pubv(t)); \Pout(c,t)); \\
	                           & \Pnil
	\end{align*}
\end{definition}

The translated process first checks for the presence of all facts in the premise of the rule.
Then, it draws all fresh names specified by $\Fr$ facts.
Next, it translates each action fact by first inputting the public variables and then raising the corresponding ProVerif event.
Finally, it applies all conclusions by inserting the corresponding table entries and performing all outputs, again inputting public variables first.
This order ensures that ProVerif emits events only after the premises are met and never produces an output or table insertion before the corresponding action events.

\newsavebox{\tamarinbox}
\newsavebox{\proverifbox}

\begin{lrbox}{\tamarinbox}
  \begin{lstlisting}[language=spthy,basicstyle=\small\ttfamily,linewidth=\linewidth]
rule A_2:
  [ !Key($A,$B,~sk),
    StateASend($A,$B,~sk,~na),
    In(senc(<'2',~na,kabp,$B>,~sk)) ]
--[ Secret($A,$B,kabp),
    Commit_A($A,$B,<'A','B',~sk,kabp>) ]->
  [ Out(senc(<'3',~na>,kabp)),
    StateAReceive($A,$B,~sk,~na,kabp) ]
  \end{lstlisting}
\end{lrbox}

\begin{lrbox}{\proverifbox}
  \begin{lstlisting}[language=proverif,basicstyle=\scriptsize\ttfamily,linewidth=0.85\linewidth]
let rA_2 =
  get tKey(A, B, sk) in
  get tStateASend(st, =A, =B, =sk, na) in
  in(publicChannel, v0: bitstring);
  let (=v2) = g_var_v2_3(v0) in
  let (=na) = g_var_na_4(v0) in
  let kabp = g_var_kabp_5(v0) in
  let (=B) = g_var_B_6(v0) in
  let (=sk) = g_var_sk_7(v0) in
  new stA: bitstring;
  let t = (vA, (vB, (sk, kabp))) in
  event eSecret(A, B, kabp);
  event eCommit_A(A, B, t);
  insert tStateAReceive(stA,A,B,sk,na,kabp);
  out(publicChannel,senc((v3,na),kabp)).
  \end{lstlisting}
\end{lrbox}

\begin{figure*}[t]
  \centering
  \begin{minipage}[t]{0.48\textwidth}
    \centering
    \textbf{Tamarin Rules}\par\vspace{0.5em}
    \usebox{\tamarinbox}
  \end{minipage}%
  \hfill
  \begin{minipage}[t]{0.48\textwidth}
    \centering
    \textbf{ProVerif Translation}\par\vspace{0.5em}
    \usebox{\proverifbox}
  \end{minipage}

  \caption{Translation slice from the Lowe case study. Rule \texttt{A\_2} consumes the server's encrypted response and the initiator's local state, emits the secrecy and agreement events used by the case-study lemmas, records the new local state in a ProVerif table, and sends the third protocol message. Headers, auxiliary destructors, and the rule-instance identifier and DistinctFact instrumentation are omitted for clarity.}
	\label{fig:translation-ex}
  \Description{Side-by-side comparison of a Tamarin rule from Lowe's fix of the Andrew Secure RPC protocol and its ProVerif translation. Left: rule A\_2 consumes key and state facts plus an encrypted input, emits secrecy and agreement events, outputs an encrypted message, and stores state. Right: the corresponding ProVerif process uses table lookups, an input, auxiliary destructors, events, table insertion, and an output.}
\end{figure*}

\Cref{fig:translation-ex} illustrates the translation on Lowe AS rule \texttt{A\_2}, which produces the \texttt{Commit\_A} event used by the non-injective agreement lemma in the case study.
The header, auxiliary destructors, and the instrumentation of \cref{sec:bridging-gaps} are omitted for brevity.
The left-hand side uses Tamarin's concrete syntax for multiset rewriting: the multisets before, between, and after the arrow are the premises, action facts, and conclusions, respectively.
The two \texttt{get} commands implement premise facts stored in ProVerif tables, \texttt{in} implements the network input, the \texttt{event} commands implement Tamarin action facts, and the final \texttt{insert}/\texttt{out} commands implement the conclusion facts.

\subsection{Lemma Translation}
\label{sec:lemma-translation}

Tamarin lemmas are translated into ProVerif queries, all-trace lemmas as correspondence queries and exists-trace lemmas as reachability queries.
Since the two tools use different logics for expressing properties (\cref{fig:trace-property-grammar}), a direct translation is not always possible.
While Tamarin's logic fragment is strictly more expressive than ProVerif's SA-QBF, we follow a systematic rewriting approach that allows us to translate a large class of Tamarin lemmas to logically equivalent ProVerif queries.
Operationally, the translation turns Tamarin action facts $\fact{F}(\vec{t})@i$ into ProVerif events of the same shape and then arranges them into the query form described in \cref{sec:background}: all-trace implications become correspondences, while exists-trace goals become reachability queries.
The Lowe case study illustrates both the direct correspondence translation and a stronger property that requires decomposition.
The verified non-injective agreement lemma for role A has the correspondence shape
\begin{align*}
  \forall A\ B\ t\ i.\ &\factsymbol{CommitA}(A,B,t)@i\\
              \implies &(\exists j.\ \factsymbol{RunningB}(A,B,t)@j \land j<i)\\
              {}\lor{} &(\exists r.\ \factsymbol{Reveal}(A,B)@r).
\end{align*}
The injective variant additionally excludes a distinct second commit for the same payload:
\begin{align*}
  \forall A\ B\ t\ i.\ &\factsymbol{CommitA}(A,B,t)@i \\
              \implies &(\exists j.\ \factsymbol{RunningB}(A,B,t)@j \land j<i\\
                 &\land{} \neg\exists A'\ B'\ i'.\ \factsymbol{CommitA}(A',B',t)@i' \land i'\ne i)\\
              {}\lor{} &R(A,B),
\end{align*}
where $R$ abbreviates the disjunction of key-reveal exceptions.
Although the nested uniqueness condition exceeds SA-QBF, it can be eliminated.
Abstracting from the Lowe event names, let $C_i$ denote a commit, $W_j$ its matching running witness, and $C'_{i'}$ a second commit with the same uniqueness key.
The formula is equivalent to the conjunction of two supported correspondences:
\begin{align*}
 C_i &\implies (\exists j.\ W_j \land j<i) \lor R(A,B),\\
 C_i \land C'_{i'} &\implies i=i' \lor R(A,B).
\end{align*}
For the forward implication, the first query drops uniqueness; applying the original formula to the first commit yields the second query because, absent $R$, its uniqueness conjunct forces the second commit to coincide.
Conversely, absent $R$, the first query supplies a running witness and the second establishes uniqueness of the commit, reconstructing the original conjunctive witness branch.
The escape disjuncts must occur in both queries: omitting them from the uniqueness query would reject duplicate commits that the original property excuses after a reveal.
As in \cref{sec:bridging-gaps}, the second query is emitted using equality of rule-instance identifiers.
\begin{full}
Both components are standard correspondences over the instrumented events, so their soundness follows from the correspondence translation and the exact rule-identifier equality argument of \fullorappendix{rem:instrumented}; the equivalence above transfers this result to the source property.
\end{full}
\begin{conf}
Both components are standard correspondences over the instrumented events.
The second query compares the fresh identifiers assigned to rule instances, so identifier equality enforces the source uniqueness condition; the equivalence above transfers soundness to the source property, with the detailed argument given in \fullorappendix{rem:instrumented}.
\end{conf}

We do not use ProVerif's native \texttt{inj-event}.
It requires an injective mapping from commits to earlier matching runs, so distinct runs can justify duplicated commits with equal parameters.
Tamarin's formula forbids those duplicates unless an escape disjunct holds.
Thus, \texttt{inj-event} is strictly weaker in the verification direction covered by \cref{th:soundness}, whereas the two-query encoding is equivalent to the source formula.
As a validation check, native \texttt{inj-event} agreed with Tamarin in all nine decided cases of a ten-lemma sample; the remaining case timed out.

We first define the root target fragment for our formula rewriting.
\begin{definition}[TSA-QBF]
	\label{def:tsa-qbf}
  Let TSA-QBF be the restriction of Tamarin's logic fragment from \cref{def:tam-logic} to that of ProVerif from \cref{def:sa-qbf}, with $\K(m)@i$ replacing $\PattC(m)@i$.
\end{definition}

Before discussing the rewriting approach, we highlight a key semantic difference between $\K$ facts and ProVerif attacker predicates that influences the supported fragment.
$\Patt{m}@i$ holds whenever $m$ is deducible from the attacker's knowledge at timepoint $i$, whereas $\K(m)@i$ records that the attacker explicitly derived and used $m$.
Hence, $\K(m)@i$ implies $\Patt{m}@i$, and once $\Patt{m}$ holds, it remains true.
\begin{full}
\K{} facts can thus be seen as a subset of $\PattC$ facts, which we make precise in \fullorappendix{def:sr} and illustrate in \cref{ex:k-vs-attacker}.
\end{full}
\begin{conf}
Thus, \K{} facts form a subset of $\PattC$ facts, as captured by the state relation in \fullorappendix{def:sr}.
For example, a rule that receives a pair $\tlst{a, b}$ and logs $\fact{A}(a)$ falsifies $\forall i\ m.\ \fact{A}(m)@i \implies \exists j.\ \K(m)@j$ in Tamarin, while ProVerif proves the $\PattC$ counterpart, as $a$ is deducible from $\tlst{a, b}$.
\end{conf}
Accordingly, the soundness condition is polarity-based: after negating an all-trace property and putting it in negation normal form, every $\K$ atom must occur positively.
The translation renders each accepted $\K$ atom as $\PattC$.
Premise-side existentials are prenexed into ProVerif's universal query variables, while conclusion-only variables receive ProVerif's implicit existential quantification.
A negated $\K$ conclusion is first rewritten as a leading-negation reachability query.
Residual negative knowledge atoms are rejected.
\begin{full}
Thus, the $\K$ case is exactly the positive-atom hypothesis of \fullorappendix{cor:msr-to-pv-satisfaction}.
\end{full}
\begin{conf}
Thus, the $\K$ case is exactly the positive-atom hypothesis required to preserve satisfaction from an MSR trace to the constructed ProVerif execution, as formalized in \fullorappendix{cor:msr-to-pv-satisfaction}.
\end{conf}
$\factsymbol{KU}$ actions are internal to Tamarin's message-deduction strategy and are outside the formal translation; formulas containing them are rejected.
Exists-trace properties containing $\K$ facts are also rejected because \cref{th:completeness} applies only to $\K$-free formulas.

\begin{definition}[T-UKF and P-UKF]
	\label{def:tukf}

	Let T-UKF be the restriction of TSA-QBF to formulas $\varphi$ for which every $\K$ atom in $\operatorname{NNF}(\neg\varphi)$ occurs positively, and P-UKF the corresponding ProVerif logic fragment.
\end{definition}

For our lemma translation, we can further extend the supported fragment in the following way:
If $\varphi$ is a formula in Tamarin's full logic fragment that can be rewritten into $\neg \psi$, $\psi_1 \land \ldots \land \psi_n$, or $\psi_1 \lor \ldots \lor \psi_n$ with $\psi_i \in \text{T-UKF}$, then it can be translated into a negated query or multiple separate queries whose results can be combined to reconstruct the original semantics.
For each generated query, the translation records whether its result is used directly or inverted and emits instructions for recombining multiple queries.
All rewrites are selected from formula structure; protocol, rule, lemma, and event names do not select an encoding.
The rewriting approach is based on two key observations about the differences between the logics.
First, ProVerif does not support disjunctions in reachability queries.
Hence, we split top-level disjunctions in exists-trace lemmas into separate ProVerif queries:
\begin{equation*}
  \begin{gathered}
		(\exists x\ i.\, \fact{A}(x)@i) \lor (\exists y\ j.\, \fact{B}(y)@j) \\
    \text{becomes} \\
    \begin{array}{l}
      \texttt{query x: bitstring, i: time. A(x)@i.}\\
      \texttt{query y: bitstring, j: time. B(y)@j.}
    \end{array}
  \end{gathered}
\end{equation*}
The original semantics are preserved: the lemma holds if and only if 
\emph{at least one} of the resulting queries is satisfied.
The generated output includes the corresponding recombination instruction.
Similarly, ProVerif does not support top-level conjunctions in correspondence queries.
Hence, we split top-level conjunctions in Tamarin formulas for all-trace lemmas into separate ProVerif queries.
The lemma holds if and only if \emph{all} resulting queries are satisfied.
We perform this split only for independently checkable obligations.

The second class of differences arises from ProVerif's lack of support for negation in queries.
We eliminate supported negations by the exact logical equivalences in \cref{tab:rewrite-rules}; the first group converts between all-trace and exists-trace lemmas and records result inversion when a negation is moved to the top level.
The implementation also translates all-trace nonexistence properties by using the leading-negation reachability form.
For the narrow event-free exists-trace case, the implementation instead uses $\PattC(())$ as an always-true premise: the empty tuple is public and always constructible, while ProVerif has no dedicated \texttt{true} query atom.
The generated query then asks whether the forbidden event must occur under this premise and inverts the correspondence result.
For an exists-trace property with a positive witness and a disjunction of possible violations, the translation negates the full property and emits one correspondence from that witness to the complete violation disjunction, again inverting the result.
The disjunction is deliberately not split: every alternative must refer to the same premise witness.
The translation also distributes independent all-trace conjuncts, moves supported guards from a premise to the conclusion, and recognizes an exact two-query conditional-existence decomposition whose no-match query retains the positive matching branch.
These structural cases extend coverage without allowing separately generated queries to choose witnesses independently.
The second table group moves negated subformulas between premises and conclusions of implications.
In particular, the last rule is frequently used to convert secrecy properties expressing that the attacker does not know a secret unless it has been leaked.

\begin{table}[t]
  \caption{Representative exact lemma rewrites. Quantifiers abbreviate closures over all variables in scope. A top-level negation is implemented by inverting the generated query result; a list denotes separately checked queries whose results are conjoined.}
  \label{tab:rewrite-rules}
  \centering
  \small
  \renewcommand{\arraystretch}{1.08}
  \renewcommand{\tabularxcolumn}[1]{m{#1}}
  \begin{tabularx}{\linewidth}{>{\raggedright\arraybackslash}m{.25\linewidth}>{\hsize=.8\hsize\raggedright\arraybackslash}X>{\hsize=1.2\hsize\raggedright\arraybackslash}X}
  \toprule
  \textbf{Transformation} & \textbf{Original} & \textbf{Rewritten} \\
  \midrule
  All-trace to exists-trace &
    $\forall \varphi \implies \neg \exists \psi$ &
    $\neg (\exists \varphi \land \exists \psi)$ \\
  \addlinespace[2pt]
  Exists-trace to all-trace &
    $\exists \varphi \land \neg \exists \psi$ &
    $\neg (\forall \varphi \implies \exists \psi)$ \\
  \addlinespace[2pt]
  Negated $\exists$ to all-trace &
    $\neg (\exists \varphi \land \neg \exists \psi)$ &
    $\forall \varphi \implies \exists \psi$ \\
  \addlinespace[2pt]
  Positive-witness exists-trace &
    $\exists(\varphi\land\bigwedge_k\neg\psi_k)$ &
    $\neg(\forall\varphi\implies\bigvee_k\psi_k)$ \\
  \addlinespace[2pt]
  Independent obligations &
    $\forall\varphi\implies(\psi\land\chi)$ &
    $\{\forall\varphi\implies\psi,\ \forall\varphi\implies\chi\}$ \\
  \midrule
  Move $\neg$ to conclusion & 
    $(\varphi \land \neg \psi) \implies \chi$ & 
    $\varphi \implies (\chi \lor \psi)$ \\
  Move $\neg$ to premise & 
    $\varphi \implies (\neg \psi \lor \chi)$ & 
    $(\varphi \land \psi) \implies \chi$ \\
  \bottomrule
  \end{tabularx}
\end{table}

\subsection{Restriction \& Axiom Translation}
\label{sec:restriction-translation}

Tamarin restrictions are translated to ProVerif restrictions in a straightforward manner similar to lemmas.
However, since ProVerif restrictions also do not support negations and there is no possibility to handle a leading negation, we can only split top-level conjunctions into separate restrictions.
This limits the class of Tamarin restrictions that can be translated.
The implementation does recognize temporal negations of the form $\neg\exists i\ j.\ \fact{A}@i\land\fact{B}@j\land i<j$, rewriting them to the equivalent universal ordering constraint $\fact{A}@i\land\fact{B}@j\implies j<i\lor i=j$.
It also translates a forbidden-event restriction whose opened body is a quantifier-free positive event formula.
For example, $\neg\exists\vec{x}.\,P\land(Q_1\lor Q_2)$ becomes the two prohibitions $P\land Q_1\implies\bot$ and $P\land Q_2\implies\bot$.
The translation computes positive disjunctive normal form and emits one prohibition per branch, with expansion capped at 16 branches.
Restriction conclusions containing nested implications remain unsupported and are omitted with an explicit warning, because ProVerif rejects nested queries in restrictions.

Moreover, there are restrictions that can be translated but require deeper reasoning that is outside the scope of our automatic translation.
For example, the restriction
\begin{equation*}
	\neg \exists x\ i. \fact{Neq}(x, x)@i,
\end{equation*}
which expresses the inequality of two terms, cannot be directly translated to ProVerif due to the negation.
For example, ProVerif can translate an equivalent formulation:
\begin{equation*}
	\forall x\ y\ i. \fact{Neq}(x, y)@i \implies x \neq y
\end{equation*}

Uniqueness restrictions of the form $\forall x\ i\ j.\ \fact{A}(x)@i \land \fact{A}(x)@j \Rightarrow i \doteq j$ need no rewriting, as timepoint equalities in restrictions are directly supported by ProVerif.
If the events carry rule-instance identifiers (\cref{sec:bridging-gaps}), the equality is translated, like any other temporal equality, as an equality of the identifier variables.

Tamarin allows helping lemmas to be reused as axioms in the verification of other lemmas.
We translate helping lemmas to ProVerif axioms.
The limitations that apply to restriction translation also apply to axiom translation, as ProVerif axioms do not support negations either.

The ProVerif build used in the evaluation supports timepoints in restrictions and axioms.

Tamarin supports inductive reasoning for proving lemmas with the \texttt{use_induction} attribute.
This is directly translated to ProVerif's \texttt{[inductive]} query directive.

\subsection{Bridging Semantic Gaps}
\label{sec:bridging-gaps}

\mypara{Multiset Semantics}
One of the fundamental differences between Tamarin and ProVerif is that Tamarin's MSRs operate on multisets of facts, whereas ProVerif's tables are sets without delete operations.
Hence, once a fact is added to a table, it remains there for the rest of the protocol execution.
This leads to an over-approximation of the protocol behavior, as a ProVerif process can use the same fact multiple times, whereas Tamarin can consume it only once.

To avoid this, for each linear storage fact in the right-hand side of a rule, we create a fresh value and add it as the first argument of the corresponding table entry.
For each linear storage fact in the left-hand side of a rule, we then match on the fact together with its fresh value and add a $\fact{DistinctFact}(\mathit{st}, \mathit{args})$ event for each such fact.
We then add a restriction that ensures that no two such events have the same identifier and arguments.
\begin{align*}
	\forall st \, x \, i \, j. \, & \fact{DistinctFact}(st, x)@i \land \\
                              	& \fact{DistinctFact}(st, x)@j \Rightarrow i = j
\end{align*}
The fresh identifier therefore preserves Tamarin's multiset semantics in ProVerif.

\mypara{Simultaneous Action Facts}
Tamarin permits multiple action facts in a single rule that occur at the same timepoint.
In contrast, ProVerif assumes a total order on events.
We can simulate simultaneous action facts by introducing a fresh \emph{rule-instance identifier} that ties all events in the translated process together.
All queries having the same time variable for multiple action facts are rewritten to use distinct time variables for each action fact, with the additional identifier tying them together.
For example, a formula
\begin{align*}
	& \forall x\ i. \; \factsymbol{A}(x)@i \land \factsymbol{B}(x)@i \implies \exists j. \: \factsymbol{C}(x)@j
\shortintertext{is rewritten into}
	& \forall x\ r\ i_1\ i_2. \; \factsymbol{A}(r, x)@i_1 \land \factsymbol{B}(r, x)@i_2 \Rightarrow \exists j. \: \factsymbol{C}(x)@j\,.
\end{align*}
This translation preserves the semantics of the original formula, as the only way for both $\factsymbol{A}(r, x)@i_1$ and $\factsymbol{B}(r, x)@i_2$ to hold is if $r$ is instantiated to the same fresh value.
Sequential emission also exposes a prefix between two events of the same translated rule.
If a final correspondence observes one action fact in its premise and requires a different fact from the same Tamarin action in its conclusion, a shared identifier alone does not show that the second event has already been emitted.
For these cross-fact same-action conclusions, affected rules emit $\fact{RuleCompleted}(r)$ after all ordinary action events, and the translated conclusion is guarded by this completion event.
A same-fact tautology does not require the event, and rules whose action facts are not involved in such a conclusion do not emit it.
The completion event excludes prefixes that stop partway through the translated action.
It is not a source-level timepoint and imposes no progress or liveness restriction.

\mypara{Temporal Equalities}
In Tamarin, two action facts occur at the same timepoint exactly when they belong to the same rule instance; in ProVerif, two distinct events never share a timepoint.
The rule-instance identifiers introduced above make this expressible: we translate a timepoint equality $i \doteq j$ between distinct action facts as an equality between their identifier variables, and a negated equality as a disequality.
Where the equality merely pins a quantified time variable, as in $\exists j.\, \factsymbol{B}@j \land i \doteq j$, we instead eliminate it and reuse the rule-instance-identifier rewriting for simultaneous action facts described above.
Each event occurrence in a query receives its own identifier variable.
Two occurrences share an identifier variable exactly when they are asserted to stem from the same rule instance.
This sharing is derived from the identity of the original source binders: copied occurrences of one source timepoint retain one provenance, whereas independently scoped binders remain distinct even if they have the same printed hint.
Since the identifiers are introduced by the translation and are invisible in the source lemma, this identifier-sharing discipline is the translation's obligation, not the user's.

\mypara{Semantic Differences in the Implication Connective}
Our analysis revealed a semantic distinction in the interpretation of the $\implies$ connective between Tamarin and ProVerif.
While both tools support this connective---Tamarin in all-trace lemmas and ProVerif in correspondence queries---their semantic interpretations differ in a way the translation must account for.

The ProVerif manual \cite[p.55]{proverif-manual} describes non-temporal correspondences such as $\fact{B}\implies\fact{A}$ as requiring the conclusion event $\fact{A}$ to occur before the premise event $\fact{B}$.
The ProVerif build used in the evaluation supports explicit temporal variables.
For formulas with explicit timepoints, however, ProVerif does not insert this conclusion-before-premise order implicitly: a correspondence $\fact{B}@j \implies \fact{A}@i$ constrains $i$ and $j$ only through explicit temporal constraints or through options such as \texttt{conclBeforePremise}.
Without such a constraint, the formula asks only that whenever $\fact{B}$ occurs, some $\fact{A}$ occurs in the same trace.

In contrast, Tamarin employs the standard logical interpretation of $\implies$ without any implicit temporal ordering.
When verifying $\fact{B}@j \implies \fact{A}@i$, Tamarin searches for counterexamples by attempting to satisfy $\fact{B}@j \land \neg \fact{A}@i$---the expected logical negation.
Consequently, in Tamarin, the formula $\fact{B}@j \implies \fact{A}@i$ is satisfied even when event $\fact{B}$ precedes event $\fact{A}$ temporally.
However, this interpretation requires careful handling of trace prefixes.
Without additional constraints, a prefix trace containing only event $\fact{B}$ would constitute a counterexample.
To address this, we must add a \emph{progress} restriction that ensures certain events are reached, thereby preventing incomplete trace prefixes from invalidating the property.
The same prefix issue arises for an unguarded correspondence whose conclusion refers to a later event of the same rule.
Translated cross-fact same-action properties instead use the completion event described above, without adding a progress restriction.
The implication cases used by the translation are covered by the satisfaction relations in the proof; the following experiment only checks the interaction with ProVerif's documented implementation behavior and progress restrictions.

\begin{full}
To validate our understanding of these semantic differences, we conducted a controlled experiment using a minimal model that permits events $\fact{A}$, $\fact{B}$, and $\fact{C}$ to occur only in that sequence. 
We included a progress restriction $\fact{A} \implies \fact{C}$ to ensure that all considered traces terminate with event $\fact{C}$. 
This progress restriction is preserved during translation between the tools.

ProVerif successfully verified all three queries.
In particular, we found that $\fact{A} \implies \fact{B}$, $\fact{B} \implies \fact{C}$, and $\fact{A} \implies \fact{C}$ all hold.
This matches the explicit-timepoint behavior described above: the progress restriction is treated as a liveness condition, filtering traces in which $\fact{A}$ occurs without a corresponding $\fact{C}$, rather than as a precedence constraint.

Thus, the experiment supports the implementation-side assumptions used for progress-restricted examples, while the formal guarantee remains limited to the faithfully translated fragment of \cref{th:soundness}.
\end{full}
\begin{conf}
A controlled minimal model permits events $\fact{A}$, $\fact{B}$, and $\fact{C}$ only in that sequence and preserves $\fact{A}\implies\fact{C}$ as a progress restriction.
ProVerif verifies $\fact{A}\implies\fact{B}$, $\fact{B}\implies\fact{C}$, and $\fact{A}\implies\fact{C}$, confirming the explicit-timepoint behavior.
This supports the implementation-side assumptions used for progress-restricted examples, while the formal guarantee remains limited to the faithfully translated fragment of \cref{th:soundness}.
\end{conf}

\mypara{Relating \K{}-Facts and \textnormal{\texttt{mess}} Facts}
\begin{full}
ProVerif also supports the predicate $\msg{c}{m}$, which we may view as recording that a message $m$ has been communicated on a channel $c$.
Because the action fact $\K(m)$ labels \textsc{MDIn} when $\In(m)$ is produced but not yet consumed, Tamarin's semantics allow visible action facts to occur between these two points in a trace.
We therefore experimented with encoding $\K{}$ facts as $\msg{c}{m}$ facts and with adding restrictions to keep this separation between emission and consumption explicit.
In our experiments, neither approach yielded a practical solution: restrictions caused significant verification overhead, and the encodings did not reliably preserve the intended semantics.
Consequently, this \texttt{mess}-based encoding is not part of the translation and is outside the proof scope.
It is distinct from the supported translation of $\K$ atoms in T-UKF formulas to attacker predicates (\cref{def:tukf}).
\end{full}
\begin{conf}
Experiments with \texttt{mess}-based encodings of \K{} facts did not yield a practical, reliably semantics-preserving solution, so these encodings are outside the translation and proof scope; they are distinct from the supported translation of $\K$ atoms in T-UKF formulas to attacker predicates (\cref{def:tukf}).
\end{conf}

\section{Soundness \& Completeness}
\label{sec:soundness-completeness}

We show that our translation is sound for model/property instances in the faithful translation fragment: if the property of such an instance holds in the ProVerif translation of a Tamarin model, then it also holds in the original Tamarin model.

\begin{definition}[Faithful translation fragment]
	\label{def:faithful-fragment}
	A model/property instance $(\rules,\ET,\varphi)$ is in the faithful translation fragment when the satisfaction-preserving rewritings of \cref{sec:lemma-translation,sec:bridging-gaps} reduce $\varphi$ to a finite conjunction of T-UKF formulas, the model $(\rules,\ET)$ uses only the MSR constructs mapped in \cref{tab:syntax-map} and detailed in \cref{sec:translation}, and its equational theory is classified as faithfully translated in \cref{sec:implementation}.
	Built-ins for which no translation to ProVerif exists, such as Tamarin's \texttt{multiset} and \texttt{bilinear-pairing} built-ins and Diffie-Hellman models requiring inverse, are outside this fragment.
	Models using XOR can be translated, but the best-effort encoding is not part of this fragment.
	Furthermore, each rewritten component is \emph{simultaneity-free}: every time variable indexes exactly one trace atom, and no timepoint equality relates distinct trace atoms.
	Lemmas with simultaneous action facts are supported through the rewriting of \cref{sec:bridging-gaps}, which produces simultaneity-free formulas over identifier-instrumented events.
	\begin{full}
	The detailed argument is given in \fullorappendix{rem:instrumented}.
	\end{full}
	\begin{conf}
	The fresh identifier ties exactly the events emitted by one translated rule instance, so the rewriting preserves which action facts were simultaneous; the detailed argument is given in \fullorappendix{rem:instrumented}.
	\end{conf}
\end{definition}

For a rewritten source property, $\tran{\varphi}$ denotes the conjunction of its translated components.
Soundness applies to each T-UKF component and lifts to the source property through conjunction and the stated satisfaction equivalence.

\begin{restatable}[Soundness]{theorem}{soundness}\label{th:soundness}
	For every $(\rules,\ET,\varphi)$ in the faithful translation fragment:
	\begin{align*}
		\tracespv(\tran{\rules}) \prescript{\forall}{}{\models_\ET^{PV}} \tran{\varphi} \Rightarrow \tracesmsr(\rules) \prescript{\forall}{}{\models_\ET^{T}} \varphi
	\end{align*}
\end{restatable}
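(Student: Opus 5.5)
We argue by contraposition, via a trace-simulation argument. Suppose some Tamarin trace in $\tracesmsr(\rules)$ violates $\varphi$. By \cref{def:faithful-fragment}, $\varphi$ is equivalent, under the satisfaction-preserving rewritings of \cref{sec:lemma-translation,sec:bridging-gaps}, to a finite conjunction of simultaneity-free T-UKF components. Hence some component $\psi$ is violated by an execution
\[
\msempty \withmsr{a_1} S_1 \cdots \withmsr{a_n} S_n .
\]
It therefore suffices to show the following for each component: every MSR execution violating $\psi$ yields a ProVerif execution of $\tran{\rules}$ violating $\tran{\psi}$. The lift back to $\varphi$ then follows from conjunction and the recorded recombination and inversion instructions.

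\textbf{Simulation lemma.} The core step is a simulation lemma, proved by induction on $n$. Define a state relation $\mathcal{R}$ (the relation of \fullorappendix{def:sr}) between an MSR state $S$ and a ProVerif configuration $(\E,\Pro,\Tbl,\Att)$. The relation holds when:
\begin{itemize}
  \item every linear storage fact $\fact{F}(\vec t)\in S$ corresponds to a table row $\fact{F}_{tbl}(st,\vec t)\in\Tbl$ whose fresh identifier $st$ has not yet appeared in a $\fact{DistinctFact}$ event;
  \item every persistent fact is a table row;
  \item every $\Out(t)$ and every $!\K(t)$ gives $t\in\Att$ up to $=_\ET$;
  \item every fresh name chosen is in $\E$;
  \item $\Pro$ still contains the replicated $!\tran{\ru_k}$.
\end{itemize}

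For each MSR step, we case on the applied rule.
\begin{itemize}
  \item Message-deduction rules in \textsc{MD} and $\Fresh$ are matched by zero ProVerif steps, or by the attacker's \textsc{App} saturation. The $\K(x)$ label is recorded only as the fact that $x\in\Att$ at that point.
  \item A ground instance of a protocol rule $\ru$ is matched by spawning one copy of $\tran{\ru}$ and running it to completion. The $\Pget$s succeed by $\mathcal{R}$. The $\Pin$s succeed because $\In(t)$ was produced from $!\K(t)$, so $t\in\Att$. The auxiliary destructors $g_{v,t}$ recover exactly the matching substitution. The $\Pnew$s choose the names of the $\Fr$ premises. The public-variable inputs are instantiated with the pub names of the ground instance, which are in $\Att$. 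The events, inserts, and outputs are then emitted.
\end{itemize}
The $\fact{DistinctFact}$ restriction is satisfied because each consumed linear fact carries a fresh identifier used only once, mirroring the multiset removal. We also emit $\fact{RuleCompleted}$ whenever it is instrumented. Thus $\mathcal{R}$ is re-established.

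\textbf{Trace correspondence.} Next, we relate traces. Each MSR timepoint $i$ with action multiset $a_i$ maps to a contiguous block of ProVerif events, all tagged with one fresh rule-instance identifier, using the order fixed by $\seq$. This gives a monotone map $\tau$ from source atom occurrences to target timepoints. We then prove by structural induction on T-UKF formulas a satisfaction-transfer lemma, namely the content of \fullorappendix{cor:msr-to-pv-satisfaction}: if the source trace satisfies $\neg\psi$ under valuation $\theta$, then the target trace satisfies $\neg\tran{\psi}$ under $\theta\circ\tau^{-1}$ together with the identifier assignment. The cases are as follows.
\begin{itemize}
  \item Event atoms transfer in both polarities. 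An event occurs in the target block if and only if the corresponding action occurs in the source, since the translated process emits only the events of the instantiated rule.
  \item Term (dis)equalities are unchanged.
  \item Timepoint orderings are preserved by monotonicity of $\tau$. Because the components are simultaneity-free, no equality between distinct atoms must be preserved, and identifier equalities encode exactly same-instance membership (\fullorappendix{rem:instrumented}).
  \item $\K$ atoms appear only positively in $\operatorname{NNF}(\neg\psi)$, and $\K(m)@i$ implies $\Patt{m}@\tau(i)$. This is precisely where the polarity restriction is needed.
\end{itemize}

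\textbf{Main obstacle.} The hardest step is the event and timepoint bookkeeping in the transfer lemma. ProVerif totally orders the events of a block, and the translated query quantifies over prefixes of that ProVerif trace. We must therefore check that a ProVerif counterexample is not lost or altered by intermediate configurations, such as prefixes stopping mid-block, and that a source disequality $i\neq j$ or ordering $i<j$ between atoms of different rule instances translates correctly. I would first handle this by always running each block to completion, so that only full-block traces are produced. I would then argue separately that conclusion atoms referring to the same instance are guarded by $\fact{RuleCompleted}$. The remaining concern is that ProVerif restrictions (the $\fact{DistinctFact}$ uniqueness and the translated Tamarin restrictions) hold on the constructed trace; this follows from the same transfer lemma applied to the restriction formulas.
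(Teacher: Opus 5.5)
Your route is the paper's: contraposition with reduction to a single T-UKF component, a simulation by induction on the MSR execution that runs each translated rule instance to completion (the paper's \cref{th:wti}), and a transfer lemma in which positive $\K$ atoms become attacker predicates (\cref{cor:msr-to-pv-satisfaction}). Your richer state relation and your explicit treatment of negative event atoms are, if anything, more careful than the paper.

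The step that fails is the ordering case of your transfer lemma. Monotonicity of $\tau$ covers orderings that occur positively in $\operatorname{NNF}(\neg\psi)$. But a correspondence whose conclusion contains $j \dotless i$ contributes, after negation, a universally quantified \emph{negated} ordering $\neg(j \dotless i)$. Transferring it needs $\tau$ to \emph{reflect} order, and it does not: two distinct atoms of one source timepoint go to distinct target positions, one strictly before the other. Your obstacle paragraph only considers orderings between atoms of different rule instances, but the same-instance case is the problematic one. Simultaneity-freeness of $\psi$ does not rule it out, because $\psi$ contains no timepoint equality; one appears only once $\neg(j \dotless i)$ is unfolded.

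For the translation as the paper describes it, this step cannot be closed at all. Take the single rule $\msempty \msrewrite{\fact{A}(),\fact{B}()} \msempty$ and $\psi = \forall i.\ \fact{B}()@i \Rightarrow \exists j.\ \fact{A}()@j \land j \dotless i$, which lies in the faithful fragment. Tamarin refutes $\psi$ on the one-step trace, where both facts occur at timepoint $1$. The translated process, however, emits $\fact{A}$ before $\fact{B}$ in the lexicographic order, so every ProVerif trace satisfies $\tran{\psi}$. Rule-instance identifiers do not help, since the translation maps only timepoint equalities and disequalities to identifier (dis)equalities and leaves strict orderings unchanged. The paper's \cref{lem:val-translation} has the same omission: it checks only positive $\dotless$, disequalities, and same-atom equalities. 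So you inherit this gap rather than introduce it.

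The missing idea is to make target order coincide with source order. One option is to translate a strict ordering between distinct atoms as $j \dotless i \land r_j \neq r_i$ on their identifier variables, since identifiers are equal exactly when source timepoints are. Another is to exclude negated orderings between facts that a single rule can emit together. Either way, your ordering case then holds in both polarities.

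A smaller point: \textsc{MDIn} is matched by zero ProVerif steps, so consecutive $\K$ atoms between two blocks need distinct target positions. You must make $\tau$ strictly monotone, e.g.\ by padding with attacker \textsc{App} steps.
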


For the converse direction, we show that satisfiability is preserved: if a property without $\K$ facts is satisfied by some Tamarin trace, it is satisfied by some trace of the translation.
This covers exists-trace lemmas.
\begin{conf}
\begin{table}[t]
  \caption{Overview of excluded models.}
  \label{tab:discarded-models}

  \small
  \begin{tabularx}{\linewidth}{l@{\hspace{-1em}}rX}
    \toprule
    \textbf{Criterion}                       & \textbf{Count} & \textbf{Reason}            \\ \midrule
    Unsupported built-ins / DH \texttt{inv}  & 327  & \multirow{4}{*}{Out of scope}        \\
    SAPiC models                             & 113  &                                      \\
    Observational equivalence                & 46   &                                      \\
    Accountability lemmas                    & 12   &                                      \\ \midrule
    Invalid DH pattern matching              & 93   & \multirow{2}{*}{Pot.\ unsound} \\
    Well-formedness warnings                 & 35   &                                      \\ \midrule
    Models with stored proofs                & 37   & \multirow{4}{*}{Avoid distortion}    \\
    Duplicate models                         & 27   &                                      \\
    Archived / obsolete models               & 23   &                                      \\
    Translated from other formalisms         & 19   &                                      \\ \midrule
    No lemmas / rules                        & 29   & \multirow{3}{*}{Not verifiable}      \\
    Requires preprocessor flags              & 22   &                                      \\
    Errors / timeout during preprocessing    & 2    &                                      \\ \bottomrule
  \end{tabularx}
\end{table}

\end{conf}
For all-trace lemmas, the converse of \cref{th:soundness} does not hold in general: the translation over-approximates Tamarin's traces---for example, by executions stopping between the events of a single rule---so ProVerif may report spurious counterexamples.
This is consistent with ProVerif's over-approximating design; a ProVerif proof remains conclusive by \cref{th:soundness}.

\begin{restatable}[Completeness for existential properties]{theorem}{completeness}\label{th:completeness}
	For $(\rules,\ET,\varphi)$ in the faithful translation fragment such that $\varphi$ itself is simultaneity-free, belongs to TSA-QBF, and contains no $\K$ facts:
	\begin{equation*}
		\tracesmsr(\rules) \prescript{\exists}{}{\models_\ET^{T}} \varphi \Rightarrow \tracespv(\tran{\rules}) \prescript{\exists}{}{\models_\ET^{PV}} \varphi
	\end{equation*}
\end{restatable}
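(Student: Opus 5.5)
We prove the statement by a forward simulation: from a Tamarin execution that witnesses $\varphi$, we build a ProVerif execution of $\tran{\rules}$ whose event trace satisfies $\varphi$. We fix an execution $\msempty \withmsr{a_1} S_1 \cdots \withmsr{a_n} S_n$ in $\execmsr(\rules)$ and a valuation $\theta$ that satisfies $\varphi$ on its trace. By induction on $k$, we construct ProVerif configurations $\pvconf[k]$ reachable from $\pvconf[0]$ together with a state relation between $S_k$ and $\pvconf[k]$, namely the relation of \fullorappendix{def:sr}, adapted to this direction. The relation requires the following.
\begin{itemize}
\item Every linear storage fact in $S_k$ (counted with multiplicity) corresponds to a distinct table entry $\fact{F}_{tbl}(st,\vec t)$ whose identifier $st$ has not yet appeared in a $\fact{DistinctFact}$ event.
\item Every persistent fact corresponds to a table entry.
\item Every $\Out(t)$ or $!\K(t)$ fact is deducible from $\Att_k$.
\item Every $\Fr(n)$ fact corresponds to a name not yet used in $\E_k$.
\item $\Pro_k$ still contains the replicated processes $!\tran{\ru_j}$.
\end{itemize}

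For the inductive step we distinguish the rule applied. A message deduction rule of $\MD$ maps to zero ProVerif steps, or to attacker \textsc{App} steps; the invariant survives because attacker knowledge is closed under constructor application and contains public names. The rule $\Fresh$ maps to no step; we delay the choice of the name until the consuming rule executes $\Pnew$. A protocol rule instance $l\sigma \msrewrite{a\sigma} r\sigma$ maps to one unfolding of $!\tran{\ru}$. Its $\Pget$ steps select the table entries matched by the state relation. Its $\Pin$ steps receive $\In$ messages, which are deducible because the preceding \textsc{MDIn} step requires $!\K$, and our invariant makes $!\K$ facts deducible. The $\Plet$ steps with the private destructors $g_{v,t}$ succeed on the instances $t\sigma$, since these destructors are defined by $g_{v,t}(t)\to v$ modulo $\ET$. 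The $\Pin(c,\pubv(\cdot))$ steps receive the public instantiations, which are attacker-known. The events $a\sigma$ are then emitted in the fixed sequencing order, each tagged with a fresh rule-instance identifier. Finally the process performs its inserts and outputs; each insert uses a fresh $st$, which restores the invariant. The $\fact{DistinctFact}$ restriction holds because every consumed linear fact instance carries its own identifier, and the Tamarin step removes that instance from $S_k$, so it is never consumed again. If the instance carries a $\fact{RuleCompleted}$ event, we emit it as well, because we always run a translated rule to completion.

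Next we transfer satisfaction. The trace built this way is the Tamarin trace with every action multiset $a_k$ replaced by a contiguous block of singleton events in the same order, plus instrumentation events and empty labels. We define a monotone map $\pi$ from Tamarin timepoints to ProVerif positions that sends each action fact to the position of its event. Because $\varphi$ is simultaneity-free, every time variable indexes exactly one atom, so $\theta\circ\pi$ is a well-defined valuation. The map $\pi$ is strictly monotone across distinct rule instances, so it preserves $\dotless$ and $\doteq$ between atoms from distinct instances. Simultaneity-freeness rules out equalities between atoms of the same instance, and term equalities are unchanged. Since $\varphi$ contains no $\K$ atoms, its atoms are exactly events, and they occur in the ProVerif trace at the mapped positions.

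The remaining problem is the negative atoms and universal quantifiers inside $\varphi$. These are allowed, because TSA-QBF includes correspondence-shaped formulas. So we must show that the ProVerif trace contains no events beyond the images of Tamarin actions, apart from instrumentation symbols that do not occur in $\varphi$. Our construction guarantees this, because protocol events are emitted only by rule simulations, so quantifying over events of the ProVerif trace is the same as quantifying over actions of the Tamarin trace through $\pi$. The only remaining difference is timepoints that lie inside a block and have no preimage. For these we check that, in the prenex SA-QBF shape, they can only be bound by universal quantifiers that are guarded by an event atom, and such a guard forces the timepoint into the image of $\pi$.

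We expect this last step, an induction on formula structure showing that $\varphi$ is preserved under the order-embedding $\pi$, to be the main obstacle. The reason is that temporal inequalities between events of the same block are totally ordered in ProVerif but simultaneous in Tamarin. Simultaneity-freeness is the hypothesis intended to exclude such comparisons. We would first try to prove that any comparison the formula can make involves atoms from distinct rule instances, and to handle the same-instance case by exhibiting an emission order compatible with $\theta$.
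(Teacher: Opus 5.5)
Your proposal follows the paper's route. The paper also uses the forward simulation of \cref{th:wti}: table entries cover the storage facts, $\kn(\mss)\subseteq\ded(\pvconf)$, the case split ranges over protocol rules and the \textsc{MD}/\textsc{Fresh} rules, and each translated rule runs to completion without interleaving. It then transfers the witnessing valuation to the lexicographically serialized trace (\cref{lem:val-translation}, \cref{cor:msr-to-pv-satisfaction}), with $\tran{\varphi}=\varphi$ because $\varphi$ is $\K$-free and simultaneity-free. Your extra bookkeeping for $\fact{DistinctFact}$ and $\fact{RuleCompleted}$ is sound. So is your observation that universally quantified premise atoms pull back along $\pi$, because the constructed execution emits no protocol events beyond images of Tamarin actions. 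Both are more explicit than the paper.

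The step you leave open is not closed by either of your plans.
\begin{itemize}
\item The claim that every comparison the formula can make relates atoms of distinct rule instances is false. In $\forall i\,j.\ \fact{A}@i \land \fact{B}@j \Rightarrow i \dotless j \lor \ldots$, which is simultaneity-free, the premise can be matched by simultaneous $\fact{A},\fact{B}$.
\item Choosing an emission order compatible with $\theta$ is unavailable. The sequencing operator fixes the lexicographic order of events inside $\tran{\ru}$. Moreover, under universal quantification one trace must serve all instantiations of the premise at once.
\end{itemize}

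What you need is a polarity argument, which is the content of \cref{lem:val-translation}. In TSA-QBF, premises contain only atoms, and temporal constraints occur only positively in the conclusion. Hence only constraints that are true under the Tamarin valuation need to be transferred.
\begin{itemize}
\item A true $i \dotless j$ or $i \not\doteq j$ forces distinct Tamarin indices, hence distinct rule instances, where $\pi$ is strictly monotone.
\item A true constraint between two distinct atoms at the same index must contain an equality: either $\doteq$, or $\dotleq$ read as $i \dotless j \lor i \doteq j$. Simultaneity-freeness excludes both.
\item Constraints that are false in Tamarin may flip under serialization. Since they occur only positively, becoming true cannot falsify the formula.
\end{itemize}
With this in place of your final step, the argument goes through.
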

Both proofs are provided in \fullorappendix{sec:weak-simulation}.

\section{Implementation}
\label{sec:implementation}

The translation is integrated with Tamarin's export framework and builds on the ProVerif export module from \sapicp{}~\cite{sapicplus}.
It supports MSR features including multiset rewrite rules with pattern matching, restrictions, helping lemmas, and the security-property fragment characterized in \cref{def:faithful-fragment}.
The implementation extends Tamarin's export functionality.

\mypara{Built-in Theories}
The implementation distinguishes faithfully translated theories from features that cannot be translated and features that are translated only as best-effort approximations.
Ordinary constructors, user equations accepted by both tools, private helper destructors, and the encodings for MSR state are part of the faithful translation fragment defined in \cref{def:faithful-fragment}.
Tamarin's \texttt{multiset} and \texttt{bilinear-pairing} built-ins cannot be translated.
The same holds for Diffie-Hellman models requiring inverse.
Models using any of these features are excluded from the comparison.
Diffie-Hellman support uses a best-effort translation derived from \sapicp{}, and XOR support is partial.
As discussed in~\cref{sec:eval-outcome}, the XOR encoding is not covered by \cref{th:soundness} because ProVerif cannot faithfully represent associative symbols and the self-inverse XOR equation.
When XOR is used, the implementation emits a warning that the resulting verification results may be unsound.

\mypara{Usage and Warnings}
Translation is invoked via \texttt{tamarin-prover -m=proverif}, producing a self-contained ProVerif model with all necessary destructors, names, tables, and parallel composition of translated rules.
The intended workflow is to keep Tamarin as the source model and use the generated ProVerif file primarily as a fast backend artifact; the file remains inspectable and editable, but its control flow follows the table-based state-machine structure of the source MSRs rather than a handwritten process style.
The implementation provides systematic warnings to guide result interpretation:
When an exact structural normalization produces multiple queries or an inverted result (\cref{sec:lemma-translation}), it records how the query results must be combined.
Colliding bound timepoint names are alpha-renamed automatically before query generation.
When a property remains outside the supported fragment, including any formula containing a $\factsymbol{KU}$ action, the implementation reports the precise rejected shape and emits no approximate query.
If a helping lemma cannot be translated, the implementation reports its omission; this may affect proof search but does not make the trace model incomplete.
If a restriction cannot be translated, the generated model is marked and the user is warned that the verifier outcome concerns an incomplete trace model.

\section{Evaluation}
\label{sec:evaluation}

\begin{full}

\end{full}

\begin{table*}[t]
  \caption{Comparison of Tamarin and ProVerif outcomes on the 523 tasks accepted by the ProVerif front end, using the fixed ProVerif toolchain. Entries marked $^*$ are best-effort XOR cases outside the soundness theorem.}
  \label{tab:comparison-results}
  \centering
  \small
  \begin{tabularx}{0.82\textwidth}{lXrrrrrr}
    \toprule
    \multicolumn{2}{l}{\multirow[c]{2}{*}{\textbf{Tamarin Result}}} & \multicolumn{5}{c}{\textbf{ProVerif Result}}                           & \multirow[c]{2}{*}{\textbf{Total}} \\
    \cmidrule(lr){3-7}
                                                 &         & \textbf{True}    & \textbf{False}  & \textbf{None} & \textbf{OOM} & \textbf{Timeout} & \\
    \midrule
    \multicolumn{7}{l}{\textit{Universal properties (all-trace)}}                                                                                          & 420 \\
    \midrule
                                                 & True    & \textbf{179}     & 1$^\dagger$     & 32            & 60           & 27               & 299 \\
                                                 & False   & 19$^*$           & \textbf{46}     &  4            & 11           &  9               &  89 \\
                                                 & OOM     & 13               & --              &  2            &  9           &  3               &  27 \\
                                                 & Timeout & --               & --              &  2            &  3           & --               &   5 \\
    \midrule
    \multicolumn{7}{l}{\textit{Existential properties (exists-trace)}}                                                                                      & 103 \\
    \midrule
                                                 & True    & \textbf{44}      & 11$^*$           &  8            & 19           & 17               &  99 \\
                                                 & OOM     & --               & --               & --            &  4           & --               &   4 \\
    \midrule
    \multicolumn{7}{l}{\textbf{Total front-end-accepted tasks}}                                                                                             & \textbf{523} \\
    \bottomrule
  \end{tabularx}

  \vspace{0.5em}
  \footnotesize
  OOM = Out of Memory. None denotes a logically inconclusive result; OOM and timeout are resource outcomes.
  Dashes (--) indicate zero cases.\\
  $^*$Due to best-effort XOR translation outside the soundness theorem (see \cref{sec:eval-outcome}).
  $^\dagger$The \texttt{dp3t::upload\_auth} incomplete-model result discussed in \cref{sec:eval-failure-cases}.
\end{table*}

We analyze 121 Tamarin models, examining both where the tools achieve consistent results and where they systematically diverge.
Our evaluation serves two purposes:
\begin{enumerate*}
  \item validating that the translation allows a meaningful comparison, and
  \item comparing the tools' verdicts and resource usage on corresponding tasks.
\end{enumerate*}
\Cref{tab:discarded-models,tab:comparison-results} summarize the dataset construction and cross-tool outcomes.
We first highlight the main findings before examining the dataset, outcome categories, feature ablations, and performance results in~\mbox{detail}.

\subsection{Key Findings}
\label{sec:eval-findings}

Our analysis reveals a substantial common core between the tools.
Among non-XOR tasks, 237 of 238 definitive pairs agree; the remaining pair is explicitly flagged because ProVerif decides an incomplete translation of the \texttt{dp3t} model (\cref{sec:eval-failure-cases}).
Best-effort XOR results remain outside the faithful fragment, with 32 agreements and 30 disagreements among definitive pairs.
The four in-fragment \texttt{functional} lemmas are inconclusive under the fixed ProVerif build.
The primary bottleneck is ProVerif's termination rate on more complex lemmas.
Overall, ProVerif is faster and uses less memory on most comparable tasks.
These results, enabled by our translation, provide the first large-scale empirical comparison of Tamarin and ProVerif on semantically equivalent models.

As described in \cref{sec:lemma-translation}, certain Tamarin lemmas require splitting into multiple ProVerif queries due to top-level disjunctions (in exists-trace lemmas) or conjunctions (in all-trace lemmas), while others require negation handling where the ProVerif result must be negated to obtain the result for the original lemma.
In our evaluation, split-query results are combined automatically according to the recorded connective and negated when required; the generated output carries the same reconstruction instructions.

In several source models, restrictions with event conclusions encode reliable channels, and translating them faithfully requires ProVerif's extended restriction semantics.
We reported a soundness bug in ProVerif and, in ongoing work with the developers, identified a further related one; the developers confirmed both and fixed them in \texttt{working\_dev\_branch} commit \texttt{1d4cd945}, with official releases remaining on \texttt{master}.
The outcome counts use this fixed build.
Relative to the pre-fix ProVerif build used in our initial evaluation, the fix changes only four \texttt{functional} reachability queries (false to inconclusive) and two no-step regression properties (true to inconclusive); no other verdicts change.

\subsection{Dataset \& Scope of Comparison}
\label{sec:eval-dataset}

Our dataset is based on the largest curated set of Tamarin models to date: the example directory in Tamarin's code base.
This diverse collection includes feature showcases, regression tests, and, importantly, the models of various conference publications.
To enable a meaningful comparative analysis, we excluded several categories of models (shown in \cref{tab:discarded-models}) and obtained a representative set of 121 models for our comparison.
The resulting dataset contains 53 publication- or thesis-origin models (40 core published case-study models from major venues and 13 derived feature or test variants), together with 7 classic protocol models, 18 SPORE models, 4 literature benchmarks, and 39 internal or synthetic pedagogical models (121 total).
The models average 208 lines of code overall, while the core published case studies average 381 lines.

This dataset exhibits a survivorship bias: models are typically included only if their analysis (e.g., verification or attack discovery) terminates without user interaction.
Exceptions exist, including legacy models, models created to showcase new features or to highlight bugs, and \enquote{model suites} that use Tamarin's macro language to produce different variants of the same protocol in different threat models.
Some models also come with manually constructed proofs; in such cases, both the original model and a corresponding proof-annotated version---often one per proven lemma---are preserved.

\begin{full}
First, we exclude models that are out of scope for this work.
This includes models using Tamarin built-ins that cannot be faithfully translated (\texttt{multiset}, \texttt{bilinear-pairing}, \texttt{diffie-hellman} with inverse).
Moreover, we discard all SAPiC models, as their translation has already been evaluated in prior work \cite{sapicplus}.
We also exclude models using observational equivalence (diff mode).
Although ProVerif also supports observational equivalence, Tamarin and ProVerif implement it too differently for a direct translation.
Finally, we exclude models focusing on accountability properties, as they require a different treatment.
In total, this includes 498 models.

Second, we exclude models whose soundness of the verification result cannot be guaranteed.
This includes models using ambiguous and therefore invalid pattern matching on DH terms, e.g., $(g^a)^b = (g^b)^a$ cannot uniquely determine whether $x = g^a$ or $x = g^b$.
We also exclude models with critical well-formedness warnings in Tamarin (not subterm convergent equations, missing facts, unbound variables), as we cannot be sure that Tamarin gives the correct result in this case.
This category contains 128 models.

Third, to avoid distortion of the dataset, we exclude models with stored proofs, duplicates of the same model (e.g., used as both feature showcase and regression test), models marked as archived or obsolete, as well as translations from SAPiC+.
This category encompasses 106 models.

Fourth, we discard models without rules or lemmas (used for regression and feature tests), models that require a specific set of preprocessor flags to be evaluated, and models producing errors or timeouts during precomputation.
This encompasses 53 models.
\end{full}
\begin{conf}
As summarized in \cref{tab:discarded-models}, we exclude 498 out-of-scope models using unsupported built-ins or DH inverse, SAPiC, observational equivalence, or accountability properties.
We exclude 128 potentially unsound models because of ambiguous DH pattern matching or critical well-formedness warnings.
To avoid dataset distortion, we exclude 106 models with stored proofs, duplicates, archived or obsolete models, or translations from other formalisms.
Finally, we exclude 53 models that have no rules or lemmas, require particular preprocessor flags, or fail or time out during preprocessing.
\end{conf}

\subsection{The Common Core Logic Fragment}
\label{sec:eval-common-core}

The 121 models contain 566 lemma tasks.
Our translation generates executable queries for 526 tasks.
The ProVerif front end accepts 523 tasks (\qty{92.4}{\percent}).
Of the remaining 43 tasks, 32 formulas contain $\factsymbol{KU}$~actions, which are outside the formal translation, and eight exists-trace formulas contain $\K$ facts, whereas completeness is proved only for $\K$-free formulas.
ProVerif rejects the three generated queries for the Okamoto model because its equational theory is non-convergent, which Tamarin supports~\cite{dreier2017beyond}; the model was designed to demonstrate that extension.
We do not count inconclusive or resource-bound tasks as translation failures.

These cases define the expressiveness boundary.
The translation handles the narrow event-free exists-trace shape, shared positive witnesses, compound all-trace properties, and same-action completion patterns described in \cref{sec:lemma-translation}.

Within the supported fragment, accepted $\K$ atoms become ProVerif \texttt{attacker} predicates.
This covers common secrecy formulas and authentication with reveal when they satisfy the T-UKF polarity condition.
The translation also handles bound-name collisions by alpha-renaming, temporal-negation restrictions, all-trace nonexistence, and the correspondence and uniqueness encoding of injective~agreement.

\begin{table*}[t]
  \caption{Lowe case-study results. In the agreeing-total row, times are summed and memory is the per-lemma peak maximum; injective-agreement measurements use the fixed ProVerif build.}
  \label{tab:lowe-as-case-study}
  \centering
  \small
  \begin{tabularx}{0.82\textwidth}{lX
      S[table-format=2.3]@{\hspace{0.8em}}
      S[table-format=3.1]@{\hspace{1.5em}}
      S[table-format=1.3]@{\hspace{0.8em}}
      S[table-format=2.1]}
    \toprule
    \multirow{2}{*}{\textbf{Lemma}} & \multirow{2}{*}{\textbf{Outcome}} &
    \multicolumn{2}{c}{\textbf{Tamarin}} &
    \multicolumn{2}{c}{\textbf{ProVerif}} \\
    \cmidrule(lr){3-4}\cmidrule(lr){5-6}
    & &
    \multicolumn{1}{c}{\textbf{Time (\unit{\second})}} &
    \multicolumn{1}{c}{\textbf{Memory (\unit{\mebi\byte})}} &
    \multicolumn{1}{c}{\textbf{Time (\unit{\second})}} &
    \multicolumn{1}{c}{\textbf{Memory (\unit{\mebi\byte})}} \\
    \midrule
    \texttt{secrecy}                  & agree & 0.768 &  89.8 & 0.048 & 12.6 \\
    \texttt{noninjectiveagreement_A}  & agree & 0.298 &  81.6 & 0.049 & 12.7 \\
    \texttt{noninjectiveagreement_B}  & agree & 4.076 & 129.9 & 0.048 & 13.1 \\
    \texttt{injectiveagreement_A}     & agree & 2.762 & 157.1 & 0.148 & 14.7 \\
    \texttt{injectiveagreement_B}     & agree & 22.033 & 324.2 & 0.170 & 15.6 \\
    \midrule
    \textbf{Agreeing total} & &
    \bfseries 29.937 &
    \bfseries 324.2 &
    \bfseries 0.463 &
    \bfseries 15.6 \\
    \bottomrule
  \end{tabularx}
\end{table*}

\subsection{Verification Outcome}
\label{sec:eval-outcome}

Having established the expressiveness boundary of both tools, we now examine how they perform.
We separate formally guaranteed results from best-effort XOR results, which are interpreted only as empirical cross-checks.
\Cref{tab:comparison-results} summarizes the verification results categorized by all-trace (correspondence queries) and exists-trace properties (reachability queries).

For all-trace lemmas, Tamarin verifies 299 lemmas, of which ProVerif also verifies 179; ProVerif returns one false result on the incomplete \texttt{dp3t} translation, 32 inconclusive results, 60 out-of-memory~\mbox{results}, and 27 timeouts.
ProVerif also confirms 46 of Tamarin's 89 counterexamples.
In 19 cases, ProVerif verifies a lemma for which Tamarin finds an attack.
These cases are exclusively from models using Tamarin's built-in XOR theory and are expected results of our best-effort translation: the attacker must exploit the self-inverse property of XOR to deduce $x$ from $x \oplus y$ and $y$, which Tamarin's XOR theory handles but ProVerif's approximation cannot.
The other Tamarin-false all-trace lemmas yield 4 inconclusive results, 11 out-of-memory~\mbox{results}, and 9 timeouts in ProVerif.
Since these XOR cases are outside that fragment, they are not counterexamples to \cref{th:soundness}.
The practical reason for including XOR is empirical: many XOR lemmas do not exercise the unsupported associative or self-inverse reasoning.
Across the 29 evaluated XOR models, both tools agree on 32 of the 62 lemmas for which both tools return a Boolean result; the remaining 30 are the XOR disagreements reported in \cref{tab:comparison-results}. For the remaining XOR lemmas, no Boolean cross-tool comparison is available because ProVerif is inconclusive or because at least one tool runs out of memory or times out.
The implementation warning described in \cref{sec:implementation} marks these verdicts as outside the formal guarantee, so they can be used as fast diagnostics but not as standalone soundness evidence.
When Tamarin is resource-bound on all-trace tasks, ProVerif returns true in 13 cases, inconclusive in 4, out-of-memory in 12, and times out in 3; ProVerif has no false results in these rows.

For the 99 exists-trace lemmas that Tamarin verifies, ProVerif agrees in 44 cases and returns false for 11, all of which are best-effort XOR translations; a false reachability result is a proof of unsatisfiability within ProVerif's abstraction, not a failed search.
The remaining runs are 8 inconclusive results, 19 out-of-memory~\mbox{results}, and 17 timeouts; four inconclusive results are the \texttt{functional} lemmas discussed next.
For the four exists-trace lemmas on which Tamarin runs out of memory, ProVerif also runs out of memory.

Across all transformation families, 237 of 238 non-XOR definitive pairs agree.
Among the 45 non-XOR tasks using the exact injective-agreement encoding, all 37 definitive pairs agree; eight are resource-bound.
The only non-XOR disagreement is the flagged \texttt{dp3t} cell.

\subsubsection{Failure Cases and Mitigation Paths}
\label{sec:eval-failure-cases}
Three qualifications explain the cells where the tools do not reach the same definitive result.
First, XOR can change the result because the current encoding does not preserve all XOR reasoning.
For the \texttt{nonce\_secrecy} lemma of \texttt{NSLPK3xor}, Tamarin finds the expected attack, whereas ProVerif verifies the secrecy query under the best-effort approximation.
For the \texttt{executable} lemma of \texttt{CH07}, Tamarin witnesses the expected execution trace, whereas ProVerif proves the corresponding query unreachable under the best-effort encoding.
Future versions could incorporate ProVerif-specific XOR reductions, such as those of \textcite{proverif-xor}, before invoking ProVerif on the translated model; the practical coverage of such reductions on our 30 XOR disagreements has not been quantified in this work.
Second, the four non-XOR exists-trace cases on which ProVerif is inconclusive do not reflect a limitation of the translation.
The \texttt{functional} lemmas in \texttt{mixvote\_ShHh} and \texttt{aletheaDR\_ShHh} (and their \texttt{SmHh} variants) each encode an honest end-to-end vote-counting execution requiring 13 action facts.
Tamarin finds the trace, whereas ProVerif is inconclusive under the fixed ProVerif build.
The unreachable verdicts produced by the pre-fix ProVerif build result from the locally reproduced restriction-handling defect; all all-trace results in the affected theories are unchanged under the fixed build.
Third, \texttt{dp3t::upload\_auth} is false in ProVerif but true in Tamarin only after four time-ordering restrictions with nested implications are dropped, so this ProVerif result concerns an incomplete, weaker model rather than a faithful disagreement; \texttt{dp3t::soundness} is separately excluded because its formula is not translated.
The same unsupported restriction shape occurs in \texttt{robert}, but its three tasks are now among the 32 rejected because their formulas contain $\factsymbol{KU}$~actions.

\subsection{Impact of Translation Features}
\label{sec:eval-encoding}

\begin{figure*}[t]
    \centering
    \includegraphics[width=\textwidth]{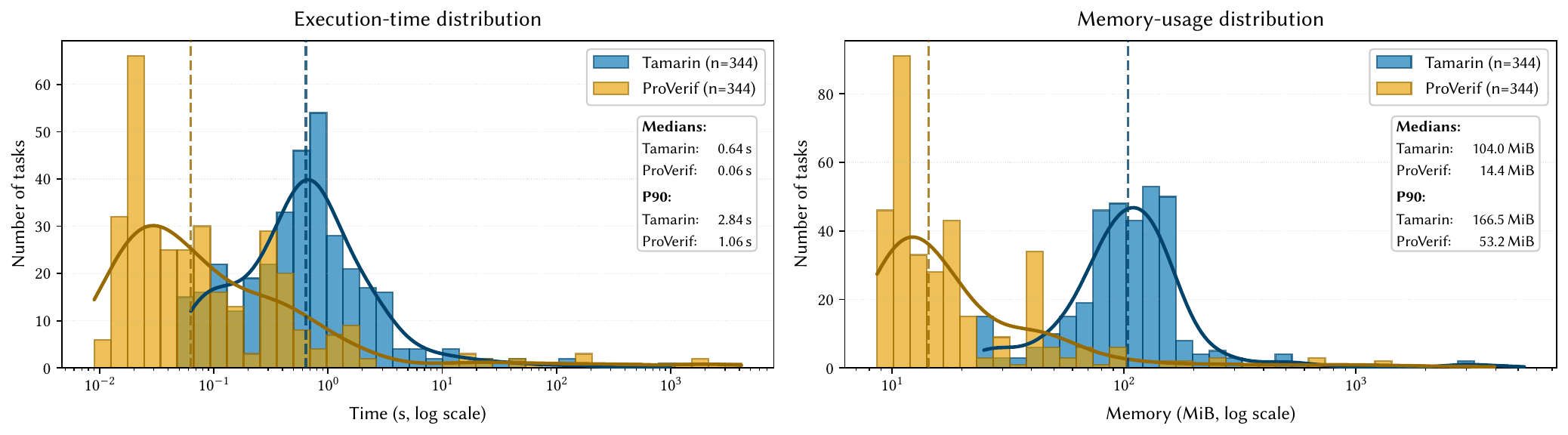}

    \caption{Performance on the $n=344$ tasks where Tamarin returns a Boolean result and ProVerif completes with a logical result.
    (Left) Execution time: ProVerif median \qty{0.063}{\second} and P90 \qty{1.06}{\second} vs.\ Tamarin median \qty{0.640}{\second} and P90 \qty{2.84}{\second}.
    (Right) Peak memory: ProVerif median \qty{14.4}{\mebi\byte} and P90 \qty{53.2}{\mebi\byte} vs.\ Tamarin median \qty{104.0}{\mebi\byte} and P90 \qty{166.5}{\mebi\byte}.}
    \label{fig:performance}
    \Description{Two distributions compare Tamarin and ProVerif on 344 completed task pairs. The execution-time and memory plots both place ProVerif at lower values for most tasks.}
\end{figure*}

We characterize the impact of several key features of both ProVerif and our translation.
For each feature comparison, we rerun all 566 tasks and assess verdict preservation on the tasks for which both evaluation settings return Boolean results.
First, we are interested in the effectiveness of our formula rewriting and encoding techniques for multiset rewriting and simultaneous events introduced in \cref{sec:translation}.
Second, we analyze the impact of ProVerif's precision settings as described in \cite{proverif-manual}.
Third, we investigate whether ProVerif benefits from helping lemmas written specifically for Tamarin.

\mypara{Encoding Techniques}
Among the 523 front-end-accepted tasks (420 all-trace, 103 exists-trace), 78 all-trace lemmas produce multiple ProVerif queries, including the correspondence-plus-uniqueness encoding of injective~agreement; no exists-trace lemma produces multiple queries.
The reconstructed result is inverted for 28 all-trace and 73 exists-trace lemmas.
Source-timepoint splitting is used in 45 all-trace and 13 exists-trace lemmas.
This shows that our rewriting and encoding techniques are widely applicable in practice.
To assess the impact of our multiset encoding, we compare the default evaluation setting with an otherwise identical run in which this encoding is disabled.
The two evaluation settings agree on 236 of 256 paired Boolean results.
In all 20 disagreements, Tamarin and ProVerif with the default setting verify a complete, non-XOR all-trace property, whereas ProVerif without the multiset encoding reports a counterexample.
These spurious counterexamples show that the encoding is necessary to enforce the linear consumption of facts and preserve the source MSR semantics.

\mypara{Precision in ProVerif}
By default, ProVerif's Horn clause abstraction ignores the number of repetitions of actions, potentially allowing multiple executions of constructs that should execute only once---for example, an input $\Pin(c, x)$ can be reused multiple times, leading to false attacks \cite[pp. 91]{proverif-manual}.
ProVerif provides the directive \texttt{set preciseActions = true} that ensures each action executes at most once per process instance \cite{CCT18}.
For translations from MSRs, where each $\In$ fact is inherently consumed exactly once, enabling this setting is obligatory to maintain semantic equivalence.

Disabling precision has asymmetric soundness implications: for all-trace lemmas, verified results remain sound (the overapproximation still holds), but for exists-trace lemmas, only \emph{disproven} results are reliable---verified traces may rely on input reuse not possible in Tamarin.
To assess the practical impact, we compare the default evaluation setting with an otherwise identical run that omits \texttt{preciseActions}.
The two evaluation settings agree on 206 of 208 paired Boolean results.
In both disagreements, Tamarin and ProVerif with \texttt{preciseActions} verify the property, whereas ProVerif without this directive reports a counterexample: \texttt{injective\_agree} in \texttt{NSLPK3} and \texttt{injectiveagreement\_A} in \texttt{Otway-Rees}.
Moreover, 105 Boolean results obtained with the default setting become inconclusive without precision.
Thus, \texttt{preciseActions} is required to prevent spurious input reuse and preserve the source MSR semantics, rather than being merely a proof-search optimization.

\mypara{Incomplete Models}
Of the 523 front-end-accepted tasks, 455 use complete trace models (366 all-trace and 89 exists-trace), while 68 omit at least one restriction (54 all-trace and 14 exists-trace).
Separately, 219 omit at least one helping lemma (184 all-trace and 35 exists-trace); the restriction and axiom categories overlap on 29 tasks.
Helping lemmas are translated to axioms and do not change process traces, so their omission is not a model-completeness failure, although it may affect proof search and termination.
Omitting a restriction can only enlarge the trace set.
Consequently, verified all-trace results and disproven exists-trace results remain conservative, whereas disproven all-trace results and verified exists-trace results may depend on the incomplete model.
Among the 68 accepted translations with a missing restriction, ProVerif disproves exactly one all-trace lemma and verifies no exists-trace lemma.
The sole non-conservative logical verdict is \texttt{dp3t::upload\_auth}, the only definitive non-XOR disagreement; we exclude it from the faithful comparison set.

\mypara{Helping Lemmas}
\begin{full}
We investigated whether ProVerif benefits from helping lemmas written for Tamarin, as both tools use very different reasoning methods.
We compare the default evaluation setting with a run that omits translated helping lemmas.
The two evaluation settings agree on all 313 paired Boolean results.
Among resource-bound results, 18 out-of-memory~\mbox{results} become timeouts and six timeouts become out-of-memory~\mbox{results}; no task crosses between a logical and resource-bound outcome.
\end{full}
\begin{conf}
Across all 313 paired Boolean results, omitting translated Tamarin helping lemmas changed no verdict, and no task crossed between logical and resource-bound outcomes.
The only shifts are among resource-bound results, where 18 out-of-memory~\mbox{results} become timeouts and six timeouts become out-of-memory~\mbox{results}.
\end{conf}

Overall, these feature comparisons show that the multiset encoding and \texttt{preciseActions} are required for Boolean verdicts to soundly reflect the source MSR model, whereas Tamarin-specific helping lemmas do not improve ProVerif's Boolean results or help it terminate.

\subsection{Performance Characterization}
\label{sec:eval-performance}

Each lemma and query was verified independently in a containerized environment.
All tasks were evaluated with a timeout of at most \qty{7200}{\second} and a memory limit of at most \qty{96}{\gibi\byte}.
Tamarin used 12 configured CPUs per task.
ProVerif is single-threaded and therefore used one CPU per task.
We measured the total wall clock running time and peak memory usage during verification.
In a separate translation-overhead measurement, translating all 121 models took about \qty{72}{\second} in total, with a median translation time of \qty{0.22}{\second} per model.
\Cref{fig:performance} shows a comparison between Tamarin's and ProVerif's performance across the evaluation.

\begin{full}
The Lowe case study demonstrates the practical value behind these aggregate numbers.
The model has 206 lines and 7 rules, comes from the SPORE protocol library, and captures Lowe's fix of the Andrew Secure RPC protocol.
A systematic scan found 237 complete, agreeing, Boolean, non-XOR tasks; Lowe is a representative published case study with security-meaningful agreement lemmas.
As shown in \cref{tab:lowe-as-case-study}, ProVerif is $\num{64.7}\times$ faster across all five agreeing all-trace lemmas, well above the $\num{6.74}\times$ median per-task speedup reported below.
The correspondence-plus-uniqueness encoding also translates both injective-agreement lemmas, and the fixed ProVerif build verifies both in agreement with Tamarin.
The \texttt{injectiveagreement\_B} row is $\num{129.6}\times$ faster and saves \qty{21.863}{\second}.
Across the five displayed lemmas, the recorded totals are \qty{0.463}{\second} for ProVerif and \qty{29.937}{\second} for Tamarin.
The exists-trace lemma is excluded from that total because Tamarin returns true while ProVerif is logically inconclusive.
\end{full}
\begin{conf}
The published Lowe case study from the SPORE library, which captures Lowe's fix of Andrew Secure RPC, is representative of 237 complete, agreeing, Boolean, non-XOR tasks.
One lemma is $\num{129.6}\times$ faster and saves \qty{21.863}{\second}.
Across its five agreeing all-trace lemmas in \cref{tab:lowe-as-case-study}, including both injective-agreement lemmas, ProVerif takes \qty{0.463}{\second} versus Tamarin's \qty{29.937}{\second}, a $\num{64.7}\times$ speedup.
The exists-trace lemma is excluded because ProVerif is logically inconclusive.
\end{conf}

Among the 344 tasks in \cref{fig:performance}, ProVerif is faster in 316 cases (\qty{91.9}{\percent}).
The median per-task time ratio is $\num{6.74}\times$, and the median per-task memory ratio is $\num{6.13}\times$.
The marginal median runtime is \qty{0.063}{\second} for ProVerif versus \qty{0.640}{\second} for Tamarin; for peak memory, it is \qty{14.4}{\mebi\byte} versus \qty{104.0}{\mebi\byte}.
The corresponding runtime P90 values are \qty{1.06}{\second} versus \qty{2.84}{\second}; the memory P90 values are \qty{53.2}{\mebi\byte} versus \qty{166.5}{\mebi\byte}.
Of the 28 Tamarin-faster tasks, 21 exercise its auto-source feature~\cite{auto-sources}; the other seven include injective and reachability queries, so the advantage is workload-dependent rather than confined to one feature family.
ProVerif's consistently lower resource needs make it well-suited for rapid prototyping.

\FloatBarrier

\section{Conclusion}
\label{sec:conclusion}

In this work, we presented a sound translation from Tamarin to ProVerif that enables systematic comparative analysis of these prominent verification tools.
With our techniques for formula rewriting, encoding multiset rewrite semantics, and handling simultaneous events, we bridged key formalism gaps,
revealing both the similarities and fundamental differences between
the tools.

Our comparative analysis identified a substantial common core logic fragment.
Across 121 models, 523 of 566 lemma tasks (\qty{92.4}{\percent}) have executable translations accepted by the ProVerif front end.
The tools agree on 237 of 238 definitive non-XOR pairs; the sole exception is the flagged \texttt{dp3t} verdict on an incomplete model.
Best-effort XOR is separate, with 32 agreements among 62 definitive pairs.
ProVerif returns true for 13 all-trace tasks on which Tamarin runs out of memory; these resource-bound Tamarin cases are not counted as definitive cross-tool comparisons.
These results expose the trade-off between ProVerif's speed and Tamarin's expressiveness and provide a methodology for combining backends within explicit fragment boundaries.

\begin{acks}

We thank Julian Biehl for the foundational rule-translation design and implementation in his Master's thesis~\cite{msc-biehl}.

This paper was edited for grammar using ChatGPT.
\end{acks}

\printbibliography

\appendix
\crefalias{section}{appendix}
\crefalias{subsection}{appendix}
\crefalias{subsubsection}{appendix}

\section{Open Science}
\label{sec:open-science}

\begin{conf}
The companion artifact provides the complete 566-task inventory and primary-run results, including outcomes, execution times, and memory usage for both tools.
The artifact is available at:
\begin{center}
  \url{https://doi.org/10.5281/zenodo.21827126}
\end{center}
It includes scripts for running tasks, merging results, reconstructing the primary-run statistics, and regenerating~\cref{fig:performance}, together with a Docker recipe that builds the evaluation image from public sources using the exact ProVerif revision used in our experiments.
The included \texttt{README.md} documents the data formats, evaluation procedure, and analysis commands.
The supplied JSONL file directly reconstructs the primary outcome and performance statistics, while the Docker recipe and the evaluation script's options support reproducing the feature-ablation results by rerunning the tasks.
The implementation and evaluation scripts are GPL-3.0 licensed, and the evaluation data are released under CC0-1.0.
\end{conf}

\begin{full}
We provide a companion artifact containing the primary evaluation data, analysis and runner scripts, and a Docker recipe for building the evaluation image from public sources.
The artifact is available at:
\begin{center}
  \url{https://doi.org/10.5281/zenodo.21827126}
\end{center}

\mypara{Artifacts Provided}
Our artifact includes:
\begin{enumerate*}[label=(\arabic*)]
  \item the 566-task evaluation inventory extracted from Tamarin's protocol examples;
  \item complete primary-run evaluation results with outcomes, execution times, and memory usage for both tools;
  \item Python and shell scripts for running tasks, merging results, reconstructing the primary-run statistics, and regenerating the performance figure; and
  \item a Docker recipe that obtains Tamarin from its public \texttt{develop} branch and builds the exact ProVerif revision used in our experiments.
\end{enumerate*}

\mypara{Reproducibility}
The included \texttt{README.md} documents the dataset and result formats, the evaluation procedure, and the analysis commands.
The primary outcome and performance claims can be reconstructed immediately from the supplied JSONL file.
The supplied Docker recipe builds the evaluation environment, and the evaluation script's options enable the feature-ablation results to be reproduced by rerunning the tasks from public sources.

\mypara{Licensing}
The implementation and evaluation scripts are GPL-3.0 licensed.
The evaluation inventory and result data are released under CC0-1.0.
\end{full}

\begin{full}
\section{Soundness and Completeness Proofs}
\subsection{Auxiliary Definitions}
\label{sec:app-aux-defs}

\newcommand{\pv}{\mathit{pv}}

Let $\pvconf$ be a ProVerif configuration and \textsc{Red} be one of ProVerif's reduction rules (\cite[Figure~8]{sapicplus-full}).
We denote the transition from configuration $\pvconf$ to configuration $\pvconf'$ with the reduction rule $\textsc{Red}$ and label $l$ by
\begin{equation*}
	\pvconf \withpv{l}_{\textsc{Red}} \pvconf',
\end{equation*}
where $l$ may be omitted depending on the rule.
Similarly, for MSR states, we write $\mss \withmsr{a}_{\ri} \mss'$ for a rewriting step from $\mss$ to $\mss'$ with the ground rule instance $\ri = \ru\sigma \in \ginsts_\ET(\allrules)$ of a rule $\ru$ under a grounding substitution $\sigma$.

In Tamarin, the attacker knowledge consists of the terms $m$,
for which there are $!\K(m)$ facts in the current MSR state.
During execution, the attacker can raise a $\K(m)$ fact to show that it has explicit knowledge of the term $m$,
i.e., that it has deduced it at some previous timepoint.
We can use such facts in trace formulas to ask if the attacker has raised such an event at a given time.
In ProVerif, the attacker knowledge consists of the terms contained in the set $\Att$,
which is part of the process configuration.
We can ask if the attacker has knowledge of a term $m$ through $\Patt{m}$,
but, as mentioned before, $\Att$ also contains messages that are deducible.

\begin{definition}[Attacker knowledge in ProVerif]
	The adversary \textit{can deduce} $m$ from a process configuration $\pvconf$ as defined in \cite[Sec.~2.3.1]{proverifoverhaul} if
	there exists a configuration
$\pvconf' = (\E', \Pro, \Tbl, \Att')$
with $m \in \Att'$ reachable from $\pvconf$ by only applying \textsc{App} and \textsc{New}.
\end{definition}
Based on this notion, we define two functions: $\ded(\cdot)$ and $\kn(\cdot)$.
For a ProVerif configuration $\pvconf$, $\ded(\pvconf)$ returns all the terms deducible by the adversary from $\pvconf$.
For an MSR state $\mss$, $\kn(\mss) = \set{ m \given !\K(m) \in \mss }$ returns the set of all messages known to the adversary in the MSR state $\mss$.

Let $\ri \in \ginsts_\ET(\rules)$ be a ground instance of an MSR.
We define the set of ProVerif table entries $\inserts(\ri)$ as follows
\begin{equation*}
	\set*{ \factsymbol{F}_{tbl}(t_1, \ldots, t_n) \given \factsymbol{F}(t_1, \ldots, t_n) \in \concls(\ri)|_{\Storage}}\,,
\end{equation*}
where $\factsymbol{F}_{tbl}$ is a ProVerif table corresponding to the fact symbol $\factsymbol{F}$.

After the successful execution of the process $\tran{\ri}$,
$\inserts(\ri) \subseteq \Tbl$ holds for the resulting configuration,
which corresponds to the MSR state after applying $\ri$.

\begin{definition}[Applicability of a translated process]
	\label{def:tran-applicable}

	Let $\pvconf$ be a ProVerif configuration and $r \in \rules$ be a rule with grounding substitution $\sigma$.
	Then $\tran{r \sigma}$ is applicable in $\pvconf$ if all sub-processes of the premise (the first two lines in \cref{def:tran-rule}) of $\tran{r \sigma}$ can be executed starting from $\pvconf$:
	\begin{enumerate}[label=(\arabic*)]
		\item All ground instances of the premises have a corresponding table entry.
		\item The attacker has explicit knowledge of all terms it needs to input to the process.
	\end{enumerate}
\end{definition}

\subsection{Satisfaction Relation}
\label{sec:app-satisfaction}

We introduce a satisfaction $\models_\ET$ w.r.t. an equational theory $\ET$.
For a trace $tr$, let $\idx(tr) \defas \set{1,\ldots,|tr|}$ denote its index set.

\begin{definition}[Valuation]
  Let $\mathbf{D}_s$ be the domain associated to sort $s$; we set $\mathbf{D}_{\msgsort}$ to the set of ground terms and $\mathbf{D}_{\tempsort}$ to the set of trace indices $\mathbb{N}$.
  A function $\theta$ from $\Vars$ to $\mathbf{D}_{\msgsort} \cup \mathbf{D}_{\tempsort}$
  is called a valuation if it respects sorts, i.e., $\theta(\mathcal{V}_s) \subseteq \mathbf{D}_s$ for all sorts $s$.
  For a term $t$, $t\theta$ denotes the application of the homomorphic extension of $\theta$ to $t$
  where variables outside the domain of $\theta$ remain unchanged.
\end{definition}

\begin{definition}[Satisfaction relation]
	\label{def:satisfaction-rel}
	We define the satisfaction relation $(tr, \theta) \models_\ET \varphi$ between a trace $tr$, a valuation $\theta$, and a formula $\varphi$ inductively as follows:
	the remaining connectives ($\lor$, $\Rightarrow$, $\forall$) and the derived comparisons $i \dotleq j$ (i.e., $i \dotless j \lor i \doteq j$) and $i \not\doteq j$ are defined as usual.
	\begin{align*}
		(\tr, \theta) & \models_\ET \bot                               &  & \text{never}                                                                                                                                                        \\
		(\tr, \theta) & \models_\ET \factsymbol{F}@i                   &  & \text{if $\theta(i) \in \idx(\tr)$ and $\factsymbol{F}\theta \in_\ET \tr_{\theta(i)}$}                                                                     \\
		(\tr, \theta) & \models_\ET i \dotless j                       &  & \text{if $\theta(i) < \theta(j)$}                                                                                                                                \\
		(\tr, \theta) & \models_\ET i \doteq j                         &  & \text{if $\theta(i) = \theta(j)$}                                                                                                                                \\
		(\tr, \theta) & \models_\ET t_1 = t_2                          &  & \text{if $t_1\theta =_\ET t_2\theta$}                                                                                                                                \\
		(\tr, \theta) & \models_\ET \neg{\varphi}                      &  & \text{if not $(\tr, \theta) \models_{\ET} \varphi$}                                                                                                                       \\
		(\tr, \theta) & \models_\ET \varphi \land \psi                &  & \text{if $(\tr, \theta) \models_{\ET} \varphi$ and $(\tr, \theta) \models_{\ET} \psi$}                                                                                       \\
		(\tr, \theta) & \models_\ET \exists \vec{x} : \vec{s}. \varphi &  & \begin{aligned}[t]
			\text{if } & \exists \vec{u} \in \mathbf{D}_{s_1} \times \cdots \times \mathbf{D}_{s_{|\vec x|}} . \\
			           & (\tr, \theta\left[\vec{x} \mapsto \vec{u}\right]) \models_{\ET} \varphi
		\end{aligned}
	\end{align*}
\end{definition}

For ProVerif, we extend the above definition to executions of ProVerif processes.
For all but the following additional case, we project the execution to the corresponding trace.
\begin{definition}[Satisfaction relation for ProVerif]
	The satisfaction relation $\satpv$ is defined by extending \cref{def:satisfaction-rel} to ProVerif executions.
	For an execution $e = \pvconf[0] \withpv{l_1} \pvconf[1] \cdots \withpv{l_n} \pvconf[n]$, we add the following case:
	\begin{equation*}
		\begin{aligned}
			(e, \theta) \satpv \Patt{m}@i \quad \text{if }&
			\theta(i) \in \{1,\ldots,n\}\text{ and}\\
			&m\theta \in \ded(\pvconf[\theta(i)]).
		\end{aligned}
	\end{equation*}
\end{definition}

Similarly to \textcite{tamarin12}, for closed formulas $\varphi$,
we overload notation and write $tr \models_\ET \varphi$ if $(tr, \theta) \models_\ET \varphi$ for all $\theta$.
In the following, we refer to the unfiltered traces as \emph{strong} traces, in contrast to the weak traces of \cref{def:traces}.
Next, we recall a lemma from~\cite[Appendix C, Section E, Lemma 19]{tamarin12}.
\begin{lemma}
	\label{lem:msrwtrtr}
	For all equational theories $\ET$, strong MSR traces $tr_{msr}$, corresponding weak MSR traces $wtr_{msr}$ and closed formulas $\varphi$:
	\[
		tr_{msr} \sattam \varphi \Leftrightarrow wtr_{msr} \sattam \varphi
	\]
\end{lemma}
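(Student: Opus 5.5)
The lemma is the one cited from \cite[Appendix C, Section E, Lemma 19]{tamarin12}. Weak and strong traces differ only by the empty action multisets $\emptyset$ that the strong trace contains and the weak trace omits. So I would argue directly by structural induction on $\varphi$, relating valuations of the two traces through an order-preserving index embedding. Let $tr_{msr} = a_1\cdots a_n$ and let $wtr_{msr}$ be obtained by deleting every $a_k=\emptyset$. Define the strictly monotone bijection
\[
  \iota\colon \set{k \in \idx(tr_{msr}) \given a_k \neq \emptyset} \to \idx(wtr_{msr}),
\]
so that $(wtr_{msr})_{\iota(k)} = (tr_{msr})_k$.

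The core of the proof is a stronger statement about valuations, from which the lemma follows. For a formula $\varphi$, call a valuation $\theta$ \emph{admissible for} $tr_{msr}$ if it maps the temporal variables that are free in $\varphi$ into the nonempty positions. For admissible $\theta$, let $\iota\circ\theta$ agree with $\theta$ on message variables and apply $\iota$ to temporal variables. The claim is that for admissible $\theta$,
\[
  (tr_{msr},\theta)\sattam\varphi \iff (wtr_{msr},\iota\circ\theta)\sattam\varphi .
\]

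I would check this for each case.
\begin{itemize}
\item The cases $\bot$ and $t_1=t_2$ are unchanged, since message variables are not affected.
\item For an action atom $\fact{F}@i$: the position $\theta(i)$ is nonempty and $\iota$ preserves the action set there.
\item For $i\dotless j$ and $i\doteq j$: $\iota$ is strictly monotone and injective, so both relations are preserved.
\item The cases $\neg$ and $\land$ follow directly from the induction hypothesis.
\end{itemize}

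The main obstacle is the quantifier case $\exists i.\,\psi$. A witness may sit at an empty position of $tr_{msr}$, or even outside $\idx$, and then no admissible valuation is available. Here I would rely on the guardedness that Tamarin imposes: closed formulas are guarded trace properties. In $\exists\vec x.\,\fact{F}@i\land\psi$, the guard forces $\theta(i)$ to index a position containing $\fact{F}$, which is necessarily a nonempty position. Universal quantifiers are handled dually through $\forall\vec x.\,\neg\fact{F}@i\lor\psi$: instantiations that hit empty positions make the guard false and are satisfied trivially. Every temporal variable occurs in a guard atom, so restricting to admissible valuations loses nothing. In the other direction, $\iota^{-1}$ pulls weak witnesses back to strong ones.

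For a closed $\varphi$ there are no free temporal variables, so every valuation is admissible, and the lemma follows. I would remark that if arbitrary unguarded formulas were allowed, timepoint arithmetic such as "no index between $i$ and $j$" could distinguish the two traces. That is why the proof must use the guarded normal form, which Tamarin applies to every lemma, as noted in \cref{sec:background}.
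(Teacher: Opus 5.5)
Your argument is correct, but it proves a narrower statement than the one written, and it takes a different route from the one the paper relies on. The paper has no proof of its own. It imports Lemma~19 of the Tamarin report, which is proved by induction on the number of silent steps. In Tamarin's original semantics, temporal variables range over $\mathbb{Q}$, and there guardedness plays no role. Any order-preserving bijection from the nonempty positions of $tr_{msr}$ onto the index set of $wtr_{msr}$ extends to an automorphism of $(\mathbb{Q},<)$. That automorphism necessarily sends every other rational, empty positions included, to a non-index. Since $\fact{F}@i$ holds only at an index carrying $\fact{F}$, the automorphism preserves satisfaction of every formula, guarded or not. Over $\mathbb{Q}$ your distinguishing formula is harmless, because $\neg\exists k.\,(i \dotless k \land k \dotless j)$ is false whenever $i \dotless j$.

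The paper, however, sets $\mathbf{D}_{\tempsort}=\mathbb{N}$. Under that definition your remark is a genuine counterexample to the lemma as stated: $\exists i\,j.\ \fact{A}@i \land \fact{B}@j \land i \dotless j \land \neg\exists k.\,(i \dotless k \land k \dotless j)$ holds on the weak trace $(\{\fact{A}\},\{\fact{B}\})$ but fails on the strong trace $(\{\fact{A}\},\emptyset,\{\fact{B}\})$. Your single embedding $\iota$ with admissible valuations is precisely what survives over $\mathbb{N}$, and your case analysis goes through. It in fact needs guards only on \emph{temporal} quantifiers, since $\iota$ leaves message variables untouched. So the $\mathbb{Q}$ route gives the lemma for all closed formulas, but only after changing the paper's temporal domain to Tamarin's. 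Your route works with the paper's definitions as written, but only for temporally guarded formulas.

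State your result explicitly as a lemma about guarded trace properties. Do not present guardedness as following from closedness: ``closed formulas are guarded trace properties'' is an added hypothesis, not a consequence of the statement. If you keep that version, also check the formulas to which the paper later applies the lemma. These are the simultaneity-free formulas in \cref{cor:msr-to-pv-satisfaction} and the negated T-UKF components in the soundness proof. They must be temporally guarded: every bound time variable must be introduced together with its trace atom, not occur only in ordering constraints.
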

We also adapt the above lemma to ProVerif. The statement follows by the same proof.
\begin{lemma}
	\label{lem:pvwtrtr}
	For all equational theories $\ET$, strong ProVerif traces $tr_{pv}$, corresponding weak ProVerif traces $wtr_{pv}$ and closed formulas $\varphi$ that do not contain $\PattC$ facts:
	\[
		tr_{pv} \satpv \varphi \Leftrightarrow wtr_{pv} \satpv \varphi
	\]
\end{lemma}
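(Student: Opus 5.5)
The claim (\cref{lem:pvwtrtr}) is the ProVerif analogue of \cref{lem:msrwtrtr}, and I would prove it by transferring the argument of~\cite{tamarin12} to ProVerif traces. Write $tr_{pv}=l_1\cdots l_n$ for the strong trace. The weak trace $wtr_{pv}$ is obtained by deleting the positions with $l_k=\epsilon$, that is, positions whose lifted label is the empty multiset. Let $\pi\colon \idx(wtr_{pv})\to\idx(tr_{pv})$ be the strictly monotone map sending the $k$-th weak position to the index of the $k$-th nonempty label. Then $\pi$ is an order embedding, and it is a bijection onto the set of indices carrying a nonempty label.

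I would prove a slightly stronger statement by structural induction on $\varphi$, for all valuations. The invariant is: for every valuation $\theta$ of $wtr_{pv}$ whose timepoint variables lie in $\idx(wtr_{pv})$, we have $(wtr_{pv},\theta)\satpv\varphi$ iff $(tr_{pv},\pi\circ\theta)\satpv\varphi$, where $\pi\circ\theta$ applies $\pi$ only to timepoint variables. This direct form needs a guardedness assumption on timepoints. Otherwise a formula such as $\exists i.\,\neg(\exists j.\,\fact{F}@j\land j\doteq i)$ can distinguish the two traces, because it can pick an $\epsilon$-position in the strong trace. I would therefore argue exactly as \textcite{tamarin12} do. Either the formulas are guarded trace properties, where every quantified timepoint is bound by a trace atom and hence lands on a nonempty position, or one uses their observation that the two-sorted semantics is insensitive to inserting empty positions. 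In the latter reading, the induction carries a valuation on the strong trace. Timepoint variables mapped to $\epsilon$-positions can then be shifted to an adjacent nonempty position, or handled by monotone reindexing, without changing the truth of atoms.

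The cases are as follows. For $\bot$ and $t_1=t_2$, nothing depends on timepoints. For $\fact{F}@i$, an event can only hold at a nonempty position, and $\tr_{\pi(k)}=wtr_k$. For $i\dotless j$ and $i\doteq j$, the claim follows because $\pi$ is strictly monotone and injective. The cases $\neg$ and $\land$ are immediate from the induction hypothesis. For $\exists$ over message variables, the valuation ranges over ground terms, which are the same for both traces. For $\exists$ over a timepoint variable, the guardedness of Tamarin-style trace properties (the guard atom $\fact{F}@i$) forces the witness to a nonempty position. There it corresponds under $\pi$, and this is the only case that needs real care.

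The hypothesis excluding $\PattC$ is essential. The ProVerif-specific case $\Patt{m}@i$ is evaluated on the configuration $\pvconf[\theta(i)]$, not on the label. Attacker knowledge can change at $\epsilon$-steps (for example, outputs and \textsc{App}), so the induction for this atom would fail, and the statement explicitly excludes it. With $\PattC$ excluded, satisfaction depends only on the label sequence, so projecting the execution to its trace is legitimate. The main obstacle is the timepoint-quantifier case. I would first try to rely on the closedness and guardedness of the considered formulas, as in \cref{lem:msrwtrtr}, and cite that proof verbatim, noting that ProVerif's singleton-lifted labels fit Tamarin's multiset-label setting.
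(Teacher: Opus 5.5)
Your proposal follows the paper's route: the paper's entire proof is the remark that the argument behind \cref{lem:msrwtrtr} from~\cite{tamarin12} carries over verbatim, and your induction along the order embedding $\pi$, with existential timepoint witnesses forced onto nonempty positions by their guards, is a correct elaboration of exactly that. Your guardedness caveat is a genuine refinement of the statement rather than a gap: with $\mathbf{D}_{\tempsort}=\mathbb{N}$, the closed formula $\exists i\,j\,k.\ \fact{F}@i\land\fact{H}@k\land i\dotless j\land j\dotless k$ holds on $(\{\fact{F}\},\emptyset,\{\fact{H}\})$ but fails on $(\{\fact{F}\},\{\fact{H}\})$, whereas your own example is true on both traces because $i$ can be chosen outside the index set; however, drop the fallback of shifting $\epsilon$-positioned timepoints to an adjacent nonempty position, since this can falsify $\neg\fact{F}@j$ or collide with other timepoints, and monotone reindexing only rescues unguarded formulas over a dense domain such as $\mathbb{Q}$.
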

The condition that $\varphi$ does not contain $\PattC$ facts can be relaxed,
but this form of the statement is sufficient for our purposes.

\begin{definition}[Silencing facts in traces]
	Let $tr$ be an MSR or ProVerif trace and $\fact{F}$ a fact symbol.
	We define the trace $tr_{\neq \fact{F}}$ as the trace $tr$ in which the facts with symbol $\fact{F}$ have been made silent,
	i.e., every such fact is removed from the multiset at each index of $tr$.
\end{definition}

\begin{lemma}
	\label{lem:msrktr}
	For all equational theories $\ET$, Tamarin traces $tr$, corresponding Tamarin traces $tr_{\neq \K}$ and closed formulas $\varphi$ that do not contain $\K$ facts:
	\[
		tr \sattam \varphi \Leftrightarrow tr_{\neq \K} \sattam \varphi
	\]
\end{lemma}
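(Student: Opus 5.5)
The plan is a structural induction on $\varphi$, proving a stronger statement with an arbitrary valuation: for every valuation $\theta$ and every (not necessarily closed) formula $\varphi$ without $\K$ facts,
\[
(tr,\theta)\sattam\varphi \Leftrightarrow (tr_{\neq\K},\theta)\sattam\varphi .
\]
The closed-formula version then follows by quantifying over all $\theta$, using the overloaded notation $tr\models_\ET\varphi$.

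The key structural observation comes first. Silencing only deletes facts from the multisets at each index; it never deletes indices. Hence $\idx(tr)=\idx(tr_{\neq\K})$, and for every index $k$ we have $(tr_{\neq\K})_k = tr_k$ with all $\K$-facts removed.

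The induction then runs case by case following \cref{def:satisfaction-rel}.

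\emph{Base cases.}
\begin{itemize}
\item $\bot$, timepoint comparisons $i\dotless j$ and $i\doteq j$, and term equalities $t_1=t_2$ depend only on $\theta$ and $\ET$, not on the trace. Both sides therefore coincide trivially.
\item For an atom $\factsymbol{F}@i$ with $\factsymbol{F}\neq\K$: the condition $\theta(i)\in\idx(tr)$ is unchanged. Moreover, $\factsymbol{F}\theta\in_\ET tr_{\theta(i)}$ holds iff $\factsymbol{F}\theta\in_\ET (tr_{\neq\K})_{\theta(i)}$. This is because the removed facts all carry the symbol $\K$, and $=_\ET$-equality of facts requires equal fact symbols (equations act on terms, not on fact symbols).
\end{itemize}

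\emph{Inductive cases.}
\begin{itemize}
\item Negation and conjunction follow directly from the induction hypothesis.
\item Existential quantification follows because the domains $\mathbf{D}_{\msgsort}$ and $\mathbf{D}_{\tempsort}=\mathbb{N}$ are independent of the trace. So for each witness $\vec u$ we apply the induction hypothesis to $\theta[\vec x\mapsto\vec u]$. The hypothesis is stated for all valuations, which is why we strengthened the claim.
\item The derived connectives are covered since they are abbreviations.
\end{itemize}

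The only step needing care, which I expect to be the main (minor) obstacle, is the atom case. There one must justify that membership modulo $\ET$ in a multiset is unaffected by removing facts of a different symbol. That is, $\in_\ET$ means that some element of the multiset is $\ET$-equal to the instantiated fact, and $\ET$ only relates facts with identical head symbols. A second subtlety is index preservation: if silencing were instead read as deleting empty positions, timepoints would shift. In that case I would argue that the definition explicitly keeps the multiset at each index (possibly empty), so indices are preserved. Alternatively, one could reduce to \cref{lem:msrwtrtr} by noting that strong and weak traces satisfy the same closed formulas.
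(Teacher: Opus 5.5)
Your proof is correct, but it takes a different route from the paper. The paper gives only a sketch: it reuses the proof of \cref{lem:msrwtrtr} (the strong-versus-weak-trace lemma imported from the cited Tamarin work), replacing induction on the number of silent steps by induction on the number of $\K$ transitions in $tr$. That is an induction over the trace, eliminating one $\K$ step at a time and inheriting that lemma's timepoint bookkeeping. You instead use that silencing, as defined, deletes $\K$-facts from the multisets but keeps every position. Hence $\idx(tr)=\idx(tr_{\neq\K})$, no valuation ever needs adjusting, and a structural induction on $\varphi$, generalized to arbitrary valuations, suffices. The only substantive case is the atom case, and you settle it correctly by noting that $=_\ET$ on facts preserves fact symbols.

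Your route is elementary and self-contained, and it needs nothing about $\varphi$ beyond being $\K$-free. Because the temporal domain $\mathbb{N}$ is trace-independent and no index moves, even unguarded timepoint variables are harmless. An index-shifting argument, by contrast, requires every timepoint variable to be guarded by an action atom: an unguarded subformula $\exists k.\ i\dotless k\land k\dotless j$ detects deleted positions. The paper's route is aimed at the variant in which the emptied $\K$ steps are dropped altogether, as when $\K$-free weak traces are compared in \cref{th:wti}. Your closing remark recovers that variant cleanly as a composition: first silence via your argument, then remove the empty steps via \cref{lem:msrwtrtr}. This confines all index shifting to that single lemma.
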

To obtain a proof for the lemma, one needs to modify the proof of~\cref{lem:msrwtrtr}
by performing induction on the number of $\K$ transitions in $tr$,
instead of the number of silent actions in $tr$.

We can use the above definition and a slightly adapted proof of the previously cited lemma
to obtain the following result for formulas in SA-QBF
\begin{lemma}
	\label{lem:pvmesstr}
	For all equational theories $\ET$, ProVerif traces $tr$, corresponding ProVerif traces $tr_{\neq \msgC}$ and closed formulas $\varphi \in \text{SA-QBF}$:
	\[
		tr \satpv \varphi \Leftrightarrow tr_{\neq \msgC} \satpv \varphi
	\]
\end{lemma}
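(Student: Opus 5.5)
The statement to prove is \cref{lem:pvmesstr}: silencing all $\msgC$ labels in a ProVerif trace does not change satisfaction of any closed SA-QBF formula. I would follow the proof of \cref{lem:msrwtrtr}, as the text suggests, but induct on the number of $\msgC$ labels in $tr$ instead of the number of silent steps. It suffices to show the equivalence when exactly one $\msgC$ label at some index $k$ is silenced. Iterating this finitely many times then gives $tr_{\neq \msgC}$.

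The key observation is syntactic. By \cref{fig:proverif-trace-grammar}, SA-QBF trace atoms are event atoms $\fact{F}(\vec t)@i$ or attacker atoms $\Patt{t}@i$, together with term and timepoint comparisons; no atom refers to a $\msgC$ label.

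I would therefore prove a stronger statement for arbitrary (not necessarily closed) subformulas $\psi$ and valuations $\theta$: $(tr,\theta)\satpv\psi$ iff $(tr_{\neq\msgC},\theta)\satpv\psi$. The proof is by structural induction on $\psi$, mirroring the relevant clauses of \cref{def:satisfaction-rel}:
\begin{itemize}
\item Silencing a label only deletes that label from the singleton multiset at index $k$. Indices, the index set $\idx(tr)$, and the labels at all other positions are unchanged.
\item For an event atom, $\fact{F}\theta\in_\ET tr_{\theta(i)}$ is unaffected, because the removed label is not an event.
\item For $\Patt{m}@i$, satisfaction is defined on the execution via $\ded(\pvconf[\theta(i)])$. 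Silencing only changes the label projection, not the underlying configurations, so this case is immediate provided the silenced trace is understood as the same execution with relabelled transitions.
\item Timepoint and term comparisons depend only on $\theta$.
\item The connective and quantifier cases follow directly from the induction hypothesis. Timepoint quantifiers range over $\mathbb{N}$ in both traces, since the index set is preserved.
\end{itemize}
The closed case then follows by quantifying over all $\theta$.

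The main obstacle is fixing what $tr_{\neq\msgC}$ means for $\PattC$ atoms. Unlike \cref{lem:pvwtrtr}, the statement allows $\PattC$ facts. I would therefore make explicit that silencing is performed on executions, keeping configurations and indices and emptying only the label multiset. Deleting steps would shift indices and could break attacker atoms; that deletion is exactly what \cref{lem:pvwtrtr} handles separately, and it explains why that lemma excludes $\PattC$. With silencing read this way, no timepoint reindexing is needed, and the induction goes through uniformly. This is also why the restriction to SA-QBF suffices: it guarantees that no atom observes $\msgC$ labels.
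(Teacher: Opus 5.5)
Your proof is correct and follows the paper's route. The paper only says to adapt the proof of the strong/weak-trace lemma by inducting on the number of \texttt{mess} transitions instead of silent steps, and your outer induction combined with structural induction on the formula is exactly that adaptation. You also make explicit two points the paper leaves implicit: silencing keeps indices fixed, so no valuation reindexing is needed, and attacker atoms have to be evaluated on the underlying execution rather than on the projected trace.
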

To obtain a proof for the lemma, one needs to modify the proof of~\cref{lem:msrwtrtr}
by performing induction on the number of $\msgC$ transitions in $tr$,
instead of the number of silent actions in $tr$.

\begin{definition}[Satisfiability]
	Let $\mathit{Tr}$ be a set of traces and $\models_\ET$ a satisfaction relation.
	A formula $\varphi$ is \emph{satisfiable} for $\mathit{Tr}$, written $\mathit{Tr} \prescript{\exists}{}{\models_\ET} \varphi$, if
	\[
		\exists \, \tr \in \mathit{Tr},\: \theta. \; (\tr , \theta) \models_\ET \varphi\,.
	\]
\end{definition}
\begin{definition}[Validity]
	Let $\mathit{Tr}$ be a set of traces and $\models_\ET$ a satisfaction relation.
	A formula $\varphi$ is \emph{valid} for $\mathit{Tr}$, written $\mathit{Tr} \prescript{\forall}{}{\models_\ET} \varphi$, if
	\[
		\forall \, \tr \in \mathit{Tr},\; \theta. \; (\tr , \theta) \models_\ET \varphi\,.
	\]
\end{definition}

\subsection{Proof Overview}
\label{sec:prfoverview}

The proof establishes a weak simulation between the LTS induced by the MSR system $\rules$ and its ProVerif translation $\tran{\rules}$, where
\begin{enumerate*}[label=(\arabic*)]
  \item initial states are related, and
  \item each visible MSR transition corresponds to ProVerif transitions producing a lexicographically ordered trace.
\end{enumerate*}
Our trace comparison ignores Tamarin's $\K$ facts and ProVerif's $\msgC$ facts: the former are translated to $\PattC$ and compared via deducibility rather than trace inclusion, while the latter do not affect SA-QBF satisfaction (\cref{lem:pvmesstr}).
The weak simulation maintains a state equivalence relating Tamarin's $!\K$ facts to ProVerif's attacker knowledge, enabling soundness for T-UKF and the preservation of exists-trace properties for $\K$-free TSA-QBF formulas.

\subsection{Trace Comparison}
\label{subsec:trace-comparison}

To simplify the comparison between Tamarin and ProVerif traces,
we lift ProVerif traces to sequences of singleton sets of facts.
For Tamarin traces, we consider them as sequences of sets of facts since the multiset semantics cannot impact formula satisfaction.

Overloading notation, we write $\lex(tr_{msr})$ to denote the lifting of a Tamarin trace $tr_{msr}$ to a sequence of singleton sets of facts ordered lexicographically.
For example, with $tr_{msr} = (\{A, B\}, \{C\})$, we have $\lex(tr_{msr}) = (\{A\}, \{B\}, \{C\})$.
We also lift the function to sets of traces.

\begin{lemma}[Valuation Translation]
	\label{lem:val-translation}
  Let $tr = (a_1, \ldots, a_n)$ be a trace and $\varphi$ be a simultaneity-free formula (\cref{def:faithful-fragment}), i.e., each time variable of $\varphi$ indexes exactly one trace atom and no timepoint equality relates distinct trace atoms.
  The formulas produced by the lemma translation satisfy these conditions after the rewriting of \cref{sec:bridging-gaps} (see \cref{rem:instrumented}).
  Let $\sigma$ be a valuation that satisfies $\varphi$ on trace $tr$.
	Then there exists a valuation $\sigma'$ that satisfies $\varphi$ on trace $\lex(tr)$ defined as follows:
  
  For each time variable $i$ where $\fact{F}@i$ appears in $\varphi$ and $\sigma(i) = k$:
  \begin{equation*}
    \sigma'(i) = \sum_{\ell=1}^{k-1} |a_\ell| + \text{pos}_{\text{lex}}(\fact{F}, a_k)
	\end{equation*}
  where $\text{pos}_{\text{lex}}(\fact{F}, a_k)$ denotes the position of $\fact{F}$ in the lexicographic ordering of $a_k$.
\end{lemma}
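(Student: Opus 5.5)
The plan is a structural induction on $\varphi$, proving a stronger statement that tolerates free variables. Write $\lambda(k,\fact{F}) = \sum_{\ell<k}|a_\ell| + \text{pos}_{\text{lex}}(\fact{F},a_k)$ for the position of the fact $\fact{F}\in a_k$ in $\lex(tr)$. Since positions in $\lex(tr)$ are grouped into consecutive blocks, one block per original index, $\lambda$ is injective. It is also monotone across blocks: $k<k'$ implies $\lambda(k,\fact{F})<\lambda(k',\fact{G})$ for all facts $\fact{F}\in a_k$, $\fact{G}\in a_{k'}$. Conversely, every position $p$ of $\lex(tr)$ lies in a unique block $\beta(p)$, and $\lambda(\beta(p),\cdot)$ inverts the map on the facts of that block.

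For the induction, fix for each time variable $i$ of $\varphi$ its unique indexing atom $\fact{F}_i@i$; this is well defined by simultaneity-freeness. The message part of $\sigma'$ equals $\sigma$. The time part sends $i$ to $\lambda(\sigma(i),\fact{F}_i\sigma)$ whenever $\fact{F}_i\sigma\in_\ET a_{\sigma(i)}$. Otherwise it sends $i$ to an arbitrary position $p$ in the block $\sigma(i)$, or out of range if $\sigma(i)\notin\idx(tr)$. The claim is that $(tr,\sigma)\models\psi \iff (\lex(tr),\sigma')\models\psi$ for every subformula $\psi$. The biconditional is needed so that negation goes through. The lemma itself is the left-to-right direction for $\psi=\varphi$.

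\textbf{Atom cases.}
\begin{itemize}
\item Equations $t_1=t_2$ only involve message variables, so they are unchanged.
\item For $\fact{F}@i$: if $\fact{F}\sigma\in a_{\sigma(i)}$, then by construction the singleton at $\sigma'(i)$ is exactly $\{\fact{F}\sigma\}$. Otherwise no position in block $\sigma(i)$ carries $\fact{F}\sigma$, so the atom fails on both sides. Here $\fact{F}$ is the unique atom indexed by $i$.
\item For $i\dotless j$: monotonicity across blocks gives the equivalence when $\sigma(i)\neq\sigma(j)$.
\item For $i\doteq j$ between distinct atoms: this is excluded by hypothesis. When the atoms are the same, $i$ and $j$ are the same variable, so the equality is trivial.
\end{itemize}

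\textbf{The simultaneous case.} The remaining case, and the main obstacle, is $\sigma(i)=\sigma(j)$ with $i\neq j$, i.e., two distinct atoms placed in the same original block. Here $i\dotless j$ is false in $tr$, but the lexicographic placement could make it true in $\lex(tr)$. I would close this gap by strengthening the hypothesis, following the paper's appeal to \cref{rem:instrumented}: after the identifier rewriting of \cref{sec:bridging-gaps}, two atoms sharing a source index are represented by distinct time variables carrying identifier constraints, never by order comparisons between them. If that strengthening is unavailable, the fallback is to require that order atoms only compare variables whose images lie in distinct blocks, and to argue this from the guardedness of the formulas produced by the translation.

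\textbf{Quantifier and connective cases.} For $\exists i.\psi$ from left to right, take the witness $k$ and map it via $\lambda(k,\fact{F}_i\theta)$, or to an arbitrary position of block $k$. From right to left, pull a witness position $p$ back to $\beta(p)$; the equivalence then follows from the induction hypothesis with the updated valuations. This step uses that the definition of $\sigma'(i)$ depends only on $\sigma(i)$ and the message part of the valuation. Message quantifiers and the Boolean connectives are direct from the induction hypothesis.
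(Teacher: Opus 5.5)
You identify the right obstacle, but the proposal does not close it, and neither of your repairs can. \Cref{rem:instrumented} only concerns atoms that the \emph{formula} declares simultaneous, via a shared time variable or a timepoint equality. It does not remove a strict comparison $i\dotless j$ between two independently bound atoms. Whether such atoms land in one rule instance is decided by the valuation, not by the syntax. Guardedness does not restrict comparisons either. In fact the case cannot be closed from the stated hypotheses alone. Take $\exists i\,j.\ \fact{A}@i\land\fact{B}@j\land\neg(i\dotless j)$, with $\fact{A}$ lexicographically before $\fact{B}$:
\begin{itemize}
\item it is guarded and simultaneity-free;
\item it is the negation normal form of the negated TSA-QBF formula $\forall i\,j.\ \fact{A}@i\land\fact{B}@j\Rightarrow i\dotless j$, which is exactly the kind of input the soundness proof feeds to this lemma;
\item it holds on $tr=(\{\fact{A},\fact{B}\})$ with $i=j=1$, but fails on $\lex(tr)=(\{\fact{A}\},\{\fact{B}\})$.
\end{itemize}
So your biconditional is false at $i\dotless j$ when $\sigma(i)=\sigma(j)$, and through negation even the forward direction breaks. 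What is missing is a polarity hypothesis: order atoms may occur only positively in the negation normal form of $\varphi$. Under it you need only the forward direction. A satisfied $i\dotless j$ forces $\sigma(i)<\sigma(j)$, and your cross-block monotonicity finishes the case. That is the paper's argument. It does not induct over $\varphi$; it checks that trace atoms land on their positions and that each constraint $i<j$ \emph{satisfied} by $\sigma$ survives, via $\sigma'(i)\le\sum_{\ell\le k_1}|a_\ell|<\sum_{\ell<k_2}|a_\ell|+1\le\sigma'(j)$. It therefore never meets the same-block case.

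The quantifier step has a second gap. Pull back a position $p$ of $\lex(tr)$ to $k=\beta(p)$. Your scheme can assign $p$ as the image of $k$ only if $\fact{F}_i\theta\notin a_k$ or $p=\lambda(k,\fact{F}_i\theta)$. Otherwise $\fact{F}_i@i$ is true at $k$ in $tr$ but false at $p$ in $\lex(tr)$. The induction hypothesis then says nothing about $\theta'[i\mapsto p]$, and your remark that $\sigma'(i)$ depends only on $\sigma(i)$ does not address this. Guardedness is what you need here. A witness of a guarded $\exists$ must satisfy its guard, so it is canonical. At a non-canonical position the negated guard of a guarded $\forall$ holds, so the body is satisfied trivially.
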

\begin{proof}
  For any $\fact{F}@i$ in $\varphi$ with $\sigma(i) = k$ we have $\fact{F} \in a_k$ by definition of $\sigma$.
  By the definition of $\sigma'$ this places $\fact{F}$ at position $\sigma'(i)$ in $\lex(tr)$.

  For any constraint $i < j$ in $\varphi$ where $\sigma(i) = k_1$ and $\sigma(j) = k_2$:
  \begin{itemize}
      \item Since $\sigma$ satisfies $i < j$, we have $k_1 < k_2$
      \item $\sigma'(i) \leq \sum_{\ell=1}^{k_1} |a_\ell|$
      \item $\sigma'(j) \geq \sum_{\ell=1}^{k_2-1} |a_\ell| + 1$
      \item Since $k_1 < k_2$, we have $\sigma'(i) < \sigma'(j)$
  \end{itemize}
  Timepoint disequalities $i \neq j$ are preserved since satisfying them requires $k_1 \neq k_2$, and timepoint equalities relate only occurrences of the same atom, which are mapped to the same position.
Therefore, $\sigma'$ preserves all action occurrences and temporal orderings, ensuring $\varphi$ is satisfied on $\lex(tr)$.
\end{proof}

\begin{corollary}
	\label{lem:lex-satisfaction}

	Let $\mathit{Tr}$ be a set of traces and $\varphi$ be a closed formula satisfying the conditions of \cref{lem:val-translation}.
	If $\mathit{Tr} \prescript{\forall}{}{\models_{\ET}} \varphi$, then $\lex(\mathit{Tr}) \prescript{\forall}{}{\models_{\ET}} \varphi$.
\end{corollary}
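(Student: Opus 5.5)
The plan is to reduce the statement to a single trace and then prove a two-directional transfer of satisfaction between $tr$ and $\lex(tr)$.

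\textbf{Reduction to one trace.} Since $\varphi$ is closed, $(tr,\theta)\models_\ET\varphi$ does not depend on $\theta$. So $\mathit{Tr}\prescript{\forall}{}{\models_\ET}\varphi$ says exactly that $tr\models_\ET\varphi$ for every $tr\in\mathit{Tr}$. It therefore suffices to show $tr\models_\ET\varphi \Rightarrow \lex(tr)\models_\ET\varphi$ for a single trace $tr=(a_1,\ldots,a_n)$.

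\textbf{Why the lemma alone is not enough.} \Cref{lem:val-translation} moves a satisfying valuation from $tr$ to $\lex(tr)$. That handles existentially quantified timepoints. Universally quantified timepoints, such as those in the premise of a correspondence, range over all positions of $\lex(tr)$, so I also need the backward direction.

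\textbf{The index map.} Define the surjective, monotone map $\pi\colon\idx(\lex(tr))\to\idx(tr)$ that sends position $p$ to the index $k$ of the block $a_k$ containing it. The forward map of \cref{lem:val-translation} is a section of $\pi$ on the positions where atoms occur. By construction, $\fact{F}\in\lex(tr)_p$ if and only if $\fact{F}$ is the $\text{pos}_{\text{lex}}$-th element of $a_{\pi(p)}$.

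\textbf{Main induction.} I would put $\varphi$ in NNF and induct on its structure. The claim is that, for valuations $\theta'$ on $\lex(tr)$ and $\theta=\pi\circ\theta'$ on $tr$ (identity on message variables), $(tr,\theta)\models\psi$ implies $(\lex(tr),\theta')\models\psi$. The cases go as follows.
\begin{itemize}
\item \emph{Trace atoms.} These hold by the membership characterisation of $\pi$, using that each time variable indexes exactly one atom.
\item \emph{Negated atoms.} $\fact{F}@i$ false at $\theta'(i)$ but true at $\pi(\theta'(i))$ can only happen when $\fact{F}$ sits at another lex position of the same block. Such negative guards do not arise in T-UKF and SA-QBF premises after the rewriting, so this case can be discharged from the fragment.
\item \emph{Equalities between terms.} These are unchanged.
\item \emph{Universal quantifiers.} Instantiate the source formula at $\pi(\theta'(i))$.
\item \emph{Existential quantifiers.} Use the forward choice of \cref{lem:val-translation}.
\end{itemize}

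\textbf{Expected obstacle: timepoint comparisons between positions in the same block.} Monotonicity of $\pi$ gives $\pi(p)<\pi(q)\Rightarrow p<q$ and $p<q\Rightarrow\pi(p)\le\pi(q)$. These inequalities do not transfer when two distinct atoms in the formula lie in the same block $a_k$: there $i\doteq j$ holds in $tr$ but $i<j$ or $j<i$ holds in $\lex(tr)$. For example, $\forall i\,j.\ \fact{F}@i\land\fact{G}@j\Rightarrow j\dotleq i$ is true on a trace where $\fact{F},\fact{G}$ are simultaneous, but may fail after lexicographic splitting.

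\textbf{Plan for the obstacle.} I would first try to exclude this using simultaneity-freeness together with the instrumentation of \cref{rem:instrumented}. The rewriting gives every source timepoint that could be shared across action facts its own rule-instance identifier. As a result, comparisons between distinct atoms never need to distinguish positions within one block, and same-block relations are expressed through identifier equality rather than through $\dotless$ or $\dotleq$.

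If that is insufficient, I would strengthen the hypothesis to the form the formulas actually take after rewriting: temporal constraints between distinct atoms appear only as strict $\dotless$ in positive conclusion position or as $\not\doteq$. For exactly these shapes the one-sided monotonicity inequalities suffice. Concretely:
\begin{itemize}
\item a positive $i\dotless j$ needs $\pi(p)<\pi(q)\Rightarrow p<q$, which holds;
\item $i\not\doteq j$ holds trivially on distinct lex positions.
\end{itemize}
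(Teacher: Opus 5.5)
Your route differs from the paper's, which is a one-liner: every trace of $\lex(\mathit{Tr})$ is $\lex(tr)$ for some $tr\in\mathit{Tr}$, and \cref{lem:val-translation} transfers satisfaction. Read literally, that suffices. Since $\varphi$ is closed, any valuation satisfies it on $tr$, the lemma returns one satisfying it on $\lex(tr)$, and for a closed formula that means $\lex(tr)\models_\ET\varphi$. So your claim that the lemma is not enough is really a criticism of the lemma's \emph{proof}, which only repositions witness timepoints and never treats a universally quantified one.

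That criticism is justified, and more sharply than you show: the statement is false as given. Take the simultaneity-free formula $\forall i\,j.\ \fact{F}@i\land\fact{G}@j\Rightarrow\neg(i\dotless j)$ and $tr=(\{\fact{F},\fact{G}\})$, with $\fact{F}$ lexicographically first. The formula holds on $tr$ but fails on $\lex(tr)=(\{\fact{F}\},\{\fact{G}\})$. Your own example is weaker evidence: by \cref{def:satisfaction-rel}, $j\dotleq i$ abbreviates $j\dotless i\lor j\doteq i$, so it contains an equality between distinct atoms and is excluded by the hypothesis. The counterexample has no shared time variable and no equality, so the identifier instrumentation of \cref{rem:instrumented} never touches it, and your first fallback cannot work. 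Restricting the formula shape, as in your second fallback, is therefore necessary. A backward block map for universals plus the forward map for existentials is a sound, more careful replacement for the paper's appeal.

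As written, however, your induction has the atom cases backwards. The claim that $(tr,\pi\circ\theta')\models\psi$ implies $(\lex(tr),\theta')\models\psi$ for \emph{every} $\theta'$ fails for a positive atom. If $a_k=\{\fact{F},\fact{G}\}$ and $\theta'(i)$ is the position of $\fact{G}$, then $\fact{F}@i$ holds in $tr$ but not in $\lex(tr)$. This is exactly the situation you list under negated atoms, and one-atom-per-variable does not exclude it. Negated atoms, conversely, transfer for every $\theta'$, because $\lex(tr)_{\theta'(i)}\subseteq a_{\pi(\theta'(i))}$. They are also not absent from the fragment: NNF turns every SA-QBF premise atom into $\neg\fact{F}@i$ with $i$ universal.

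The fix is to weaken the invariant. Require $\pi\circ\theta'=\theta$, and require that existentially bound time variables are placed by the forward map of \cref{lem:val-translation} at their own atom. Then restrict to formulas satisfying three conditions. First, universally bound time variables index only negatively occurring atoms. Second, distinct atoms are related only by positively occurring $\dotless$ and $\not\doteq$. Third, $\doteq$ and $\dotleq$ relate only occurrences of the same ground fact, where set semantics of blocks forces equal positions. With $\pi$ extended monotonically to indices outside the trace, your quantifier and constraint cases then go through.
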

\begin{proof}
	Every trace in $\lex(\mathit{Tr})$ is $\lex(tr)$ for some $tr \in \mathit{Tr}$.
	The claim follows from \cref{lem:val-translation}.
\end{proof}

\subsection{State Relation}
\label{subsec:staterel}

Each step of the simulation needs to assume the two models are in a relation in order to prove that this relation is retained.
We thus define
a relation between MSR states (multisets) and ProVerif states (process configurations).

\begin{definition}[State relation]
	\label{def:sr}
	Let $\pvconf = \pvconfquad$ be a process configuration and $\mss$ be an MSR state.
	We relate $\mss$ to $\pvconf$, i.e., $\mss \sim \pvconf$, if the following conditions hold:
	\begin{enumerate}[label=(\arabic*)]
		\item $\mss|_{\Storage} \subseteq_{tbl} \Tbl$
		\item $\kn(\mss) \subseteq \ded(\pvconf)$
	\end{enumerate}
\end{definition}
In the above definition, $\mss|_{\Storage} \subseteq_{tbl} \Tbl$ means that all grounded \Storage facts in the MSR state \mss have a corresponding table entry stored in \Tbl.
In particular, $\mss|_{\Storage} \subseteq_{tbl} \Tbl$ if and only if, for each \Storage fact $\fact{F}(t_1, \ldots, t_n) \in \mss$, we have a matching entry $\fact{F}_{tbl}(t_1, \ldots, t_n) \in \Tbl$, where the table name $\fact{F}_{tbl}$ corresponds to the fact symbol $\fact{F}$.
These conditions help us in two ways.
The first condition helps us show
that if a rule instance $\ri \in \ginsts_\ET(\rules)$ is applicable to $\mss$,
then $\tran{\ri}$ is applicable to $\pvconf$.
Moreover, it guarantees that the process configuration after the execution of $\tran{\ri}$
contains table entries simulating the conclusions that the execution of $\ri$
inserted into the corresponding MSR state.
The second condition helps us prove that
if $\K(m)$ holds for a given timepoint, $\Patt{m}$ also holds at the same timepoint.

\subsection{Weak Simulation}
\label{sec:weak-simulation}

Let $\rules$ be a set of MSRs.
We denote the set of weak traces with $\K$ actions removed as $\wtracesmsr_{\neq \K}(\rules)$ and the
set of weak traces with $\msgC$ actions removed as $\wtracespv_{\neq \msgC}(\tran{\rules})$.

\begin{theorem}[Weak Trace Inclusion]
	\label{th:wti}
	Let $\rules$ be a set of MSRs.
	Then
	\begin{align*}
		\lex(\wtracesmsr_{\neq \K}(\rules)) \subseteq \wtracespv_{\neq \msgC}(\tran{\rules})\,.
	\end{align*}
\end{theorem}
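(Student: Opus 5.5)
We argue by induction on the length of MSR executions, using the state relation $\sim$ of \cref{def:sr} as the simulation invariant. The claim to prove is the following: for every execution $\mss[0] \withmsr{a_1}_{\ri_1} \cdots \withmsr{a_n}_{\ri_n} \mss[n]$ of $\rules$ there is a ProVerif execution of $\tran{\rules}$ from $\pvconf[0]$ to some $\pvconf$ with $\mss[n] \sim \pvconf$. Moreover, the weak $\msgC$-free trace of this ProVerif execution equals $\lex$ of the weak $\K$-free MSR trace. The base case is immediate: $\mss[0]=\msempty$ has no storage facts and $\kn(\msempty)=\emptyset$, so $\msempty \sim \pvconf[0]$.

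For the inductive step we case-split on the rule instance $\ri$ applied in $\mss \withmsr{a}_{\ri} \mss'$, assuming $\mss \sim \pvconf$.
\begin{enumerate}[label=(\roman*)]
\item \textbf{Message-deduction rules.} For \textsc{MDOut}, \textsc{MDFresh}, \textsc{MDPub} and \textsc{MDApp} the MSR step has an empty action, or only a $\K$ action in the case of \textsc{MDIn}; these are erased on the MSR side. We simulate them by zero ProVerif steps. We must re-establish condition (2) of $\sim$ in each case.
  \begin{itemize}
  \item For \textsc{MDOut}, the $\Out(x)$ fact was produced by an earlier translated rule. We therefore strengthen the invariant so that every pending $\Out(x)$ in $\mss$ already has $x \in \ded(\pvconf)$. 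This holds because $\tran{\ri}$ performs $\Pout(c,t)$, adding $t$ to $\Att$.
  \item For \textsc{MDPub}, public names are in $\Att$.
  \item For \textsc{MDFresh}, a fresh name can be generated by the attacker via \textsc{New}.
  \item For \textsc{MDApp}, deducibility is closed under constructor application (\textsc{App}).
  \item For \textsc{MDIn}, the produced $\In(x)$ is likewise tracked: we add to the invariant that $x \in \ded(\pvconf)$ for every $\In(x)\in\mss$.
  \end{itemize}
\item \textbf{Fresh rule.} $\Fresh$ is simulated by no step. It is later matched by the $\Pnew$ in the consuming translated rule; the freshness side condition of \cref{def:executions} guarantees that the chosen name can be used as the restricted name, up to renaming.
\item \textbf{Protocol rules.} For a protocol rule instance $\ri=\ru\sigma$ we show that $\tran{\ri}$ is applicable in the sense of \cref{def:tran-applicable} and execute it completely.
  \begin{itemize}
  \item Condition (1) of $\sim$ gives the table entries for the linear and persistent storage premises, so each $\Pget$ succeeds with the matching row.
  \item The strengthened invariant gives deducibility of every $\In$ term. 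The attacker first saturates via \textsc{App}, then outputs on $c$ (a $\msgC$ label, erased), and the input succeeds.
  \item The private destructors $g_{v,t}$ then reduce, because the ground term is an instance of the pattern $t$. We must check that this holds modulo $\ET$ only for equational theories classified as faithful; I would simply invoke that classification.
  \item Public-variable inputs $\Pin(c,\pubv(\cdot))$ succeed since public names are deducible.
  \item Then the events are emitted one per action fact in the fixed lexicographic order. This yields exactly $\lex(a)$ as consecutive singleton labels.
  \item Finally the inserts and outputs re-establish condition (1) for $\mss'$: table monotonicity keeps old entries, and newly inserted entries cover the conclusions. They also re-establish the $\Out$-invariant.
  \end{itemize}
\end{enumerate}
The instrumentation (the $\fact{DistinctFact}$ events and identifiers) is handled by choosing a fresh identifier per linear conclusion. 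Since Tamarin consumes each linear fact at most once, the $\fact{DistinctFact}$ restriction is never violated.

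The main obstacle I expect is this protocol-rule case, specifically aligning the granularity of the two traces. One MSR step with action multiset $a$ becomes several ProVerif steps: inputs, $\msgC$ and $\epsilon$ steps, then $|a|$ events. We must argue that after erasing $\epsilon$ and $\msgC$ labels the contribution is precisely $\lex(a)$, with empty $a$ contributing nothing in the weak trace. A further subtlety is that \textsc{MDIn} emits $\K$ before the consuming rule fires, so intermediate MSR steps may interleave. This is why we simulate $\K$-steps by nothing and keep deducibility as an invariant rather than emitting anything. Concatenating the per-step contributions and noting that $\lex$ distributes over concatenation of traces yields the inclusion.
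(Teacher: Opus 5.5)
Your proposal is correct and takes essentially the same route as the paper. Both argue by induction on the MSR execution with the state relation $\sim$ as invariant, simulating a protocol rule by one uninterleaved run of $\tran{r\sigma}$ whose events yield $\lex(a)$, and the message-deduction and $\Fresh$ rules by attacker-internal steps or no steps. Your explicit invariants on pending $\Out$ and $\In$ facts only make precise what the paper argues via the persistence of $!\K$ and the earlier $\Pout$ of the producing rule, and the $\fact{DistinctFact}$ discussion is unnecessary because the theorem concerns the uninstrumented $\tran{\rules}$.
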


\begin{proof}
	Let $\vec{S} \in \execmsr(\rules)$ be an execution of length $m$ with weak trace $wtr_{msr}$.
	We show that there exists an execution $\vec{\pvconf} \in \execpv(\tran{\rules})$ of length $n$ with weak trace $wtr_{pv}$ such that
	\begin{align*}
		\mss[m] &\sim \pvconf[n] \\ 
		wtr_{pv} &= \lex(wtr_{msr})
	\end{align*}
	The proof is by induction on the length $m$ of the execution $\vec{S}$.

	\textbf{Base case} ($m=0$):
	By the definitions of the initial states $\mss[0]$ and $\pvconf[0]$,
	\begin{align*}
		\emptyset = \mss[0] &\subseteq_{tbl} \Tbl[0] = \emptyset \\	
		\emptyset = \kn(\mss[0]) &\subseteq \ded(\pvconf[0])\,,
	\end{align*}
	and thus $\mss[0] \sim \pvconf[0]$.
	Moreover, $wtr_{msr} = \epsilon = wtr_{pv}$.

	\textbf{Inductive hypothesis} ($m=k$):
	We assume that the statement holds for all executions of length $k$.

	\textbf{Inductive step} ($m=k+1$):
	\noindent Let $\vec{S}$ be an execution of length $k+1$ with $\mss[k] \withmsr{a}_{r \sigma} \mss[k+1]$ for a rule $r$ and a grounding substitution $\sigma$.
	By applying the induction hypothesis, we get an execution $\vec{\pvconf}$ of length $j$ with $\mss[k] \sim \pvconf[j]$ and $wtr'_{pv} = \lex(wtr'_{msr})$.
	We show that there exists a configuration $\pvconf[j+l]$ such that $\mss[k+1] \sim \pvconf[j+l]$ and $wtr_{pv} = \lex(wtr_{msr})$ by a case distinction on the rule $r$.

	\textbf{Case} $r \in \rules$:
	First, we show that $\tran{r \sigma}$ is applicable in $\pvconf[j]$.
	From $\mss[k] \sim \pvconf[j]$, it follows that $\mss[k][\Storage] \subseteq_{tbl} \Tbl[j]$.
	Since $r \sigma$ is applicable in $\mss[k]$, $\prems(r \sigma)_{\mid \Storage} \subseteq \mss[k]$ and hence $\prems(r \sigma)_{\mid \Storage} \subseteq_{tbl} \Tbl[j]$.

	Since $r \sigma$ is applicable in $\mss[k]$, $\prems(r \sigma)|_{\In} \subseteq \mss[k]$.
	An $\In$ fact can only be added to the state through a previous application of the \textsc{MDIn} rule.
	Since the premise of the \textsc{MDIn} rule $!\K$ is a persistent fact, it is still in $\mss[k]$.
	Thus, for all $\In(t) \in \mss[k]$ we have $t \in \kn(\mss[k])$.
	From $\mss[k] \sim \pvconf[j]$ follows $\kn(\mss[k]) \subseteq \ded(\pvconf[j])$.
	If there exists a term $t$ for $\In(t) \in \prems(r \sigma)$ for which the ProVerif attacker does not have explicit knowledge, the term can be deduced in a series of transitions starting from $\pvconf[j]$.
	These transitions have no effect on the weak trace.
	After $p < l$ transitions, we reach a configuration $\pvconf[j+p]$ with $t \in \Att$.
	By \cref{def:tran-rule}, we know that each $\In(t)$ fact is translated to an $\Pin(c, t)$ process for the public channel $c$.
	The ProVerif attacker can thus send all the terms over $c$ where they can be received by the corresponding process.
	Hence, $\tran{r \sigma}$ is applicable in $\pvconf[j]$.

	Now, we show how to get from $\pvconf[j+p]$ to $\pvconf[j+l]$, the configuration after the application of $\tran{r \sigma}$.
	The initial process is always a replication.
	With $\pvconf = \pvconf[j+p]$ and $\pvconf' = \pvconf[j+p+1]$ we have
	\begin{align*}
		\pvconf = \pvconfquad \withpv{}_{\textsc{Repl}} \; \pvconf' = (\E, \Pro \cup \{\tran{r \sigma }\}, \Tbl, \Att)\,,
	\end{align*}
	which leaves the trace unchanged.
	Next, we execute the process $\tran{r \sigma}$ step by step following \cref{def:tran-rule} closely without interleaving steps from other processes.
	From the two requirements of applicability, all required table entries exist and terms are known to the attacker (line 1 in \cref{def:tran-rule}).
	In the next step, fresh names are generated for each fresh fact in the premise of $r$ (line 2).
	We continue with the emission of action facts (line 3) for each action in $a$ in lexicographical order.
	If the action contains public variables, they are received from the ProVerif attacker.
	Similarly, storage facts from the conclusion are added to the table (line 4).
	Finally, the output facts are emulated by the $\Pout$ process (line 5).

	From the above, only the events are observable in the trace.
	Since the events are ordered lexicographically, the weak trace of the process is a lexicographical ordering of the weak trace of the MSR: $wtr_{pv} = \lex(wtr_{msr})$.

	We assume that after the execution of $\tran{r \sigma}$, we are in the configuration $\pvconf[j+l]$.
	It remains to show that $\mss[k+1] \sim \pvconf[j+l]$.
	From \cref{def:tran-rule} we know that for each storage fact in the conclusion of $r \sigma$, there is a corresponding $\Pinsert$ in $\tran{r \sigma}$, i.e., $\concls(r \sigma)_{\mid \Storage} =_{tbl} \inserts(r \sigma)$.
	Thus, at the end of the execution of $\tran{r \sigma}$, we have $\Tbl[j+l] = \Tbl[j] \cup \inserts(r \sigma)$.

	From $\mss[k] \sim \pvconf[j]$ we know that $\mss[k][\Storage] \subseteq_{tbl} \Tbl[j]$.
	The transition from $\mss[k]$ to $\mss[k+1]$ adds the storage facts in $\concls(r \sigma)$ to the state.
	The table entries that correspond to these facts are exactly $\inserts(r \sigma)$.
	From $\Tbl[j+l] = \Tbl[j] \cup \inserts(r \sigma)$ follows that $\pvconf[j+l]$ contains a corresponding table entry for each storage fact in $\mss[k+1]$, i.e., $\mss[k + 1][\Storage] \subseteq_{tbl} \Tbl[j+l]$.

	Since no new $!\K$ facts are added to the state, we have
	\begin{equation*}
		\kn(\mss[k+1]) = \kn(\mss[k]).
	\end{equation*}
	Hence, $\mss[k + 1] \sim \pvconf[j+l]$.

	\textbf{Case} $r = \textsc{MDIn}$:
	\textsc{MDIn} is the only rule that adds $\K$ actions to the trace.
	Since we filter out the $\K$ actions when comparing traces, they are not observable.
	In ProVerif, we simulate this rule by staying in $\pvconf[j]$.
	Hence, in both cases the observable behavior is the same with respect to our weak-trace comparison.
	The application of \textsc{MDIn} does not add storage or $!\K$ facts to $\mss[k+1]$.
	Therefore, $\mss[k+1][\Storage] = \mss[k][\Storage]$ and $\kn(\mss[k+1]) = \kn(\mss[k])$.
	Consequently, $\mss[k+1] \sim \pvconf[j]$.

	\textbf{Case} $r = \textsc{MDOut}$:
	The application of \textsc{MDOut} adds a $!\K$ fact to $\mss[k+1]$.
	Hence, $\kn(\mss[k+1]) = \kn(\mss[k]) \cup \{m\}$.
	In ProVerif, we simulate this rule by staying in $\pvconf[j]$.
	Since $\textsc{MDOut}$ is applicable in $\mss[k]$, we know that $\Out(m) \in \mss[k]$.
	Hence, there exist a rule $r^* \in \rules$ and a grounding substitution $\sigma^*$ such that $r^* \sigma^*$ added $\Out(m)$ to the state at a previous timepoint.
	Therefore, $\tran{r^* \sigma^*}$ has also been executed earlier including the corresponding $\Pout$ process.
	By the semantics of ProVerif's $\textsc{Out}$ transition, we know that $m$ was added to the knowledge of the attacker and thus $m \in \ded(\pvconf[j])$.
	The application of \textsc{MDOut} does not add storage facts to $\mss[k+1]$.
	Therefore, $\mss[k+1][\Storage] = \mss[k][\Storage]$.
	Consequently, $\mss[k+1] \sim \pvconf[j]$ holds.

	\textbf{Case} $r = \textsc{MDFresh}$:
	\textsc{MDFresh} allows the attacker to learn a previously generated fresh name through the rule $\Fresh$.
	Its application adds a $!\K$ fact to $\mss[k+1]$.
	Hence, $\kn(\mss[k+1]) = \kn(\mss[k]) \cup \{m\}$.
	In ProVerif, we simulate this behavior with a single $\textsc{New}$ transition adding the name to the knowledge of the attacker.
	There is no observable behavior in either case.
	The application of \textsc{MDFresh} does not add storage facts to $\mss[k+1]$.
	Therefore, $\mss[k+1][\Storage] = \mss[k][\Storage]$.
	Consequently, $\mss[k+1] \sim \pvconf[j+1]$ holds.

	\textbf{Case} $r = \textsc{MDApp}$:
	\textsc{MDApp} allows the attacker to construct a message from its knowledge.
	Its application adds a $!\K$ fact to $\mss[k+1]$.
	Hence, $\kn(\mss[k+1]) = \kn(\mss[k]) \cup \{f(m_1, \ldots, m_p)\}$ for a function symbol $f$ and $!\K(m_i) \in \mss[k]$.
	As $\mss[k] \sim \pvconf[j]$, we know that $m_i \in \ded(\pvconf[j])$.
	Thus, the attacker can deduce them in a finite number $q$ of steps and add them to its knowledge.
	In ProVerif, we simulate this behavior with a $\textsc{App}$ transition adding the term $f(m_1, \ldots, m_p)$ to its knowledge.
	There is no observable behavior in either case.
	The application of \textsc{MDApp} does not add storage facts to $\mss[k+1]$.
	Therefore, $\mss[k+1][\Storage] = \mss[k][\Storage]$.
	Consequently, $\mss[k+1] \sim \pvconf[j+q]$ holds.

	\textbf{Case} $r = \textsc{Fresh}$:
	\textsc{Fresh} allows for generating fresh names by adding a $\Fr(n)$ fact to the state.
	At this time, we do not know how the generated name will be used.
	It can either be used by another protocol rule that requires a fresh fact in its premise or by \textsc{MDFresh} to allow the attacker to generate a fresh name.
	In ProVerif, the former case is simulated by a $\textsc{Restr}$ transition and the latter by a $\textsc{New}$ transition.
	Hence, we stay in $\pvconf[j]$.
	There is no observable behavior in either case.
	The application of \textsc{Fresh} does not add storage or $!\K$ facts to $\mss[k+1]$.
	Therefore, $\mss[k+1][\Storage] = \mss[k][\Storage]$.
	Consequently, $\mss[k+1] \sim \pvconf[j]$ holds.

	\textbf{Case} $r = \textsc{MDPub}$:
	\textsc{MDPub} allows the attacker to learn public names.
	Its application adds a $!\K$ fact to $\mss[k+1]$.
	Hence, $\kn(\mss[k+1]) = \kn(\mss[k]) \cup \{m\}$.
	In ProVerif, the attacker knows all public names from the start and thus $m \in \ded(\pvconf[j])$.
	Hence, we stay in $\pvconf[j]$.
	There is no observable behavior in either case.
	The application of \textsc{MDPub} does not add storage facts to $\mss[k+1]$.
	Therefore, $\mss[k+1][\Storage] = \mss[k][\Storage]$.
	Consequently, $\mss[k+1] \sim \pvconf[j]$ holds.

	In the last two cases (\textsc{Fresh} and \textsc{MDPub}), the execution $\vec{S}$ can stutter since both rules have empty premises and can thus be applied infinitely often.
	Since the rules have no observable actions, the weak trace remains unchanged.
	The same holds for the ProVerif execution, as we stay in the same configuration for both rules.
	Hence, we can ignore these two cases of possible stuttering of the MSR LTS as weak simulation only requires the observable actions to be matched.
\end{proof}

We proved weak simulation between Tamarin and ProVerif without considering $\K$ and $\msgC$ facts, respectively.
Recall that this is not problematic as $\msgC$ facts are not part of SA-QBF and $\K$ facts are translated to $\PattC$ facts, which do not appear in the trace.

\begin{remark}
	\label{rem:no-reverse-inclusion}
	The reverse inclusion does not hold in general.
	A ProVerif execution may stop between the events of a single translated rule, producing a strict prefix of the lexicographic image of a Tamarin action multiset, and, without the instrumentation of \cref{sec:bridging-gaps}, table entries may be consumed more often than the corresponding linear facts.
	The encodings of \cref{sec:bridging-gaps} mitigate the latter in the implementation but lie outside the formal statement.
\end{remark}

\begin{remark}[Instrumented translation and formula rewriting]
	\label{rem:instrumented}
	The implementation additionally instruments events with rule-instance identifiers (\cref{sec:bridging-gaps}): each translated rule instance draws a fresh identifier and passes it as an additional argument to all of its events.
	The construction above extends verbatim, as the identifier is one further fresh name and the extra event argument does not affect the simulation.
	To exclude artificial mid-action prefixes for a cross-fact same-action conclusion, the translation may guard it with an internal $\fact{RuleCompleted}$ event that is emitted after the serialized source action and removed by projection to source events.
	Over such instrumented traces, the formula rewriting of \cref{sec:bridging-gaps} preserves satisfaction:
	\begin{enumerate*}[label=(\arabic*)]
		\item one-point elimination of definite timepoint equalities is a logical equivalence;
		\item splitting a shared time variable into per-event time variables tied by a common identifier variable is exact, since two action facts occur at the same Tamarin timepoint if and only if they belong to the same rule instance; and
		\item translating a surviving timepoint equality (resp.\ disequality) between distinct events as an equality (resp.\ disequality) of the corresponding identifier variables is exact for the same reason.
	\end{enumerate*}
	The rewritten formulas contain exactly one trace atom per time variable and no timepoint equalities between distinct atoms, as required by \cref{lem:val-translation}.
\end{remark}

With this construction, we can now show that satisfaction of closed formulas is preserved from MSR traces to ProVerif traces.

\begin{corollary}[MSR to ProVerif Satisfaction]
	\label{cor:msr-to-pv-satisfaction}
	Let $\rules$ be a set of MSRs and $t \in \tracesmsr(\rules)$ a trace.
	Let $\vec{\pvconf}$ be the ProVerif execution obtained after the construction in \cref{th:wti}.
	If there exists a valuation $\theta$ such that $(t, \theta) \models_{\ET} \varphi$ for a closed, simultaneity-free formula $\varphi$ in negation normal form whose $\K$ atoms occur only positively---this covers $\K$-free formulas in TSA-QBF as well as negation normal forms of $\neg\psi$ for $\psi \in \text{T-UKF}$---%
	then there exists a valuation $\theta'$ such that $(\vec{\pvconf}, \theta') \models_{\ET} \tran{ \varphi }$.
\end{corollary}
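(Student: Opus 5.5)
We build the witness valuation by composing the trace-level results already established and then checking satisfaction by induction on the structure of $\varphi$. Let $t \in \tracesmsr(\rules)$ with $(t,\theta)\models_\ET\varphi$, and let $\vec{\pvconf}$ be the execution constructed in \cref{th:wti}.

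\emph{Reduction to the weak trace.} First, we remove the $\K$ actions and the empty action multisets from $t$. By \cref{lem:msrktr} and \cref{lem:msrwtrtr}, satisfaction is invariant under these operations for formulas without $\K$ atoms. For formulas that do contain (positive) $\K$ atoms, we do not silence them. Instead, we record for each such atom $\K(m)@i$ the MSR state $\mss$ in which the \textsc{MDIn} step occurs: since $!\K(m)\in\mss$, we have $m \in \kn(\mss)$. We then pass to $\lex$ of the weak trace and use \cref{lem:val-translation}, which applies because $\varphi$ is simultaneity-free. This gives a valuation on $\lex(wtr_{msr})$. \cref{th:wti} identifies that sequence with the weak trace of $\vec{\pvconf}$ without $\msgC$ events. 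Finally, \cref{lem:pvmesstr} and \cref{lem:pvwtrtr} lift the valuation back to the strong ProVerif trace, re-indexing each time variable to the absolute position of its event in $\vec{\pvconf}$.

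\emph{Defining $\theta'$.} Message variables keep their values: $\theta'(x)=\theta(x)$. For a time variable indexing an event atom, $\theta'$ is the position obtained by the composition above. For a time variable indexing a $\K(m)@i$ atom, $\theta'(i)$ is the index of the ProVerif configuration that the simulation of \cref{th:wti} associates with the MSR state at $\theta(i)$. In the \textsc{MDIn} case that configuration is unchanged, so $\mss \sim \pvconf$ holds there. Condition (2) of \cref{def:sr} then yields $m\theta \in \ded(\pvconf[\theta'(i)])$, i.e.\ $\Patt{m}@i$ holds.

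\emph{Induction on NNF.} We show $(t,\theta)\models\psi \Rightarrow (\vec{\pvconf},\theta')\models\tran{\psi}$ for every subformula $\psi$, proceeding by structural induction.
\begin{itemize}
\item Positive event atoms are handled by the trace correspondence.
\item Term equalities and disequalities are unaffected, since $\theta'$ agrees with $\theta$ on message variables.
\item Orderings $i\dotless j$ are preserved by the monotonicity argument in \cref{lem:val-translation}. Monotonicity also covers the $\K$-indexed positions, because the simulation maps MSR steps to ProVerif configurations monotonically. Disequalities are preserved for the same reason.
\item Conjunction and disjunction follow directly from the induction hypothesis.
\item For quantifiers, we extend $\theta'$ pointwise with the same construction applied to the witnesses chosen under $\theta$; for $\exists$ this uses the MSR witness.
\end{itemize}

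\emph{Main obstacle.} The difficult cases are negated atoms and universal quantification over timepoints. A negated event atom $\neg\fact{F}@i$ must not become true merely by moving indices: the event set of the ProVerif trace equals $\lex$ of the MSR events, so every ProVerif position carrying $\fact{F}$ has an MSR preimage, and we argue via that inverse map. For $\forall$ over timepoints, we likewise range over ProVerif positions and pull each back to an MSR index, which requires the position map to be surjective onto event positions; surjectivity holds by \cref{th:wti}. This is also exactly where negative $\K$ atoms would fail. $\Patt{}$ is strictly weaker than $\K$, so a ProVerif position satisfying $\Patt{m}$ need not have any $\K(m)$ preimage. The positivity hypothesis excludes this case, which is therefore not needed.
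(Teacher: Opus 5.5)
Your route is the paper's: take the execution built in \cref{th:wti}, place event atoms at their positions in the lexicographic serialization (\cref{lem:val-translation}), and send each $\K(m)@i$ to $\Patt{m}@i$ through condition~(2) of \cref{def:sr} at the \textsc{MDIn} step. Your explicit NNF induction with an inverse position map is more careful than the paper's sketch. However, your case analysis treats negated event atoms and negated $\K$ atoms but not negated timepoint orderings $\neg(j \dotless i)$, and that case fails. The position map is strictly monotone only across different MSR steps; within one step it follows the lexicographic order of the action facts. So two distinct atoms that share an MSR timepoint, where $\neg(j \dotless i)$ holds, can land in ProVerif with $j \dotless i$. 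Your $\forall$ case runs straight into this. A ProVerif position $p$ carrying $\fact{A}$ pulls back to $k = \theta(i)$, and the MSR side gives only $\neg(k \dotless \theta(i))$. Yet $p < \theta'(i)$, so the needed $\neg(p \dotless \theta'(i))$ fails.

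Concretely, take a rule with premise $\Fr(x)$, actions $\fact{A}(x)$ and $\fact{B}(x)$, and no conclusions, with $\fact{A}(x)$ first in the fixed lexicographic order. Take the agreement-shaped property $\psi = \forall x\, i.\ \fact{B}(x)@i \Rightarrow \exists j.\ \fact{A}(x)@j \land j \dotless i$. Its negated NNF $\varphi = \exists x\, i.\ \fact{B}(x)@i \land \forall j.\ (\neg\fact{A}(x)@j \lor \neg(j \dotless i))$ is closed, $\K$-free and simultaneity-free, so $\tran{\varphi} = \varphi$. It holds on the MSR trace with one instance of the rule. On every execution of $\tran{\rules}$, however, $\fact{A}(n)$ is emitted strictly before $\fact{B}(n)$, so no $\theta'$ exists; indeed $\psi$ is valid on $\tran{\rules}$ while Tamarin refutes it. So this case cannot be closed for the translation as defined. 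The paper's proof has the same blind spot, since \cref{lem:val-translation} only checks positive $\dotless$, $\not\doteq$ and same-atom $\doteq$.

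A repair needs an extra ingredient. One option is to translate $j \dotless i$ between distinct event atoms as $j \dotless i$ conjoined with a disequality of their rule-instance identifiers. Its negation, $\neg(j \dotless i) \lor r_j = r_i$, is then true for same-instance pairs and exact across instances. The other option is to exclude negated orderings between atoms that can stem from one rule instance.

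A smaller point of the same kind concerns your placement of $\K$ atoms. Putting a $\K$ atom at the configuration associated with the \textsc{MDIn} state can coincide with the position of the preceding rule's last event, when that rule has no storage or $\Out$ conclusions. That collapses a strict ordering from the event to the $\K$ atom. The paper takes the next position instead, which works because $\ded(\cdot)$ only grows along an execution.
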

\begin{proof}
	Let $t'$ be the trace obtained from projecting $\vec{\pvconf}$.
	We note that $t$ and $t'$ differ only in the presence of $\K$ facts in $t$, the $\msgC$ labels in $t'$ (which do not affect SA-QBF satisfaction by \cref{lem:pvmesstr}), the number of empty steps between action facts, and the lexicographic serialization of simultaneous action facts (handled by \cref{lem:val-translation}).
	By \cref{lem:msrwtrtr,lem:pvwtrtr}, the empty steps do not influence the satisfaction of the formula and can be ignored.
  Hence, the satisfaction only depends on finding appropriate valuations for the time variables.

	Let $\wtr(t)$ and $\wtr(t')$ be the weak traces obtained from $t$ and $t'$, respectively.
	If $t$ does not contain any $\K$ facts, then $\wtr(t') = \lex(\wtr(t))$, and thus we get a satisfying valuation $\theta^*$ by \cref{lem:val-translation} with $(\wtr(t'), \theta^*) \models_{\ET} \varphi$.
	With \cref{lem:pvwtrtr}, we get a valuation $\theta'$ such that $(t', \theta') \models_{\ET} \tran{\varphi}$.

	Let us now assume $t$ and $\varphi$ contain $\K$ facts.
	We first show that if
	\begin{equation*}
		(t, \theta) \models_\ET \K(m)@i,
	\end{equation*}
	then there exists a valuation $\theta'$ such that
	\begin{equation*}
		(\vec{\pvconf}, \theta') \models_\ET \Patt{m}@i
	\end{equation*}

	The only rule that produces $\K(m)$ is \textsc{MDIn}.
	Moreover, we know that $(\K(m)@i)\theta$ is in $t$ at index $\theta(i)$.
	Let $\vec{S}$ be the MSR execution that produces $t$.
	Then there is a corresponding transition from the MSR state $\mss[\theta(i)-1]$ to $\mss[\theta(i)]$ with the rule \textsc{MDIn}.
	By construction, $\vec{\pvconf}$ contains a configuration $\pvconf[j-1]$ such that $\mss[\theta(i)-1] \sim \pvconf[j-1]$.
	Therefore, $\kn(\mss[\theta(i)-1]) \subseteq \ded(\pvconf[j-1])$.
	Since \textsc{MDIn} is applicable in $\mss[\theta(i)-1]$, it includes $!\K(m)$, i.e., $m \in \kn(\mss[\theta(i)-1])$.
	Consequently, $m \in \ded(\pvconf[j-1])$.
	Thus, $\Patt{m}$ holds at the next timepoint with $\theta'(i) = j$.

	To reconstruct the valuation for the other time variables, we proceed similarly.
	If $(t, \theta) \models_\ET \fact{F}@i$, then a rule $r$ is applicable in $\mss[\theta(i)-1]$ with $\fact{F} \in \acts(r)$.
	By construction, $\vec{\pvconf}$ contains a configuration $\pvconf[j-1]$ such that $\mss[\theta(i)-1] \sim \pvconf[j-1]$ and such that $\tran{r}$ is applicable.
	The next $\card{ \acts(r) }$ transitions produce the corresponding action facts in lexicographical order in $t'$.
	Let $n$ be the position of $\fact{F}$ in $\lex(\acts(r))$.
	Then we set $\theta'(i) = j + n - 1$.
	This construction preserves the relative order of the timepoints.
	Hence, we get a valuation $\theta'$ such that $(\vec{\pvconf}, \theta') \models_{\ET} \tran{\varphi}$.
\end{proof}

Next, we combine our results to show soundness and completeness of our translation.

\completeness*
\begin{proof}
	Fix $tr \in \tracesmsr(\rules)$ and a valuation $\theta$ with $(tr, \theta) \sattam \varphi$.
	Since $\varphi$ contains no $\K$ facts and is simultaneity-free, we have $\tran{\varphi} = \varphi$.
	By \cref{cor:msr-to-pv-satisfaction}, there are a ProVerif execution $\vec{\pvconf} \in \execpv(\tran{\rules})$, obtained via the construction of \cref{th:wti}, and a valuation $\theta'$ with $(\vec{\pvconf}, \theta') \satpv \varphi$.
	Hence $\tracespv(\tran{\rules}) \prescript{\exists}{}{\satpv} \varphi$.
\end{proof}

Due to semantic differences between attacker knowledge in Tamarin and ProVerif,
soundness is guaranteed for T-UKF components (\cref{def:tukf}), whose counterexample formulas contain $\K$ atoms only positively.
The faithful translation fragment is their satisfaction-preserving conjunctive rewrite closure (\cref{def:faithful-fragment}); the model and equational-theory conditions are enforced by the theorem hypotheses.
We prove the result componentwise and lift it through conjunction.
ProVerif's Horn clause resolution is sound but incomplete, so properties proven valid by the verification procedure are semantically valid, though the converse does not hold.

\soundness*
\begin{proof} 
	By \cref{def:faithful-fragment}, $\varphi$ rewrites to a finite conjunction of T-UKF components.
	Since the rewritings preserve satisfaction and $\tran{\varphi}$ is the conjunction of the translated components, it suffices to establish the claim for an arbitrary component, again denoted by $\varphi$.
	We prove the contraposition of the above statement:
	\begin{align*}
		\tracesmsr(\rules) \prescript{\exists}{}{\sattam} \neg\varphi \Rightarrow \tracespv(\tran{\rules}) \prescript{\exists}{}{\satpv} \tran{\neg \varphi}
	\end{align*}
	Hence, we fix $tr \in \tracesmsr(\rules)$ and a valuation $\theta$ such that
	\begin{align*}
		(tr, \theta) \sattam \neg \varphi\,.
	\end{align*}
	Let $\vec{\pvconf} \in \execpv(\tran{\rules})$ be the corresponding ProVerif execution constructed following~\cref{th:wti}.
	Let $tr'$ be the projection of the labels of $\vec{\pvconf}$.

	Since $\varphi \in \text{T-UKF}$, the negation normal form of $\neg \varphi$ contains $\K$ atoms only positively, and it is simultaneity-free because $\varphi$ is in the faithful translation fragment (\cref{def:faithful-fragment}); hence \cref{cor:msr-to-pv-satisfaction} applies, and we obtain the existence of a valuation $\theta'$ such that $(tr', \theta') \satpv \tran{\neg \varphi}$.
	As $\tran{\cdot}$ replaces $\K$ atoms by $\PattC$ atoms, $\tran{\neg\varphi} = \neg\tran{\varphi}$, which concludes the proof.

	%
	%
	%
	%
	%
	%
	%
	%
	%
	%
	%
	%
	%
	%
	%
	%
	%
	%
	%
	%
	%
	%
	%
	%
	%

	%
	%
	%
	%
	%
	%
	%
	%
  %
	%
	%
	%
	%
	%
	%
	%
	%
	%
	%
	%
	%
	%
	%
	%
	%
	%
	%
	%
	%
	%
	%
	%
	%

\end{proof}

\section{Semantic Difference of Attacker Knowledge}

\begin{example}
	\label{ex:k-vs-attacker}
	Consider the MSR
	\begin{equation*}
	  A \colon \msr{\In(\tlst{a, b})}{\fact{A}(a)}{}
	\end{equation*}
	and the translated process $!A$ with
	\begin{align*}
		\Plet \; A {}={} &\Pin(c, x);                    \\
		                 & \Plet\,\,a\,\,=\,\,g_{a,\langle a,b\rangle}(x)\ \Pin \\
		                 & \Plet\,\,b\,\,=\,\,g_{b,\langle a,b\rangle}(x)\ \Pin \\
		                 & \Pevent\,\,\fact{A}(a).
	\end{align*}
	The trace formula
	\begin{equation*}
		\forall\ i, \; m. \; \fact{A}(m)@i \implies \exists \; j.\;\K(m)@j
	\end{equation*}
	does not hold in Tamarin.
	The counterexample is the trace
	\begin{equation*}
		(\K(\tlst{a,b}); \fact{A}(a)).
	\end{equation*}
	But ProVerif can prove the formula (with $\PattC$ for $\K$), as for $\fact{A}(a)$ to appear, $\tlst{a,b}$ must be deducible, and thus $a$ must be deducible, too.
\end{example}

\end{full}

\end{document}